\tikzstyle{basic}=[fill=white, draw=black, shape=circle]
\tikzstyle{big dashed}=[fill=white, draw=black, shape=circle, minimum width=1cm, dashed]
\tikzstyle{vertical ellipse dashed}=[fill=white, draw=black, minimum width=0.75cm, minimum height=3cm, ellipse, dashed, tikzit shape=rectangle, tikzit draw=magenta]
\tikzstyle{red}=[fill=red, draw=black, shape=circle]
\tikzstyle{green}=[fill={rgb,255: red,0; green,128; blue,128}, draw=black, shape=circle]
\tikzstyle{blue}=[fill=blue, draw=black, shape=circle]
\tikzstyle{huge dashed}=[fill=white, draw=black, shape=circle, dashed, minimum width=2cm]
\tikzstyle{medium}=[fill=white, draw=black, shape=circle, minimum width=1cm]
\tikzstyle{pale green}=[fill={rgb,255: red,173; green,231; blue,0}, draw=black, shape=circle, minimum width=1cm]
\tikzstyle{horizontal ellipse dashed}=[fill=white, draw=black, tikzit draw=magenta, tikzit shape=rectangle, minimum width=3cm, minimum height=0.75cm, ellipse, dashed]
\tikzstyle{minsize}=[fill=white, draw=black, shape=circle, minimum width=0.75cm]
\tikzstyle{horizontal ellipse green}=[fill={rgb,255: red,191; green,255; blue,0}, draw=black, tikzit draw={rgb,255: red,191; green,255; blue,0}, tikzit shape=rectangle, minimum width=3cm, minimum height=0.75cm, ellipse, dashed]
\tikzstyle{horizontal ellipse blue}=[fill={rgb,255: red,107; green,203; blue,255}, draw=black, tikzit draw=blue, tikzit shape=rectangle, minimum width=3cm, minimum height=0.75cm, ellipse, dashed]
\tikzstyle{directed}=[->, line width=1pt]
\tikzstyle{undirected}=[-, line width=1pt]
\tikzstyle{directed red}=[draw=red, ->, line width=1pt]
\tikzstyle{directed green}=[draw={rgb,255: red,0; green,128; blue,128}, ->, line width=1pt]
\tikzstyle{directed blue}=[draw=blue, ->, line width=1pt]
\tikzstyle{directed purple}=[draw={rgb,255: red,128; green,0; blue,128}, ->, line width=1pt]
\tikzstyle{undirected red}=[-, draw=red, line width=1pt]
\tikzstyle{undirected green}=[-, draw={rgb,255: red,0; green,128; blue,128}, line width=1pt]
\tikzstyle{undirected blue}=[-, draw=blue, line width=1pt]
\tikzstyle{undirected purple}=[-, draw={rgb,255: red,128; green,0; blue,128}, line width=1pt]
\tikzstyle{undirected dashed}=[-, line width=1pt, dashed]
\tikzstyle{orange dashed}=[-, draw={rgb,255: red,255; green,128; blue,0}, dashed, line width=1.5pt]
\newcommand{\simplify}[1]{\sigma \circ \left(#1\right)}
\newcommand{\In}{\mathrm{in}}
\newcommand{\Out}{\mathrm{out}}
\newcommand{\ppr}{\mathrm{pr}}
\newcommand{\pr}{\mathrm{pr}}
\newcommand{\apr}{\mathrm{apr}}
\newcommand{\bipart}[3][]{\beta_{#1}(#2,#3)}
\newcommand{\cond}[2][]{\Phi_{#1}(#2)}
\newcommand{\union}{\cup}
\newcommand{\intersect}{\cap}
\newcommand{\sweepset}[2]{S_{#2}^{#1}}
\newcommand{\pjsweep}{\sweepset{p}{j}}
\newcommand{\bigo}[1]{O\!\left(#1\right)}
\newcommand{\p}{\mathbb{P}}
\newcommand{\abs}[1]{\left\lvert#1\right\rvert}
\newcommand{\cardinality}[1]{\abs{#1}}
\newcommand{\lscurve}[2][p]{#1\!\left[#2\right]}
\newtheorem{theorem}{Theorem}
\newtheorem{lemma}{Lemma}
\newtheorem{definition}{Definition}
\newtheorem{proposition}{Proposition}
\newcommand{\transpose}{\intercal}                    
\newcommand{\vol}{\mathrm{vol}}                 
\newcommand{\sigmap}{(\sigma\circ p)}
\newcommand{\R}{\mathbb{R}}
\definecolor{indiagreen}{rgb}{0.07, 0.53, 0.03}
\title{Local Algorithms for Finding Densely Connected Clusters}
\author{%
Peter Macgregor\thanks{University of Edinburgh, \texttt{peter.macgregor@ed.ac.uk}}
\and
He Sun\thanks{University of Edinburgh, \texttt{h.sun@ed.ac.uk}}}
\date{}
\begin{document}

\maketitle

\begin{abstract}
 Local graph clustering is an important algorithmic technique for analysing massive graphs, and has been widely applied in many research fields of data science.
While the objective of most (local) graph clustering algorithms is to find a vertex set of low conductance, there has been a sequence of recent studies that highlight the importance of the inter-connection between clusters
when analysing real-world datasets. Following this line of research, in this work we study local algorithms for finding a pair of vertex sets defined with respect to their inter-connection and their relationship with  the rest of the graph. The key to our analysis is a  new reduction technique that relates the structure of multiple sets to a \emph{single} vertex set in the reduced graph. Among many potential applications, we   show  that our algorithms  successfully recover  densely connected clusters in the Interstate Disputes Dataset and the US Migration Dataset.
\end{abstract}
 
\begin{figure*}[t]
\centering
    \subfigure[1816-1900]  {\includegraphics[width=0.24\columnwidth]{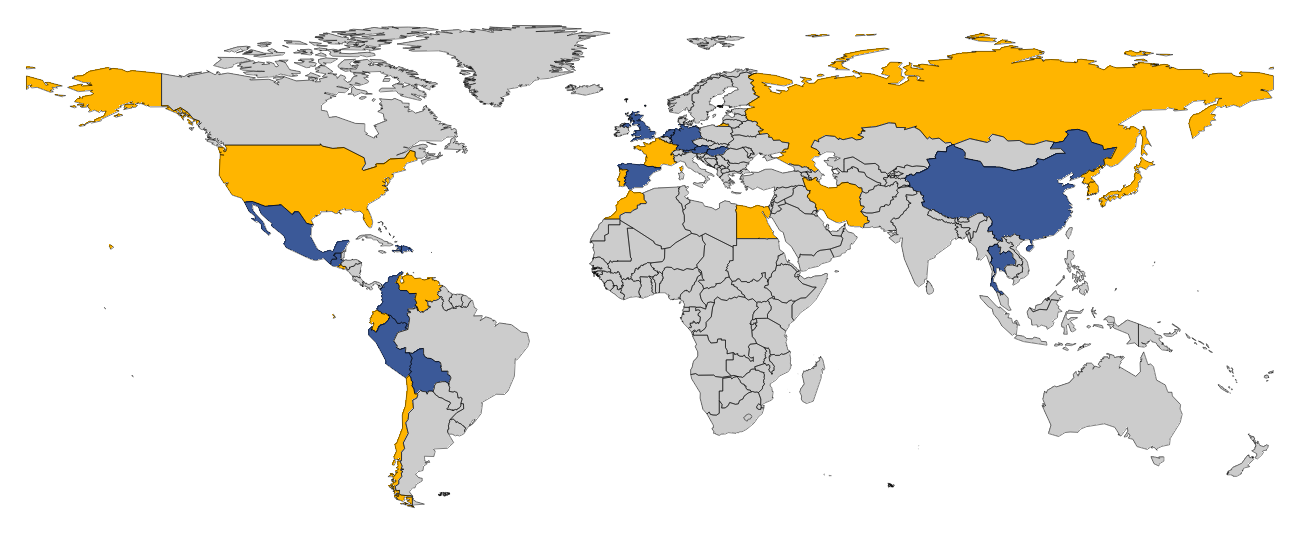}}
    \subfigure[1900-1950] {\includegraphics[width=0.24\columnwidth]{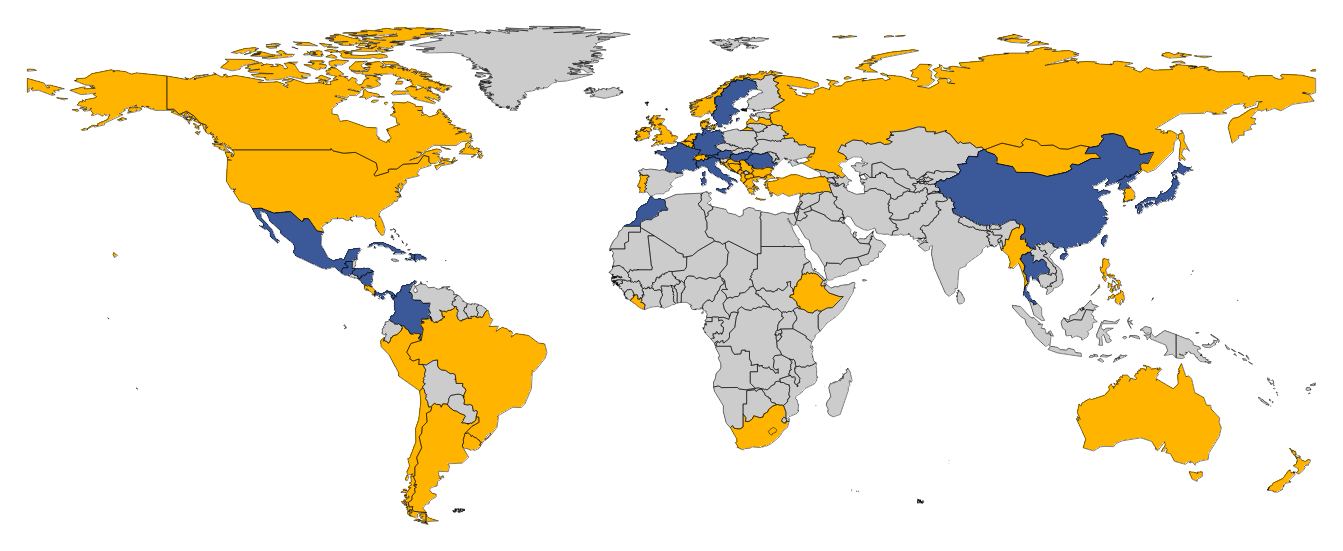}}
    \subfigure[1950-1990] {\includegraphics[width=0.24\columnwidth]{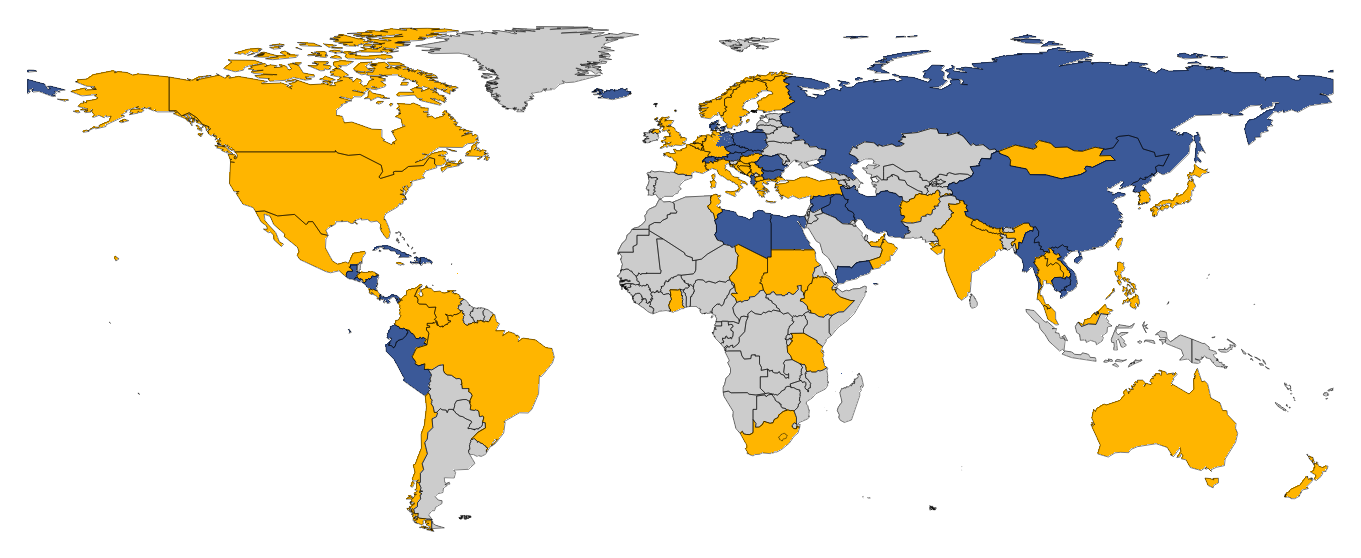}}
    \subfigure[1990-2010] {\includegraphics[width=0.24\columnwidth]{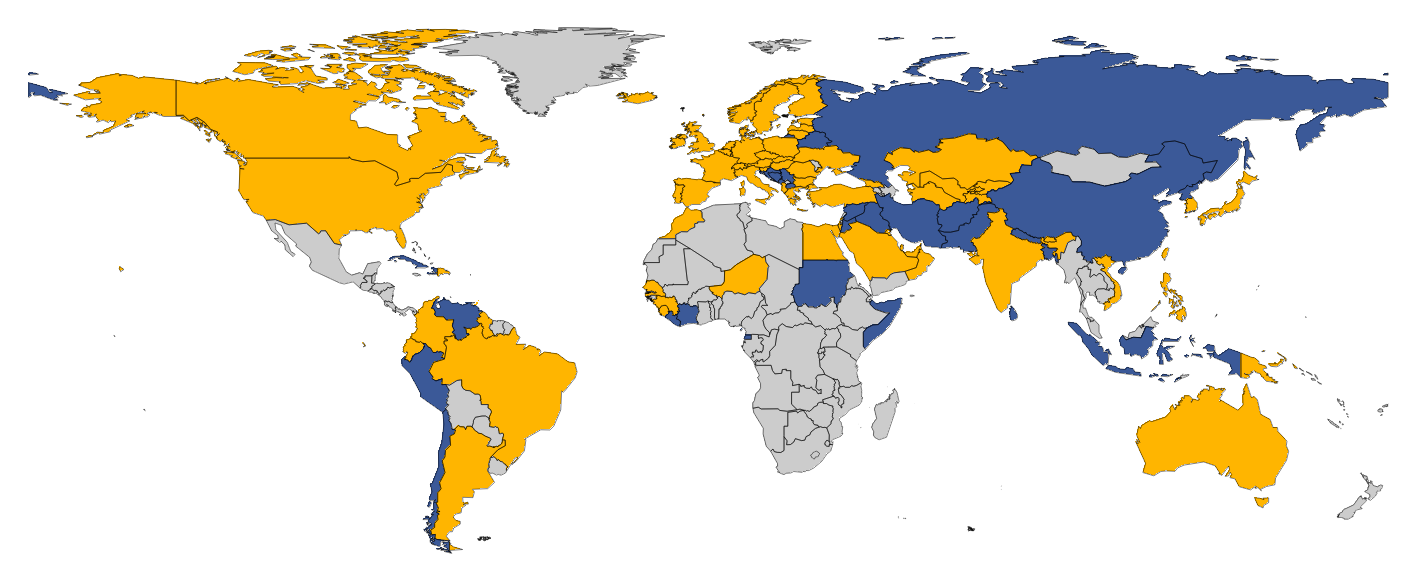}}
    \subfigure[Ohio Seed] {\includegraphics[width=0.24\columnwidth]{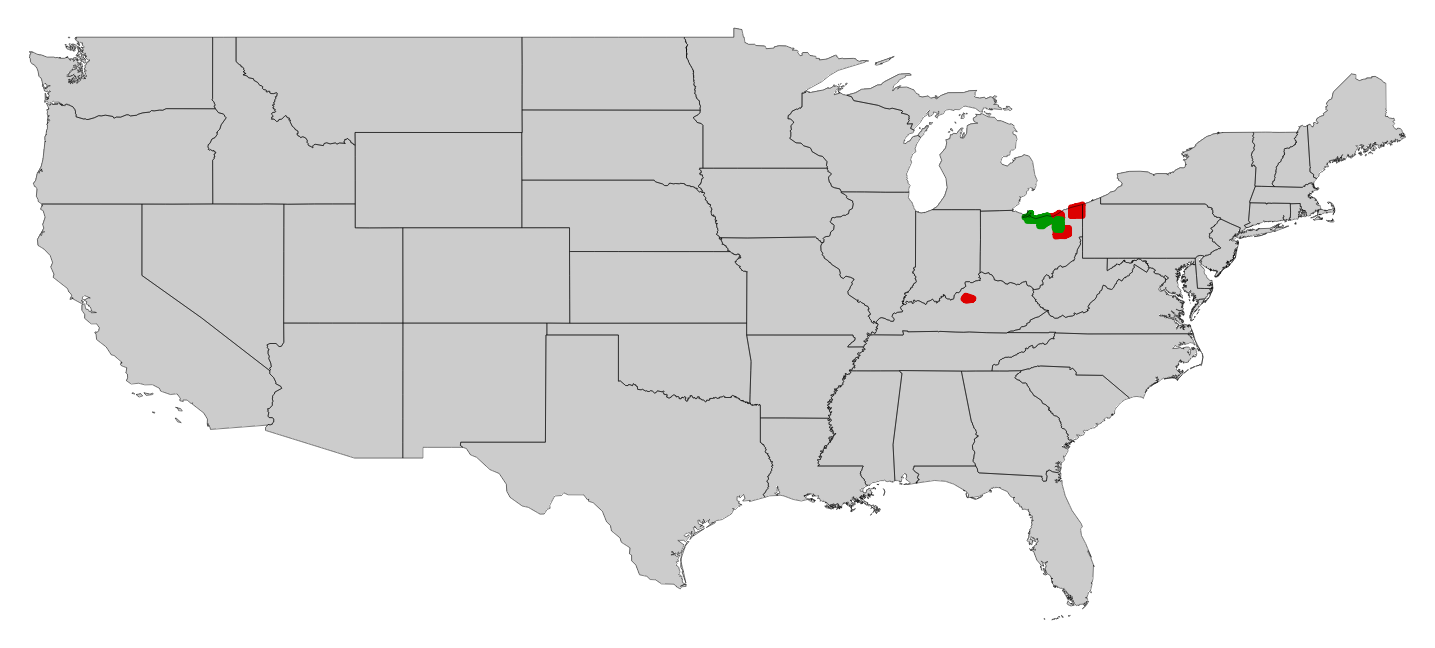}}
    \subfigure[New York Seed] {\includegraphics[width=0.24\columnwidth]{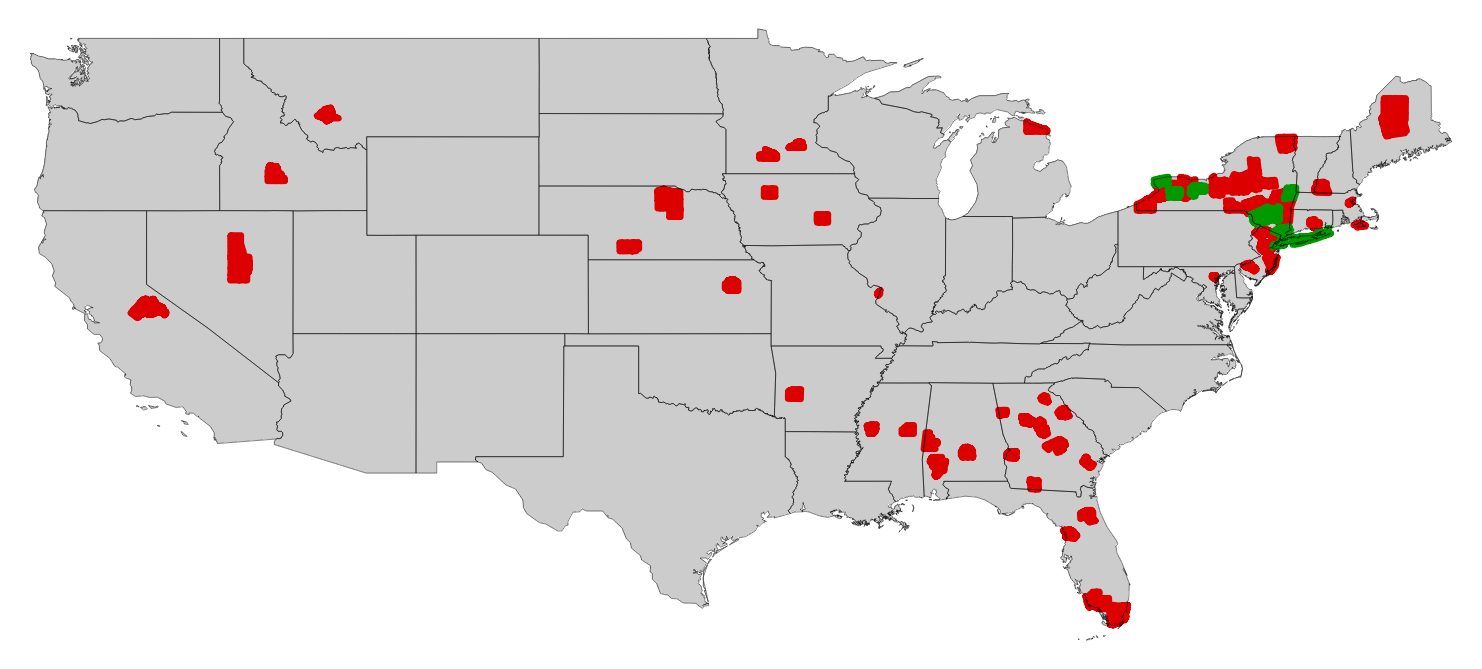}}
    \subfigure[California Seed] {\includegraphics[width=0.24\columnwidth]{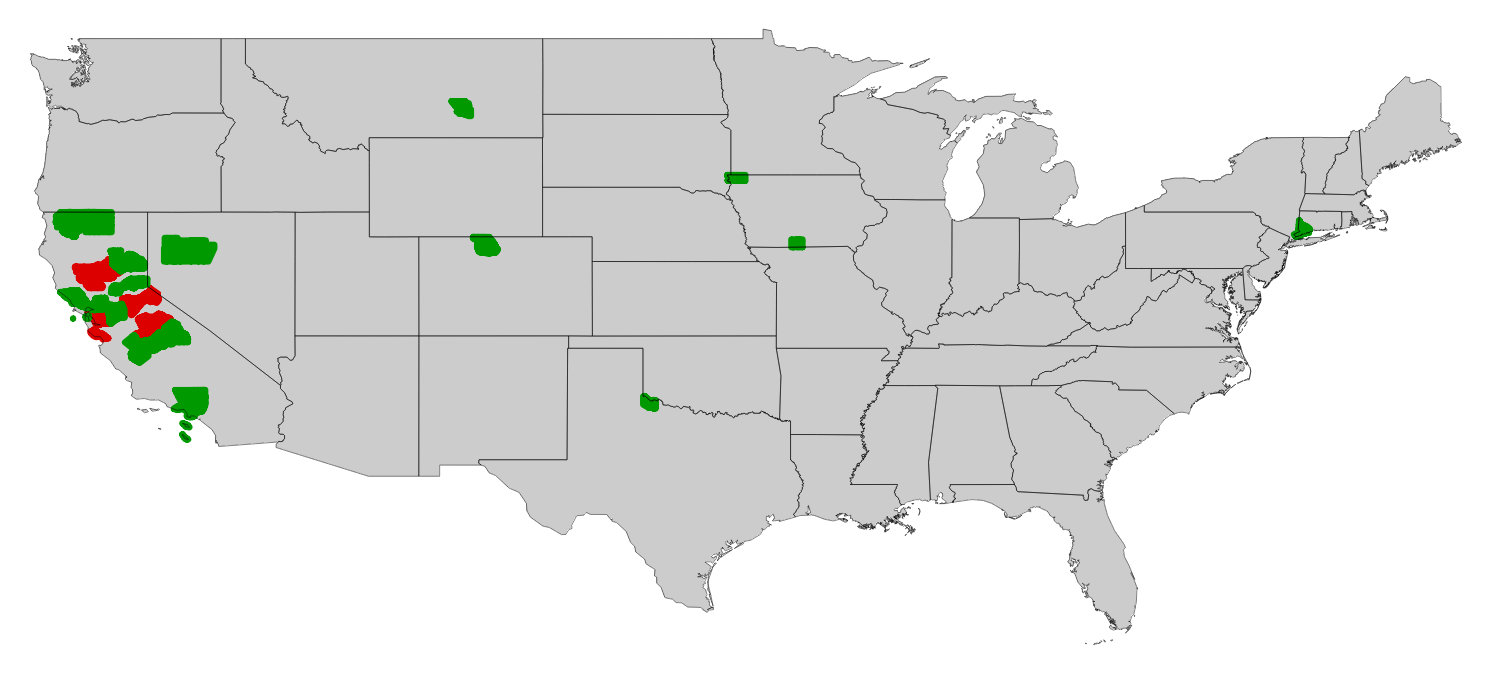}}
    \subfigure[Florida Seed] {\includegraphics[width=0.24\columnwidth]{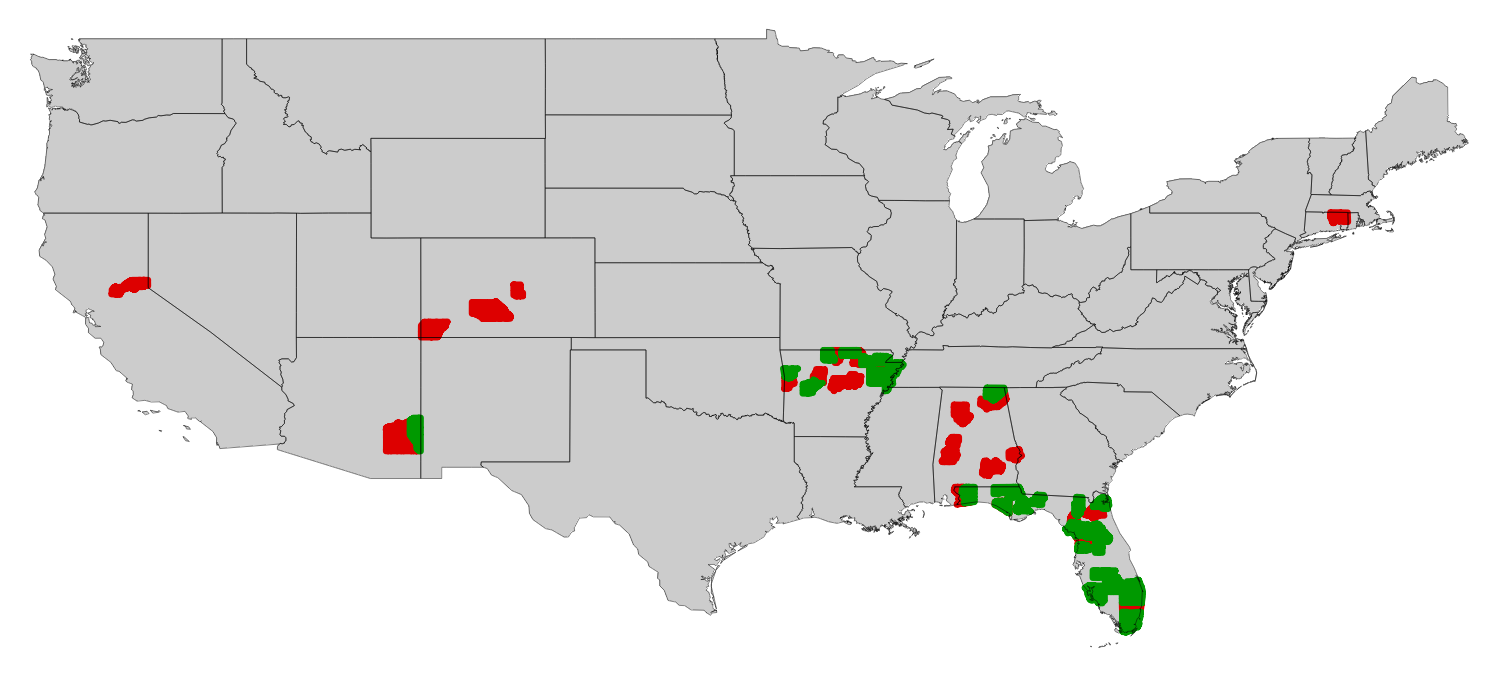}}
    \caption{\small
    (a)-(d) Clusters found by \texttt{LocBipartDC} on the interstate dispute network, using the USA as the seed vertex. In each case, countries in the yellow cluster tend to enter conflicts with countries in the blue cluster and vice versa.
    (e)-(h) Clusters found by \texttt{EvoCutDirected} in the US migration network. There is a large net flow from the red counties to the green counties.
    \label{fig:intro_examples}}
\end{figure*}

\section{Introduction} \label{sec:introduction}
 Given an arbitrary vertex $u$ of a graph $G=(V,E)$ as input, a local graph   clustering   algorithm  finds some   low-conductance set $S\subset V$ containing $u$, while the algorithm runs in time proportional to the size of the target cluster and independent of the size of the graph $G$. Because of the increasing size of available datasets,  which 
 makes centralised computation too expensive, local graph clustering has become an important learning technique for 
   analysing a number of large-scale graphs and has been  applied to solve many other learning and combinatorial optimisation  problems~\cite{Andersen2010,AGP+2016,fountoulakisPNormFlowDiffusion2020,liuStronglyLocalPnormcut2020,kdd/Takai0IY20,  wangCapacityReleasingDiffusion2017, YBL+2017, ZLM2013}. 

\subsection{Our Contribution}
 We study local graph clustering for learning the structure of clusters that are defined by their inter-connections,
 and present two local algorithms to achieve this objective in both undirected graphs and directed ones. 
 
Our first result is a local algorithm for  finding densely connected clusters
in an undirected graph $G=(V,E)$: given any seed vertex $u$, our algorithm is designed to find \emph{two} clusters $L, R$ around $u$, which are densely connected to each other and are loosely connected to $V\setminus (L\cup R)$. The design of our algorithm is based on a new reduction that allows us to relate the connections between $L,R$ and $V\setminus(L\cup R)$ to a \emph{single} cluster in the resulting  graph $H$, and a generalised analysis of Pagerank-based algorithms for local graph clustering. The significance of our designed algorithm is demonstrated by our experimental results on the Interstate Dispute Network from 1816 to 2010. By connecting two vertices~(countries) with an undirected edge
 weighted according to the severity of their military disputes
and  using the USA as the seed vertex,
our algorithm recovers two groups of countries that tend to have conflicts with each other, and shows how the two groups evolve over time. In particular, as shown in Figures~\ref{fig:intro_examples}(a)-(d), our algorithm not only identifies the changing roles of Russia, Japan, and eastern Europe in line with 20th century geopolitics, but also the reunification of east and west Germany around 1990.

We further study  densely connected clusters 
in a \emph{directed graph}~(digraph). Specifically, given any vertex $u$ in a digraph $G=(V,E)$ as input, our second  local algorithm outputs two disjoint vertex sets $L$ and $R$, such that (i) there are many edges \emph{from $L$  to $R$}, and (ii) $L\cup R$ is loosely connected to $V\setminus (L\cup R)$. The design of our algorithm is based on the following two   techniques: (1) a
new reduction that allows us to relate the edge weight from $L$ to $R$, as well as the edge connections between $L\cup R$ and $V\setminus (L \union R)$, to a \emph{single} vertex set in the resulting \emph{undirected} graph $H$; (2) a refined analysis of the \texttt{ESP}-based algorithm for local graph clustering.
 We show that our algorithm is able to recover  local densely connected clusters  in the US migration dataset, in which two vertex sets $L,R$ defined as above could represent a higher-order migration trend. In particular, as shown in  Figures~\ref{fig:intro_examples}(e)--(h), by using different counties as  starting vertices, our algorithm uncovers refined and more localised migration patterns than the previous work on the same dataset~\cite{CLS+2020}. To the best of our knowledge, our work represents  the first local clustering algorithm that achieves a similar goal.
 
\subsection{Related Work}
Li and Peng~\cite{LP2013} study the same problem for undirected graphs,
 and present a random walk based algorithm. Andersen~\cite{Andersen2010} studies a similar problem for undirected graphs under a different objective function, and our algorithm's runtime significantly improves the one in \cite{Andersen2010}.
 There is some recent work on local clustering for hypergraphs~\cite{kdd/Takai0IY20},
 and algorithms for finding higher-order structures of graphs based on network motifs both centrally~\cite{BensonGL15, Benson163} and locally~\cite{YBL+2017}.
 These algorithms are designed to find different types of clusters, and cannot be directly compared with ours.
Our problem is related to the problem of finding clusters in \emph{disassortative} networks~\cite{mooreActiveLearningNode2011, PWC+2019, zhu2020beyond}, although the existing techniques are based on semi-supervised, global methods while ours is unsupervised and local. 
There are also recent studies  which find clusters with a specific structure
in the centralised setting~\cite{CLS+2020,laenenHigherOrderSpectralClustering2020}.
Our current work shows that   such clusters can be learned locally via our presented new techniques.

\section{Preliminaries} \label{sec:prelim}

\paragraph{Notation.}
For any undirected and unweighted graph $G=(V_G, E_G)$ with $n$ vertices and $m$ edges, the   degree of any vertex $v$ is denoted by $\deg_G(v)$, and the set of neighbours of $v$ is $N_G(v)$.
For any   $S\subseteq V$, its volume is $\vol_G(S)\triangleq \sum_{v\in S} \deg(v)$, 
its boundary is $\partial_G (S)\triangleq \{ (u,v) \in E : u\in S~\mbox{and}~ v\not\in S \}$, and its conductance is
\[
\Phi_G(S)\triangleq \frac{|\partial_G(S)|}{\min\{\vol_G(S),\vol_G(V\setminus S)\}}.
\]
For disjoint $S_1, S_2 \subset V_G$, $e(S_1, S_2)$ is the number of edges between $S_1$ and $S_2$.
When $G=(V_G, E_G)$ is a digraph,  for any $u\in V_G$ we use $\deg_{\mathrm{out}}(u)$ and $\deg_{\mathrm{in}}(u)$ to express the number of edges with $u$ as the tail or the head, respectively. 
For any $S\subset V_G$, we define $
    \mathrm{vol}_{\mathrm{out}}(S)  = \sum_{u \in S} \deg_{\mathrm{out}}(u)$, and $
    \mathrm{vol}_{\mathrm{in}}(S)  = \sum_{u \in S} \deg_{\mathrm{in}}(u)$.

For undirected graphs, we use $D_G$ to denote the $n\times n$ diagonal matrix  with $(D_G)_{v,v} = \deg_G(v)$ for any vertex $v\in V$, and we use $A_G$ to represent the adjacency matrix of $G$ defined by $(A_G)_{u,v} = 1$ if $\{u,v\}\in E_G$, and $(A_G)_{u,v}=0$ otherwise. 
The lazy random walk matrix of $G$ is $W_G = (1/2)\cdot (I+ D^{-1}_G A_G)$.  For any set $S\subset V_G$, $\chi_S$ is the indicator vector of $S$, i.e., $\chi_S(v)=1$ if $v\in S$, and $\chi_S(v)=0$ otherwise.  If the set consists of a single vertex $v$, we write $\chi_v$ instead of $\chi_{\{v\}}$.
Sometimes we drop the subscript $G$ when the underlying graph is clear from the context.
 For any vectors $x,y\in\mathbb{R}^n$, we write $x\preceq y$ if it holds for any $v$ that $x(v)\leq y(v)$. For any operators $f, g:\mathbb{R}^n\rightarrow \mathbb{R}^n$,
 we define $f\circ g:\mathbb{R}^n\rightarrow \mathbb{R}^n$ by  $f\circ g(v) \triangleq f(g(v))$ for any $v$.
 For any vector $p$, we define the support of $p$ to be $\mathrm{supp}(p) = \{u : p(u) \neq 0\}$. The sweep sets of $p$ are defined by (1) ordering all the vertices  such that $$\frac{p(v_1)}{\deg(v_1)} \geq \frac{p(v_2)}{\deg(v_2)} \geq \ldots \geq \frac{p(v_n)}{\deg(v_n)},$$ and (2)  constructing $S_j^p = \{v_1, \ldots, v_j\}$ for $1\leq j\leq n$.
Throughout this paper, we will consider vectors to be row vectors, so the random walk update step for a distribution $p$ is written as $pW$.
For ease of presentation we consider only unweighted graphs; however, our analysis can be easily generalised to the weighted case.
 The omitted proofs can be found in the Appendix.

\paragraph{Pagerank.} 
Given an underlying graph $G$ with the lazy random walk matrix $W$, the personalised Pagerank vector $\ppr(\alpha, s)$ is defined to be the unique solution of the equation
\begin{equation}\label{eq:defpr}
    \ppr(\alpha, s) = \alpha s + (1 - \alpha) \ppr(\alpha, s) W,
\end{equation}
where $s \in \R^{n}_{\geq 0}$ is a starting vector and $\alpha\in(0,1]$ is called the teleport probability. 
Andersen et al.~\cite{ACL2006} show  that  the personalised Pagerank vector can be written as 
$   \pr(\alpha, s) = \alpha \sum_{t = 0}^\infty (1 - \alpha)^t s W^t.
$
Therefore, we could study $\pr(\alpha, s)$ through the following random process:   pick some integer $t\in\mathbb{Z}_{\geq 0}$ with probability $\alpha(1-\alpha)^t$, and   perform a $t$-step lazy random walk, where the starting vertex of the random walk is picked according to $s$. Then, $\pr(\alpha, s)$   describes the probability of reaching each vertex in this process. 
  
Computing an exact Pagerank vector $\pr(\alpha, \chi_v)$ is equivalent to computing the stationary distribution of a Markov chain on the vertex set $V$
which
has a time complexity of $\Omega(n)$. 
However, since the probability mass of a personalised Pagerank vector is concentrated around some starting vertex, it is possible to compute a good approximation of the Pagerank vector in a local way.
Andersen et al.~\cite{ACL2006}
introduced the approximate Pagerank, which will be used in our analysis.
\begin{definition}
    A vector $p = \apr(\alpha, s, r)$ is an approximate Pagerank vector if 
    $p + \pr(\alpha, r) = \pr(\alpha, s)$.
    The vector $r$ is called the residual vector.
\end{definition}

\paragraph{The evolving set process.}
     The evolving set process (ESP) is a Markov chain whose states are sets of vertices $S_i \subseteq V$.
    Given a state $S_i$, the next state $S_{i+1}$ is determined by the following process: (1) choose $t \in [0, 1]$ uniformly at random; (2) let $S_{i + 1} = \{v \in V | \chi_v W \chi_{S_i}^\transpose \geq t\}$. 
    The volume-biased ESP is a variant used to ensure that the Markov chain absorbs in the state $V$ rather than $\emptyset$.
    Andersen and Peres~\cite{AC2009} gave a local   algorithm for undirected graph clustering using the volume-biased ESP.
    In particular, they gave an algorithm  $\texttt{GenerateSample}(u, T)$ which samples the $T$-th element from the volume-biased ESP with $S_0 = \{u\}$.

\section{The Algorithm for Undirected Graphs\label{sec:undirected}}
Now we present a local algorithm for finding two clusters in an undirected graph with a dense cut between them. To formalise this notion, for any  undirected graph $G=(V,E)$ and disjoint $L,R\subset V$, we follow Trevisan~\cite{Trevisan2012} and define the \emph{bipartiteness} ratio  
\[
\beta(L,R) \triangleq 1 - \frac{2e(L,R)}{\vol(L\cup R)}.
\]
Notice that a low $\beta(L,R)$ value means that there is a dense cut between $L$ and $R$, and there is a sparse cut between 
$L\cup R$ and $V\setminus (L\cup R)$.  In particular,  $\beta(L,R)=0$ implies that $(L,R)$ forms a bipartite  and connected component of $G$.
We will describe a local algorithm for finding almost-bipartite sets $L$ and $R$ with a low value of $\beta(L, R)$.



\subsection{The Reduction by Double Cover} \label{sec:reduction}
The design of most local algorithms for finding  a target set $S\subset V$ of low conductance is  based on analysing the behaviour of random walks starting from vertices in $S$. In particular, when the conductance $\Phi_G(S)$ is low, a random walk starting from most vertices in $S$ will leave $S$ with low probability.
However, for our setting, the target is a pair of sets $L, R$ with many connections between them. As such, a random walk starting in either $L$ or $R$ is very likely to leave the starting set. To address this,
 we introduce a novel technique  based on the double cover of $G$
to reduce the problem of finding two sets of high conductance to the problem of finding one of \emph{low} conductance.

Formally, for any undirected graph $G=(V_G, E_G)$, its double cover is the  bipartite graph $H=(V_H, E_H)$ defined as follows: (1) every vertex $v\in V_G$ has two corresponding vertices $v_1, v_2\in V_H$; (2) for every edge $\{u,v\}\in E_G$, there are edges $\{u_1, v_2\}$  and $\{u_2, v_1\}$ in $E_H$. See Figure~\ref{fig:dc} for an illustration.

\begin{figure}[h]
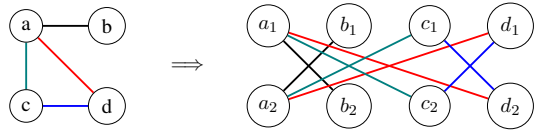

\centering
 \resizebox{3\totalheight}{!}{
\tikzfig{dcconstruct2}}
\caption{\small{An example of the construction of the double cover.\label{fig:dc}}}
\end{figure}

 Now  we  present a tight  connection between the value of $\beta(L,R)$ for any disjoint  sets  $L, R\subset V_G$  and
 the conductance of
 a \emph{single} set in the double cover of $G$.  To this end, for any $S\subset V_G$, we define $S_1\subset V_H$ and $S_2\subset V_H$ by $S_1\triangleq \{v_1~|~ v\in S\}$ and $S_2\triangleq \{v_2~|~ v\in S\}$.
 We formalise the connection in the following lemma.
 
\begin{lemma} \label{lem:cond_bip}
 Let $G$ be an undirected graph, $S\subset V$ with partitioning $(L, R)$, and $H$ be the double cover of $G$.  Then, it holds that
    $
        \cond[H]{L_1 \cup R_2} = \bipart[G]{L}{R}$.
\end{lemma}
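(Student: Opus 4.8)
The plan is to evaluate both sides of the claimed identity directly from the definitions, treating the double cover $H$ as an explicit bipartite graph. First I would record the elementary structure of $H$: writing each edge of $H$ in the form $\{a_1,b_2\}$ arising from an edge $\{a,b\}\in E_G$ (so that an undirected edge of $G$ contributes the two $H$-edges obtained from its two orientations), one sees that $v_1$ is incident in $H$ precisely to the edges $\{v_1,u_2\}$ with $u\in N_G(v)$, and symmetrically for $v_2$; hence $\deg_H(v_1)=\deg_H(v_2)=\deg_G(v)$ for every $v\in V_G$. It follows that $\vol_H(S_1)=\vol_H(S_2)=\vol_G(S)$ for all $S\subseteq V_G$. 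In particular, setting $T\triangleq L_1\cup R_2$ and using $L\cap R=\emptyset$, we get $\vol_H(T)=\vol_G(L)+\vol_G(R)=\vol_G(L\cup R)$, while $\vol_H(V_H)=2\vol_G(V)=4m$.

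Next I would count the edges of $H$ inside $T$ and across $\partial_H(T)$. An edge $\{a_1,b_2\}\in E_H$ satisfies $a_1\in T$ iff $a\in L$, and $b_2\in T$ iff $b\in R$, so it lies inside $T$ exactly when $a\in L$ and $b\in R$. Counting orientations, the number of such edges equals $e(L,R)$: each edge of $G$ with one endpoint in $L$ and the other in $R$ contributes exactly the orientation whose $L$-endpoint comes first, and every other edge of $G$ contributes none. Since $\vol_H(T)$ equals $\abs{\partial_H(T)}$ plus twice the number of edges internal to $T$, this gives $\abs{\partial_H(T)}=\vol_G(L\cup R)-2e(L,R)$.

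It remains to identify the denominator in $\Phi_H(T)$. Because $L\cup R\subseteq V_G$, we have $\vol_H(T)=\vol_G(L\cup R)\le\vol_G(V)=2m=\tfrac12\vol_H(V_H)$, so $\vol_H(T)\le\vol_H(V_H\setminus T)$ and the minimum in the definition of conductance is attained at $T$ itself. Putting the pieces together, $\Phi_H(T)=\frac{\vol_G(L\cup R)-2e(L,R)}{\vol_G(L\cup R)}=1-\frac{2e(L,R)}{\vol_G(L\cup R)}=\beta(L,R)$, which is the claimed identity.

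I do not expect a genuine obstacle: the argument is essentially bookkeeping on the correspondence between edges of $G$ and edges of $H$. The only points that need care are (i) the orientation double-counting when passing between $G$-edges and $H$-edges, which enters both the degree identity and the count of edges internal to $T$, and (ii) the observation that $\vol_H(T)\le\vol_H(V_H\setminus T)$ always holds, which is what pins the conductance denominator to $\vol_H(T)$.
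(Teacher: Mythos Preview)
Your argument is correct and follows essentially the same route as the paper: set $T=L_1\cup R_2$, use $\deg_H(v_i)=\deg_G(v)$ to get $\vol_H(T)=\vol_G(L\cup R)$, count the internal edges of $T$ as $e_G(L,R)$, and conclude $\Phi_H(T)=1-\frac{2e(L,R)}{\vol_G(L\cup R)}$. You are in fact slightly more careful than the paper, which silently takes the conductance denominator to be $\vol_H(T)$ without noting that $\vol_H(T)\le\tfrac12\vol_H(V_H)$.
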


Next we   look at  the other direction of this correspondence.
Specifically, given any  $S \subset V_H$ in the double cover of a graph $G$, we would like to find two disjoint sets $L \subset V_G$ and $R \subset V_G$ such that $\bipart[G]{L}{R} = \cond[H]{S}$.
However, such a connection does not hold in general.
To overcome this, we restrict our attention to those subsets of $V_H$ which can be unambiguously interpreted as two disjoint sets in $V_G$.

\begin{definition} \label{def:simple}
We call $S\subset V_H$ \textbf{simple} if  $| \{v_1,v_2\}\cap S|\leq 1$ holds for all $v\in V_G$.
\end{definition}

\begin{lemma}\label{lem:ReductionForSimpleset}
    For any  simple set $S \subset V_{H}$, let  $L = \{u : u_1 \in S\}$ and $R = \{u : u_2 \in S\}$. Then,   
    $
        \bipart[G]{L}{R} = \cond[H]{S}.
    $
\end{lemma}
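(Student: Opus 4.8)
The plan is to deduce the statement directly from Lemma~\ref{lem:cond_bip}. The role of the simplicity hypothesis is precisely to let us rewrite $S$ in the canonical form $L_1\cup R_2$ for a partition $(L,R)$ of a subset of $V_G$; once that is done, Lemma~\ref{lem:cond_bip} applies verbatim and there is nothing left to compute. So the proof splits into two small bookkeeping steps — (i) checking that $(L,R)$ is a legitimate partition, and (ii) checking the set identity $S = L_1\cup R_2$ — followed by one invocation of the previous lemma.

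First I would check that $L$ and $R$ are disjoint. If some $u\in L\cap R$, then by the definitions of $L$ and $R$ we have both $u_1\in S$ and $u_2\in S$, so $\abs{\{u_1,u_2\}\cap S}=2$, contradicting simplicity. Hence $(L,R)$ is a partition of the set $L\cup R\subseteq V_G$. Next I would verify $S = L_1\cup R_2$ by double inclusion, using that every vertex of $V_H$ is of the form $w_1$ or $w_2$ for a unique $w\in V_G$. If $w_1\in S$ then $w\in L$ by definition, so $w_1\in L_1$; if $w_2\in S$ then $w\in R$, so $w_2\in R_2$; this gives $S\subseteq L_1\cup R_2$. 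Conversely, $u\in L$ forces $u_1\in S$ and $u\in R$ forces $u_2\in S$, so $L_1\cup R_2\subseteq S$. Thus $S = L_1\cup R_2$.

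Finally I would apply Lemma~\ref{lem:cond_bip} to the graph $G$, the vertex set $L\cup R\subseteq V_G$ with partitioning $(L,R)$, and its double cover $H$, obtaining $\cond[H]{L_1\cup R_2} = \bipart[G]{L}{R}$. Substituting the identity $S = L_1\cup R_2$ yields $\bipart[G]{L}{R} = \cond[H]{S}$, as desired.

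I do not expect a genuine obstacle here: all of the combinatorial content — matching the boundary edge count of $L_1\cup R_2$ in $H$ to $e(L,R)$, and its volume to $\vol(L\cup R)$ — has already been absorbed into Lemma~\ref{lem:cond_bip}, and simplicity is exactly the condition that makes the correspondence $S\leftrightarrow(L,R)$ well defined. The only point requiring a little care is the degenerate case $S=\emptyset$ (equivalently $L=R=\emptyset$), where both $\cond[H]{S}$ and $\bipart[G]{L}{R}$ involve the conductance/bipartiteness of the empty set; I would handle this exactly as the rest of the paper handles $\Phi$ on trivial sets, either excluding it by convention or noting both sides are undefined in the same way.
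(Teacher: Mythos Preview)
Your proposal is correct and matches the paper's approach exactly: the paper's proof of this lemma is the single sentence ``The statement follows by Lemma~\ref{lem:cond_bip},'' and your write-up simply makes explicit the two bookkeeping steps (disjointness of $L,R$ and the identity $S=L_1\cup R_2$) that justify that invocation.
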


\subsection{Design of the Algorithm} \label{sec:algdesc}
So far we have shown that the problem of finding densely connected sets 
$L, R\subset V_G$ can be reduced to finding $S\subset V_H$ of low conductance in the double cover $H$, and this reduction raises the natural question of whether existing local algorithms can  be directly  employed to find  $L$ and $R$ in $G$.
However, this is not the case: even though a set $S\subset V_H$ returned by most local algorithms is guaranteed to have low conductance, vertices of $G$ could be included in $S$ twice, and as such $S$ will not necessarily give us disjoint sets $L, R\subset V_G$ with low value of $\beta(L,R)$.
See Figure~\ref{fig:nonsimple} for an illustration.


\begin{figure}[h]
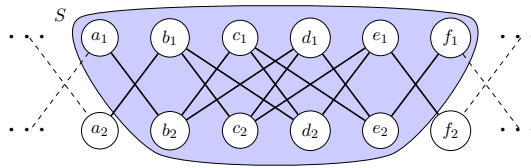

\centering
\resizebox{2\totalheight}{!}{
\tikzfig{nonsimple2}}
\caption{\small{The indicated vertex set $S\subset V_H$
has low conductance.
However, since $S$  contains  many pairs of vertices which correspond to the same vertex in $G$, $S$ gives us little information for finding disjoint $L,R\subset V_G$ with a low $\beta(L,R)$ value.}}
\label{fig:nonsimple}
\end{figure}

\paragraph{The simplify operator.}
To take the example shown in Figure~\ref{fig:nonsimple} into account,  our objective is to  design a local algorithm for finding some set $S\subset V_H$ of low conductance which is also simple.
To ensure this, we  introduce  the simplify operator, and analyse its properties.

\begin{definition}[Simplify operator] Let $H$ be the double cover of $G$, where the two corresponding vertices of any $u\in V_G$ are defined as $u_1,u_2\in V_H$. Then, for any   $p\in\mathbb{R}^{2n}_{\geq 0}$,  the simplify operator is a function $\sigma:\mathbb{R}^{2n}_{\geq 0}\rightarrow \mathbb{R}^{2n}_{\geq 0}$  defined by 
\begin{align*}
    	(\sigma\circ p)(u_1) & \triangleq \max (0, p(u_1) - p(u_2)), \\
        (\sigma\circ p)(u_2) & \triangleq \max (0, p(u_2) - p(u_1))
\end{align*}
for  every $u\in V_G$.
\end{definition}
 
Notice that, for any vector $p$ and any $u\in V_G$,  at most one of $u_1$ and $u_2$ is in the support of $\sigma \circ p$; hence, the support of $\sigma\circ p$ is always simple. To explain the meaning of $\sigma$, for a vector $p$ one could view $p(u_1)$ as our ``confidence'' that $u\in L$, and   $p(u_2)$ as our ``confidence'' that $u\in R $. Hence, when $p(u_1) \approx p(u_2)$, both $(\sigma \circ p)(u_1)$ and $(\sigma \circ p)(u_2)$ are small which captures the fact that we would not prefer to include $u$ in either $L$ or $R$. 
On the other hand, when $p(u_1) \gg p(u_2)$, we have $(\sigma \circ p)(u_1) \approx p(u_1)$, which captures our confidence that $u$ should belong to  $L$. 
The following lemma summaries some key properties of $\sigma$.

\begin{lemma} \label{lem:sigmaproperty}
The  following  holds for the 
$\sigma$-operator:   
\begin{itemize} 
    \item $\sigma\circ(c\cdot p) = c\cdot (\sigma \circ p)$ for $p \in \R_{\geq 0}^{2n}$ and any $c\in\R_{\geq 0}$;
    \item $\sigma\circ(a + b) \preceq \sigma\circ a+ \sigma\circ b$ for $a, b \in \mathbb{R}_{\geq 0}^{2n}$;
    \item $\sigma \circ (pW) \preceq (\sigma \circ p)W$ for $p \in \mathbb{R}_{\geq 0}^{2n}$.
\end{itemize}
\end{lemma}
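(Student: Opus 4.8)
The plan is to treat the three properties separately, the first two being immediate consequences of pointwise facts about the positive-part map $(x)_+ := \max(0,x)$, and the third being the only substantive one. For homogeneity, I would fix $u \in V_G$ and $c \in \R_{\geq 0}$ and simply compute: $(\sigma\circ(cp))(u_1) = \max(0, c\,p(u_1) - c\,p(u_2)) = c\max(0, p(u_1)-p(u_2)) = c\,(\sigma\circ p)(u_1)$, where the middle equality uses $c \geq 0$; the same computation applies at $u_2$. For subadditivity I would invoke that $(\cdot)_+$ is subadditive, i.e.\ $(x+y)_+ \leq (x)_+ + (y)_+$ for all reals $x,y$ (a one-line sign check), applied coordinatewise with $x = a(u_1) - a(u_2)$ and $y = b(u_1) - b(u_2)$; this yields $(\sigma\circ(a+b))(u_1) \leq (\sigma\circ a)(u_1) + (\sigma\circ b)(u_1)$, and likewise at $u_2$, which is precisely $\sigma\circ(a+b) \preceq \sigma\circ a + \sigma\circ b$.

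For the third property, the plan is to exploit the symmetry of the double cover rather than to compute a walk step head-on. Call $b \in \R^{2n}_{\geq 0}$ \emph{balanced} if $b(u_1) = b(u_2)$ for every $u \in V_G$. I would record four elementary facts: (i) every $p \in \R^{2n}_{\geq 0}$ decomposes as $p = (\sigma\circ p) + b_p$, where $b_p$ is the balanced vector with $b_p(u_1) = b_p(u_2) = \min(p(u_1), p(u_2))$, and both summands lie in $\R^{2n}_{\geq 0}$; (ii) $\sigma\circ x \preceq x$ for every $x \in \R^{2n}_{\geq 0}$, since the quantity subtracted in each coordinate is nonnegative; (iii) adding or subtracting a balanced vector leaves $\sigma$ unchanged, because $\sigma$ sees only the difference $p(u_1) - p(u_2)$, which is unaffected; and (iv) the lazy random walk matrix $W = W_H$ of the double cover maps balanced vectors to balanced vectors. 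Granting these, writing $q = \sigma\circ p$ and using (i) we get $pW = qW + b_p W$, hence $qW = pW - b_p W$; since $b_p W$ is balanced by (iv), fact (iii) gives $\sigma\circ(qW) = \sigma\circ(pW)$; and $\sigma\circ(qW) \preceq qW$ by (ii). Chaining these, $\sigma\circ(pW) = \sigma\circ\big((\sigma\circ p)W\big) \preceq (\sigma\circ p)W$, which is the claim.

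The only step that genuinely uses the construction of $H$ is fact (iv), and this is the one to state carefully, though I expect it to be bookkeeping rather than a real obstacle. Here I would observe that the coordinate swap $u_1 \leftrightarrow u_2$ is a graph automorphism of $H$: by definition $\{u_1,v_2\} \in E_H$ iff $\{u,v\} \in E_G$ iff $\{u_2,v_1\} \in E_H$, so the swap permutes $E_H$, and it preserves degrees since $\deg_H(u_1) = \deg_H(u_2) = \deg_G(u)$. Viewing the swap as a permutation matrix $\Theta$, it therefore commutes with $A_H$ and $D_H$, hence with $W_H = (1/2)(I + D_H^{-1} A_H)$; a row vector $b$ is balanced exactly when $b\Theta = b$, and then $(bW_H)\Theta = b(W_H\Theta) = b(\Theta W_H) = (b\Theta)W_H = bW_H$, so $bW_H$ is balanced. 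Alternatively one can skip the automorphism language and evaluate $(bW_H)(v_1)$ and $(bW_H)(v_2)$ directly from the edges of $H$, observing they coincide when $b$ is balanced. The only subtlety worth flagging is to apply fact (iii) to genuinely nonnegative vectors, which holds here since $pW$, $qW$, and $b_p W$ are all nonnegative, so every expression stays within the stated domain of $\sigma$.
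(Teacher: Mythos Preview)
Your proof is correct. For the first two items it coincides with the paper's argument verbatim. For the third item you take a genuinely different, more structural route.

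The paper proves the third property by direct computation: it expands $(pW)(u_1)$ and $(pW)(u_2)$ using the explicit form of $W_H$, assumes without loss of generality that $(pW)(u_1)\geq (pW)(u_2)$, writes the difference as $\tfrac12\bigl(p(u_1)-p(u_2)\bigr)+\tfrac12\sum_{v\in N_G(u)}\tfrac{p(v_2)-p(v_1)}{\deg_G(v)}$, invokes the pointwise identity $p(v_1)-p(v_2)=(\sigma\circ p)(v_1)-(\sigma\circ p)(v_2)$, and then drops nonnegative terms. Your argument instead packages that identity as the decomposition $p=(\sigma\circ p)+b_p$ with $b_p$ balanced, and reduces the claim to two clean invariances: $\sigma$ ignores balanced additive shifts, and $W_H$ preserves the balanced subspace because the coordinate swap $u_1\leftrightarrow u_2$ is an automorphism of $H$. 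Both proofs hinge on the same underlying fact (only the differences $p(u_1)-p(u_2)$ matter), but yours isolates \emph{why} the double cover is essential --- namely the swap automorphism --- which dovetails nicely with the paper's later remark that the analogous step fails on the semi-double cover of a digraph, where that automorphism is absent. The paper's computation, on the other hand, is entirely self-contained and does not require introducing the auxiliary notions of balanced vectors or permutation matrices.
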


While these three properties will all be used in our analysis, the third is of particular importance:  it implies that, if $p$ is the probability distribution of a random walk in $H$, applying $\sigma$ before taking a one-step  random walk would never result in lower probability mass than applying $\sigma$ after taking a one-step random walk.
This means that no probability mass would be lost when the $\sigma$-operator is applied between every step of a random walk, in comparison with applying $\sigma$ at the end of an entire random walk process.

\paragraph{Description of the algorithm.}  Our proposed algorithm is conceptually   simple: every vertex $u$ of the input graph $G$ maintains two copies $u_1, u_2$ of itself, and these two ``virtual'' vertices are used to simulate $u$'s corresponding vertices in the double cover $H$ of $G$. Then, as the neighbours of $u_1$ and $u_2$ in $H$ are entirely determined by $u$'s neighbours in $G$ and can be constructed locally, a random walk process in $H$ will be simulated in $G$. This  will allow us to apply a local algorithm  similar to the one by Anderson et al.~\cite{ACL2006} on this ``virtual'' graph $H$.
Finally, since all the required information about $u_1,u_2\in V_H$ is maintained by $u\in V_G$, the $\sigma$-operator will be applied locally. 

The formal description of our algorithm is given in Algorithm~\ref{alg:local_max_cut}, which invokes Algorithm~\ref{alg:aprdc}  as the key component  to compute  $\apr_H(\alpha, \chi_{u_1}, r)$. Specifically,  Algorithm~\ref{alg:aprdc} maintains, for every vertex $u \in G$, tuples $(p(u_1), p(u_2))$ and $(r(u_1), r(u_2))$ to keep track of the values of $p$ and $r$ in $G$'s double cover.
    For a given vertex $u$, the entries in these tuples are expressed by $p_1(u), p_2(u), r_1(u),$ and $r_2(u)$ respectively.   Every $\texttt{dcpush}$ operation~(Algorithm~\ref{alg:dcpush}) preserves the invariant
   $
        p + \pr_H(\alpha, r) = \pr_H(\alpha, \chi_{v_1}),
$    which ensures that the final output of Algorithm~\ref{alg:aprdc} is an approximate Pagerank vector.
    We remark that, although   the presentation  of the \texttt{ApproximatePagerankDC} procedure is   similar to the one in Andersen et al.~\cite{ACL2006}, in our  \texttt{dcpush} procedure the update of the residual vector $r$ is slightly more involved: specifically, for every vertex $u\in V_G$, both $r_1(u)$ and $r_2(u)$ are needed in order to update $r_1(u)$ (or $r_2(u)$). That is one of the reasons that the performance of the algorithm in Andersen et al.~\cite{ACL2006} cannot be directly applied for our algorithm, and a more technical analysis, some of which is parallel to theirs, is needed in order to analyse the correctness and performance of our algorithm.

\begin{algorithm}[h]
   \caption{\texttt{LocBipartDC}\label{alg:local_max_cut}}
\begin{algorithmic}
   \STATE {\bfseries Input:} A graph $G$, starting vertex $u$, target volume $\gamma$, and target bipartiteness $\beta$
   \STATE {\bfseries Output:} Two disjoint sets $L$ and $R$
  \STATE Set $\alpha = \frac{\beta^2}{378}$, and  $\epsilon = \frac{1}{20 \gamma}$ \\
Compute $p' = \texttt{ApproximatePagerankDC}(u, \alpha, \epsilon)$ 
\STATE Compute $p = \sigma \circ p'$ 
\FOR{$j \in [1, |\mathrm{supp}(p)|]$}
    \IF{$\cond{\pjsweep} \leq \beta$}
      \STATE Set $L = \{u : u_1 \in \pjsweep \}$, and   $R = \{u : u_2 \in \pjsweep \}$ \\
    \Return $(L, R)$
    \ENDIF
\ENDFOR
\end{algorithmic}
\end{algorithm}

\begin{algorithm}[h]
   \caption{\texttt{ApproximatePagerankDC} \label{alg:aprdc}}
\begin{algorithmic}
   \STATE {\bfseries Input:} Starting vertex $v$,    parameters $\alpha$ and $\epsilon$ 
   \STATE {\bfseries Output:} Approximate Pagerank vector $\apr_H(\alpha, \chi_{v_1}, r)$
   \STATE Set $p_1 = p_2 = r_2 = \mathbf{0}$; set $r_1 = \chi_v$ 
   \WHILE{$\max_{(u, i) \in V \times \{1, 2\}} \frac{r_i(u)}{\deg(u)} \geq \epsilon$}
   \STATE Choose any $u$ and $i \in \{1, 2\}$ such that $\frac{r_i(u)}{\deg(u)} \geq \epsilon$
   \STATE  $(p_1, p_2, r_1, r_2) = \texttt{dcpush}(\alpha, (u, i), p_1, p_2, r_1, r_2)$
   \ENDWHILE
   
   \Return $p = \left[p_1, p_2 \right]$
\end{algorithmic}
\end{algorithm}


\begin{algorithm}[h]
   \caption{\texttt{dcpush}}
   \label{alg:dcpush}
\begin{algorithmic}
   \STATE {\bfseries Input:} $\alpha, (u, i), p_1, p_2, r_1, r_2$
   \STATE {\bfseries Output:} $(p'_1, p'_2, r'_1, r'_2)$
   \STATE Set $(p_1', p_2', r_1', r_2') = (p_1, p_2, r_1, r_2)$
   \STATE
Set $p'_i(u) = p_i(u) + \alpha r_i(u)$; $r'_i(u) = (1 - \alpha) \frac{r_i(u)}{2}$ \\
    \FOR{$v\in N_G(u)$}
   \STATE Set $r'_{3 - i}(v) = r_{3 - i}(v) + (1 - \alpha) \frac{r_i(u)}{2 \deg(u)}$
   \ENDFOR
   
   \Return $(p'_1, p'_2, r'_1, r'_2)$
 \end{algorithmic}
\end{algorithm}

\subsection{Analysis of the Algorithm} \label{sec:alganalysis}

     To prove the correctness of our algorithm, we will show two complementary facts which we state informally here: 
    \begin{enumerate}
        \item If there is a simple set $S \subset V_H$ with low conductance, then for most $u_1 \in S$, the simplified approximate Pagerank vector $p = \sigma \circ \apr(\alpha, \chi_{u_1}, r)$ will have a lot of probability mass on a small set of vertices.
        \item If $p = \sigma \circ \apr(\alpha, \chi_{u_1}, r)$ contains a lot of probability mass on some small set of vertices, then there is a sweep set of $p$ with low conductance.
    \end{enumerate}
    \vspace{-0.2cm}
        As we have shown in Section~\ref{sec:reduction}, there is a direct correspondence between almost-bipartite sets in $G$, and low-conductance and  simple sets in $H$.
    This means that the two facts above are exactly what we need to prove that
    Algorithm~\ref{alg:local_max_cut}
   can find densely connected sets in $G$.
  
We will first show in   Lemma~\ref{lem:sapr_escapingmass}  how the $\sigma$-operator affects some standard mixing properties of Pagerank vectors in order to establish the first fact promised above.
This lemma relies on the fact that $S \subset V_G$ corresponds to a \emph{simple} set in $V_H$.
This allows us to apply the $\sigma$-operator to the approximate Pagerank vector $\apr_H(\alpha, \chi_{u_1}, r)$ while preserving a large probability mass on the target set.
 
\begin{lemma} \label{lem:sapr_escapingmass}
 For any set $S \subset V_{G}$ with partitioning $(L, R)$ and any constant  $\alpha \in [0, 1]$, there is a subset $S_{\alpha} \subseteq S$ with $\vol(S_{\alpha}) \geq \vol(S) / 2$ such that, for any vertex $v \in S_{\alpha}$, the simplified approximate Pagerank on the double cover $p = \simplify{\apr_{H}(\alpha, \chi_{v_1}, r)}$ satisfies
    \[
    	p(L_1 \union R_2) \geq 1 - \frac{2 \bipart{L}{R}}{\alpha} - 2 \vol(S) \max_{u \in V} \frac{r(u)}{\deg(u)}.
    \]
\end{lemma}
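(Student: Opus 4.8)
The plan is to transfer the standard Pagerank mixing analysis of \cite{ACL2006} to the double cover $H$, using Lemma~\ref{lem:cond_bip} to identify the target there as the \emph{simple} set $\hat S := L_1 \cup R_2$, and then to pay explicitly for the two deviations from the textbook setting: the residual vector $r$, and the application of $\sigma$. By Lemma~\ref{lem:cond_bip}, $\cond[H]{\hat S} = \bipart{L}{R}$; since $\deg_H(v_i) = \deg_G(v)$ we also have $\vol_H(\hat S) = \vol(S)$ and $\vol_H(\hat S) \le \vol_H(V_H \setminus \hat S)$, so $\cond[H]{\hat S} = |\partial_H \hat S|/\vol_H(\hat S)$. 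For $v \in V_G$ write $\hat v$ for the copy of $v$ lying in $\hat S$, so $\hat v = v_1$ when $v \in L$ (the vertex named in the statement) and $\hat v = v_2$ when $v \in R$.

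\emph{Step 1 (escaping mass, exact Pagerank).} I would reproduce the ACL computation on $H$: let $\psi$ be the degree distribution supported on $\hat S$ and $q := \pr_H(\alpha, \psi)$. Since $\psi$ vanishes off $\hat S$, \eqref{eq:defpr} gives $q(V_H \setminus \hat S) = (1-\alpha)(q W_H)(V_H \setminus \hat S)$; combining this with the one-step estimate $(q W_H)(V_H \setminus \hat S) \le q(V_H \setminus \hat S) + \tfrac12 \cond[H]{\hat S}$ (which follows from $W_H = \tfrac12(I + D_H^{-1} A_H)$ and $q(w) \le \deg_H(w)/\vol_H(\hat S)$, itself a consequence of $\psi \preceq \deg_H(\cdot)/\vol_H(\hat S)$ and the invariance of the degree vector under $W_H$) yields $q(V_H \setminus \hat S) \le \bipart{L}{R}/(2\alpha)$. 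Writing $q = \sum_{w \in \hat S} \psi(w)\, \pr_H(\alpha, \chi_w)$ and applying Markov's inequality with respect to $\psi$, the set $\hat S_\alpha := \{ w \in \hat S : \pr_H(\alpha, \chi_w)(\hat S) \ge 1 - \bipart{L}{R}/\alpha \}$ has $\vol_H(\hat S_\alpha) \ge \vol_H(\hat S)/2$. Set $S_\alpha := \{ v \in S : \hat v \in \hat S_\alpha \}$, so $\vol(S_\alpha) = \vol_H(\hat S_\alpha) \ge \vol(S)/2$.

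\emph{Step 2 (residual term and the $\sigma$-operator).} Fix $v \in S_\alpha$, put $\rho := \max_u r(u)/\deg(u)$ and $p' := \apr_H(\alpha, \chi_{\hat v}, r) = \pr_H(\alpha, \chi_{\hat v}) - \pr_H(\alpha, r)$, which is nonnegative and satisfies $p' \preceq \pr_H(\alpha, \chi_{\hat v})$, hence $\|p'\|_1 \le 1$. From $r \preceq \rho \cdot \deg_H(\cdot)$ and the $W_H$-invariance of $\deg_H(\cdot)$ we get $\pr_H(\alpha, r) \preceq \rho \cdot \deg_H(\cdot)$, so $\pr_H(\alpha, r)(\hat S) \le \rho\, \vol_H(\hat S) = \rho\,\vol(S)$ and therefore $p'(\hat S) \ge 1 - \bipart{L}{R}/\alpha - \rho\,\vol(S)$. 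For the $\sigma$-step, for each $u \in S$ the definition of $\sigma$ gives $(\sigma \circ p')(\hat u) = \max(0, p'(\hat u) - p'(\bar u)) \ge p'(\hat u) - p'(\bar u)$, where $\bar u$ is the other copy of $u$; summing over $u \in S$ and noting that $\{ \bar u : u \in S \} = L_2 \cup R_1$ is disjoint from $\hat S$ (this is exactly where simplicity of $\hat S$, Definition~\ref{def:simple}, enters),
\[
    (\sigma \circ p')(\hat S) \ \ge\ p'(\hat S) - p'(L_2 \cup R_1) \ \ge\ p'(\hat S) - \bigl( 1 - p'(\hat S) \bigr) \ =\ 2 p'(\hat S) - 1 .
\]
Combining, $p(L_1 \cup R_2) = (\sigma\circ p')(\hat S) \ge 2\bigl(1 - \bipart{L}{R}/\alpha - \rho\,\vol(S)\bigr) - 1 = 1 - 2\bipart{L}{R}/\alpha - 2\,\vol(S)\,\rho$, which is the claim.

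\emph{Where the difficulty lies.} Step~1 is the classical argument and should transfer to $H$ essentially verbatim; the genuinely new point is the $\sigma$-step. Bounding the mass $\sigma$ drains out of $\hat S$ by $1 - p'(\hat S)$ — rather than by something proportional to $p'(\hat S)$, which would make the bound vacuous — works only because the ``mirror'' set $L_2 \cup R_1$ is disjoint from $\hat S = L_1 \cup R_2$, i.e. precisely the simplicity of $\hat S$, so that the escaped mass lands in $V_H \setminus \hat S$ where $p'$ has total mass at most $1 - p'(\hat S)$. This is the source of the factor $2$ in front of both $\bipart{L}{R}/\alpha$ and $\vol(S)\,\rho$ in the final bound. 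One minor notational point: for $v \in R$ the relevant starting vector is $\chi_{v_2} = \chi_{\hat v}$ rather than $\chi_{v_1}$ (equivalently, run the argument with the partition $(R,L)$).
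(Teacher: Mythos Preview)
Your proof is correct and follows essentially the same approach as the paper: the paper cites Proposition~\ref{lem:apr_escaping_mass} (ACL'06, Theorem~4) as a black box to get the bound on $\apr_H(\alpha,\chi_v,r)(L_1\cup R_2)$, whereas you unpack that argument in Step~1 and handle the residual separately in Step~2, but the crucial $\sigma$-step --- bounding $(\sigma\circ p')(\hat S)\ge 2p'(\hat S)-1$ via $p'(L_2\cup R_1)\le 1-p'(\hat S)$ using simplicity of $\hat S$ --- is identical to the paper's. Your observation about $\chi_{\hat v}$ versus $\chi_{v_1}$ for $v\in R$ is also apt; the paper's proof tacitly writes $\chi_v$ there.
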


 To  prove the second fact, we show as an intermediate lemma that the value of $p(u_1)$ can be bounded with respect to its value after taking a step of the random walk: $pW(u_1)$. 
  
\begin{lemma} \label{lem:sapr_updatestep} Let $G$ be a  graph with double cover $H$, and $\apr(\alpha, s, r)$ be the approximate Pagerank vector defined with respect to $H$. Then, $p = \simplify{\apr(\alpha, s, r)}$ satisfies that
$ p(u_1)  \leq \alpha \left(s(u_1) + r(u_2)\right) + (1 - \alpha)(p W)(u_1)$, and $ p(u_2)  \leq \alpha \left(s(u_2) + r(u_1)\right) + (1 - \alpha)(p W)(u_2)$
for any $u\in V_G$.
\end{lemma}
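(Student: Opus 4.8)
The plan is to turn the approximate-Pagerank relation into a fixed-point equation on the double cover $H$ and then read it off in the two coordinates $u_1$ and $u_2$, using the third property of Lemma~\ref{lem:sigmaproperty} to pass from $p'=\apr(\alpha,s,r)$ to $p=\sigma\circ p'$. First, applying the Pagerank identity~\eqref{eq:defpr} (with respect to $W=W_H$) to the starting vectors $s$ and $r$ and subtracting the two instances, the definition $p'+\pr_H(\alpha,r)=\pr_H(\alpha,s)$ gives
\[
p' = \alpha(s-r) + (1-\alpha)\,p'W .
\]
Evaluating this at $u_1$ and at $u_2$ and subtracting, and then discarding the nonnegative quantities $s(u_2)$ and $r(u_1)$, yields
\[
p'(u_1)-p'(u_2) \;\le\; \alpha\bigl(s(u_1)+r(u_2)\bigr) + (1-\alpha)\bigl((p'W)(u_1)-(p'W)(u_2)\bigr).
\]

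The one nontrivial step is to bound $(p'W)(u_1)-(p'W)(u_2)$ by $(pW)(u_1)$. This is exactly where the double-cover structure enters: $(p'W)(u_1)-(p'W)(u_2)\le\max\{0,(p'W)(u_1)-(p'W)(u_2)\}=\bigl(\sigma\circ(p'W)\bigr)(u_1)$, and by the third bullet of Lemma~\ref{lem:sigmaproperty}, $\sigma\circ(p'W)\preceq(\sigma\circ p')W=pW$, so $\bigl(\sigma\circ(p'W)\bigr)(u_1)\le(pW)(u_1)$. (If one prefers a self-contained argument, expanding the lazy walk using $\deg_H(u_1)=\deg_H(u_2)=\deg_G(u)$ and the fact that the $H$-neighbours of $u_1$ are $\{v_2:v\in N_G(u)\}$ and those of $u_2$ are $\{v_1:v\in N_G(u)\}$ gives $(p'W)(u_1)-(p'W)(u_2)=\tfrac12(p'(u_1)-p'(u_2))+\tfrac12\sum_{v\in N_G(u)}\tfrac{p'(v_2)-p'(v_1)}{\deg(v)}$, and then $p'(w_1)-p'(w_2)\le p(w_1)$ and $p'(w_2)-p'(w_1)\le p(w_2)$ for each $w$ turn the right-hand side into $(pW)(u_1)$.)

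Combining the two displays gives $p'(u_1)-p'(u_2)\le\alpha(s(u_1)+r(u_2))+(1-\alpha)(pW)(u_1)$; since the right-hand side is nonnegative and $p(u_1)=(\sigma\circ p')(u_1)=\max\{0,p'(u_1)-p'(u_2)\}$, the first claimed inequality follows, and the second is obtained by interchanging the roles of the two copies of each vertex. The main obstacle is essentially conceptual rather than computational: recognising that the signed fixed-point equation $p'=\alpha(s-r)+(1-\alpha)p'W$ is legitimate (it follows just by subtracting two valid instances of~\eqref{eq:defpr}, so no extension of the definition to signed starting vectors is needed) and that the escaping-mass comparison $(p'W)(u_1)-(p'W)(u_2)\le(pW)(u_1)$ is precisely the $u_1$-coordinate of the already-established inequality $\sigma\circ(p'W)\preceq(\sigma\circ p')W$. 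Everything else is bookkeeping with the nonnegativity of $s$ and $r$ and the definition of $\sigma$.
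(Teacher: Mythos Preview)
Your proof is correct and follows essentially the same approach as the paper: both derive the fixed-point identity $p'=\alpha(s-r)+(1-\alpha)p'W$ for $p'=\apr(\alpha,s,r)$ and then invoke the third property of Lemma~\ref{lem:sigmaproperty} to pass to $p=\sigma\circ p'$. The only cosmetic difference is that the paper applies $\sigma$ to the whole expression and then splits via $\max(0,x+y)\le\max(0,x)+\max(0,y)$, whereas you first take the coordinate difference $p'(u_1)-p'(u_2)$, drop the nonnegative terms $s(u_2),r(u_1)$, and apply $\sigma$ only at the end; the underlying estimates are identical.
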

 Notice that applying the $\sigma$-operator for any vertex $u_1$ introduces a new dependency on the value of the residual vector $r$ at $u_2$.
This subtle observation demonstrates the additional complexity introduced by the $\sigma$-operator when compared with previous analysis of Pagerank-based local algorithms~\cite{ACL2006}.
 Taking account of the $\sigma$-operator, we further analyse the  Lov\'asz-Simonovits curve defined by $p$, which is a common technique in the analysis of random  walks on graphs~\cite{LS1990}: we show that if there is a set $S$ with a large value of $p(S)$, there must be a sweep set $S_j^p$ with small conductance.

\begin{lemma} \label{lem:probimpliescond}
 Let $G$ be a graph with double cover $H$, and let $p = \simplify{\apr_H(\alpha, s, r)}$ such that  $\max_{u \in V_H} \frac{r(u)}{d(u)} \leq \epsilon$.
    If there is a set of vertices $S \subset V_H$ and a constant $\delta$ such that
    $
        p(S) - \frac{\vol(S)}{\vol(V_H)} \geq \delta$,
    then there is some $j \in [1, \abs{\mathrm{supp}(p)}]$ such that
    $
        \cond[H]{\pjsweep} < 6 \sqrt{ (1 + \epsilon \vol(S)) \alpha \ln (\frac{4}{\delta})\big/\delta}$.
\end{lemma}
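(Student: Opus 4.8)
The plan is to adapt the Lov\'asz--Simonovits curve argument of Andersen, Chung and Lang~\cite{ACL2006} to the \emph{simplified} vector $p=\simplify{\apr_H(\alpha,s,r)}$, the one new ingredient being that the $\sigma$-operator forces an extra residual term into the one-step inequality (this is exactly what Lemma~\ref{lem:sapr_updatestep} records), and that term must be tracked so that it contributes only the factor $1+\epsilon\vol(S)$ in the final bound. Throughout, for a nonnegative vector $q$ on $V_H$ let $q[\cdot]:[0,\vol(V_H)]\to[0,1]$ denote its Lov\'asz--Simonovits curve: order the vertices so $q(v)/\deg(v)$ is non-increasing, set $q[\vol(\pjsweep)]=q(\pjsweep)$ at the sweep points, and interpolate linearly; then $q[\cdot]$ is concave, non-decreasing, and $q(T)\le q[\vol(T)]$ for every $T\subseteq V_H$. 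Since $\apr$ never carries more mass than $\pr_H(\alpha,s)$ and $\sigma$ only removes mass, $\norm{p}_1\le\norm{s}_1\le 1$ (in the application $s=\chi_{v_1}$), and taking $\ell_1$-norms in $\apr_H(\alpha,s,r)+\pr_H(\alpha,r)=\pr_H(\alpha,s)$ gives $\norm{r}_1=\norm{s}_1-\norm{\apr_H(\alpha,s,r)}_1\le 1$. (Equivalently one could first prove the bound for the exact $p=\simplify{\pr_H(\alpha,s)}$, where the source is purely a $\chi$-like term, and then absorb the approximation error $\pr_H(\alpha,r)$; this is the part that produces the $\epsilon\vol(S)$.)

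First I would turn Lemma~\ref{lem:sapr_updatestep} into a setwise statement. Define the ``swapped'' residual $r'\in\R_{\geq 0}^{2n}$ by $r'(u_1)=r(u_2)$, $r'(u_2)=r(u_1)$, and set $\tilde s=s+r'$. Lemma~\ref{lem:sapr_updatestep} says precisely $p\preceq \alpha\tilde s+(1-\alpha)pW$. The hypothesis $\max_{u\in V_H} r(u)/\deg(u)\le\epsilon$ gives $r'\preceq \epsilon d$, where $d$ is the degree vector of $H$; since $d$ is invariant under the lazy walk ($dW=d$), $r'W^i\preceq\epsilon d$ for all $i\ge 0$, hence $(r'W^i)(T)\le\min\{\norm{r}_1,\epsilon\vol(T)\}$ for every $T\subseteq V_H$. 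In particular, evaluated at a set of volume $\vol(S)$ the residual contribution is at most $\epsilon\vol(S)$ --- this is where the factor $1+\epsilon\vol(S)$ is born.

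Next I would invoke the standard Lov\'asz--Simonovits smoothing, which carries over unchanged because $H$ is an undirected graph with lazy walk matrix $W$: for every sweep set $\pjsweep$ with $x=\vol(\pjsweep)$ one has $(pW)[x]\le p[x]$ and $(pW)[x]\le\tfrac12\bigl(p[x-2\abs{\partial_H(\pjsweep)}]+p[x+2\abs{\partial_H(\pjsweep)}]\bigr)$. Assume, towards a contradiction, that every sweep set of $p$ has conductance at least $\phi:=6\sqrt{(1+\epsilon\vol(S))\,\alpha\,\ln(4/\delta)/\delta}$; then $\abs{\partial_H(\pjsweep)}\ge\phi\min\{x,\vol(V_H)-x\}$, so concavity of $p[\cdot]$ turns the displacement into a genuine curvature gain. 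Combining with Step~1, $p[x]\le\alpha\,\tilde s[x]+(1-\alpha)(pW)[x]$ at every sweep point (hence, by concavity, at every $x$), and unrolling $k$ times using that $W$ preserves $\preceq$:
\[
p[x]\;\le\;\alpha\sum_{i=0}^{k-1}(1-\alpha)^i (\tilde{s} W^i)[x]\;+\;(1-\alpha)^k(pW^k)[x].
\]
On the last term, $k$ applications of smoothing with the conductance bound and concavity drive it down to $\tfrac{x}{\vol(V_H)}\norm{p}_1$ plus a factor decaying like $(1-\phi^2/8)^k$; on the sum, the same smoothing applied to $\tilde sW^i=sW^i+r'W^i$ --- using $\norm{s}_1\le1$ for the first part and $r'W^i\preceq\epsilon d$ from Step~1 for the second --- keeps the excess above the diagonal bounded by a constant multiple of $1+\epsilon\vol(S)$ over the range of arguments the recursion reaches. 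Taking $k=\Theta\!\left(\phi^{-2}\ln(1/\delta)\right)$ kills the geometric term, and the stated $\phi$ is exactly the value making the accumulated $\alpha(1+\epsilon\vol(S))$-error drop below $\delta$; evaluating at $x=\vol(S)$ then gives $p[\vol(S)]<\tfrac{\vol(S)}{\vol(V_H)}+\delta$, contradicting $p(S)\le p[\vol(S)]$ together with $p(S)-\vol(S)/\vol(V_H)\ge\delta$. (We may assume $\vol(S)\le\tfrac12\vol(V_H)$, replacing $S$ by its complement otherwise.) Hence some sweep set has conductance $<\phi$, which is the claim.

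The routine part is the smoothing step, verbatim from prior work. The main obstacle is twofold: (i) showing that the extra term $\alpha r'$ introduced by $\sigma$ propagates through the unrolled recursion contributing only the clean multiplicative factor $1+\epsilon\vol(S)$ --- this is precisely what the bounds $\norm{r}_1\le1$ and $dW=d$ are for, and it is the point where the analysis departs from~\cite{ACL2006}; and (ii) the geometric-series bookkeeping --- balancing the iteration count $k$, the $(1-\phi^2/8)^k$ decay, and the accumulated $\alpha$-error --- so that everything lands on the stated $\phi$ with the constant $6$ and the logarithm $\ln(4/\delta)$ rather than a nearby value.
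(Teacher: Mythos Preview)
Your overall outline and your understanding of how the $\sigma$-operator enters---via Lemma~\ref{lem:sapr_updatestep} producing the swapped-residual term $r'$ and hence the factor $1+\epsilon\vol(S)$---are right, and the pointwise unrolling
\[
p[x]\;\le\;\alpha\sum_{i=0}^{k-1}(1-\alpha)^i(\tilde s W^i)[x]\;+\;(1-\alpha)^k(pW^k)[x]
\]
is valid. The gap is in what you do with the last term. You write that ``$k$ applications of smoothing with the conductance bound'' drive $(pW^k)[x]$ down like $(1-\phi^2/8)^k$. But the Lov\'asz--Simonovits smoothing step $(qW)[x]\le\tfrac12\bigl(q[x-\phi\bar x]+q[x+\phi\bar x]\bigr)$ needs a lower bound on the conductance of the sweep set of $q$ at that volume, and your contradiction hypothesis supplies this only for $q=p$, not for $q=pW^i$ with $i\ge 1$. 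Without such a bound the best you can say is $(pW^k)[x]\le p[x]\le 1$, and then the two pieces cannot be balanced: bounding the sum crudely by $\alpha k(1+\epsilon x)$ and the tail by $(1-\alpha)^k$, killing the latter forces $k\gtrsim \alpha^{-1}\ln(1/\delta)$, which makes the former of order $\ln(1/\delta)(1+\epsilon x)\gg\delta$. So separating the Pagerank recursion from the smoothing, as you propose, does not close.

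The paper avoids this by \emph{interleaving} the two effects into a single recursion on $p[\cdot]$ alone. Lemma~\ref{lem:ls1} combines the pointwise bound of Lemma~\ref{lem:sapr_updatestep} with exactly one smoothing step, both evaluated at sweep sets of $p$, to obtain
\[
p[k_j]\;\le\;\alpha\bigl(s[k_j]+r[k_j]\bigr)\;+\;\tfrac{1-\alpha}{2}\Bigl(p\bigl[k_j-\abs{\partial\pjsweep}\bigr]+p\bigl[k_j+\abs{\partial\pjsweep}\bigr]\Bigr),
\]
so the shift $\abs{\partial\pjsweep}$ always comes from a sweep set of $p$ and the hypothesis $\cond{\pjsweep}\ge\phi$ is always applicable. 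Lemma~\ref{lem:lscurve} then iterates this by proving, by induction on $t$, that $p[k]-k/\vol(H)\le f_t(k)$ for the explicit family $f_t(k)=\alpha t+\alpha\epsilon k t+\sqrt{\bar k}\,(1-\phi^2/8)^t$; at every stage only the curve of $p$ itself appears. Finally the proof of Lemma~\ref{lem:probimpliescond} chooses $t=\Theta(\phi^{-2}\ln(1/\delta))$ together with a constant-reweighting trick (replacing $H$ by $H_c$ so that $\vol(H_c)=(4/\delta)^2$) to land on the stated bound with the constant $6$ and the logarithm $\ln(4/\delta)$.
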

    We have now shown the two facts promised at the beginning of this subsection.
    Putting these together,
     if there is a simple set $S \subset V_H$ with low conductance then we can find a sweep set of $\sigma \circ \apr(\alpha, s, r)$ with low conductance.
    By the reduction from almost-bipartite sets in $G$ to low-conductance simple sets in $H$  our target set corresponds to a simple set $S \subset V_H$ which leads to Algorithm~\ref{alg:local_max_cut} for finding almost-bipartite sets. Our result is summarised as follows.

\begin{theorem} \label{thm:main_thm}
 Let $G$ be an $n$-vertex undirected graph, and $L, R \subset V_G$ be disjoint sets such that $\bipart{L}{R} \leq \beta$ and $\vol(L \union R) \leq \gamma$. Then, there is a set $C \subseteq L \union R$ with $\vol(C)\geq \vol(L \union R)/2$ such that, for any $v\in C$,  $\texttt{LocBipartDC}\left(G, v, \gamma, \sqrt{7560 \beta}\right)$ returns $(L', R')$ with $\bipart{L'}{R'} = \bigo{\sqrt{\beta}}$ and  ${\vol(L'\union R')} = \bigo{\beta^{-1}\gamma}$.
Moreover, the algorithm has running time $\bigo{\beta^{-1}\gamma \log n}$. 
\end{theorem}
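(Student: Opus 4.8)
The plan is to specialise the two complementary facts --- Lemma~\ref{lem:sapr_escapingmass} and Lemma~\ref{lem:probimpliescond} --- to the parameters hard-wired into \texttt{LocBipartDC}, and then carry the resulting sweep set from the double cover $H$ back to $G$ using Lemma~\ref{lem:ReductionForSimpleset}. Running the algorithm with bipartiteness argument $\beta':=\sqrt{7560\beta}$ fixes $\alpha=(\beta')^2/378=20\beta$ (we may assume $\beta$ is below a small constant so that $\alpha\in(0,1]$) and $\epsilon=1/(20\gamma)$; write $p'=\apr_H(\alpha,\chi_{v_1},r)$ for the vector returned by \texttt{ApproximatePagerankDC} and $p=\simplify{p'}$, and note the \texttt{while} loop exits only once $\max_{u\in V_H} r(u)/\deg_H(u)<\epsilon$. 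Take $S:=L\cup R$ with partition $(L,R)$; its image $\tilde{S}:=L_1\cup R_2\subset V_H$ is simple, has $\vol_H(\tilde{S})=\vol_G(L\cup R)\le\gamma$, and satisfies $\vol_H(\tilde{S})/\vol_H(V_H)\le 1/2$ because $\vol_H(V_H)=2\vol_G(V_G)\ge 2\vol_G(L\cup R)$. Applying Lemma~\ref{lem:sapr_escapingmass} to $S$ with this $\alpha$ produces the set $C:=S_\alpha\subseteq L\cup R$ with $\vol(C)\ge\vol(L\cup R)/2$ such that, for every $v\in C$,
\[
    p(\tilde{S}) \;\ge\; 1-\frac{2\bipart{L}{R}}{\alpha}-2\vol(S)\max_{u}\frac{r(u)}{\deg(u)} \;\ge\; 1-\tfrac1{10}-\tfrac1{10} \;=\; \tfrac45,
\]
using $\bipart{L}{R}\le\beta$, $\alpha=20\beta$, $\vol(S)\le\gamma$, and $\max_u r(u)/\deg(u)<1/(20\gamma)$.

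Since $p(\tilde{S})-\vol_H(\tilde{S})/\vol_H(V_H)\ge 4/5-1/2=3/10=:\delta$ and $\epsilon\vol_H(\tilde{S})\le 1/20$, Lemma~\ref{lem:probimpliescond} gives an index $j$ with
\[
    \cond[H]{\pjsweep} \;<\; 6\sqrt{\frac{(1+1/20)\cdot 20\beta\,\ln(40/3)}{3/10}} \;=\; \sqrt{2520\,\beta\,\ln(40/3)} \;\le\; \sqrt{7560\beta} \;=\; \beta',
\]
where the last inequality holds because $\ln(40/3)<3$. Hence the \texttt{for} loop of \texttt{LocBipartDC} halts and returns the sweep set $S^{\ast}:=\pjsweep$ with $\cond[H]{S^{\ast}}\le\beta'=\bigo{\sqrt\beta}$. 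Because $p=\simplify{p'}$, for each $u$ at most one of $u_1,u_2$ lies in $\mathrm{supp}(p)$, so $S^{\ast}\subseteq\mathrm{supp}(p)$ is simple, and Lemma~\ref{lem:ReductionForSimpleset} shows the returned pair $(L',R')$ satisfies $\bipart{L'}{R'}=\cond[H]{S^{\ast}}=\bigo{\sqrt\beta}$.

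It remains to bound the volume and running time --- the routine part. Every coordinate $(u,i)$ with $p'_i(u)>0$ was the target of a \texttt{dcpush} performed while $r_i(u)\ge\epsilon\deg(u)$, and that step added $\alpha r_i(u)\ge\alpha\epsilon\deg(u)$ permanently to $p'_i(u)$; together with $\|p'\|_1=1-\|r\|_1\le 1$ (from the invariant $p'+\pr_H(\alpha,r)=\pr_H(\alpha,\chi_{v_1})$) this yields $\vol_H(\mathrm{supp}(p'))\le 1/(\alpha\epsilon)=\beta^{-1}\gamma$, so $\vol_G(L'\cup R')=\vol_H(S^{\ast})\le\vol_H(\mathrm{supp}(p))\le\beta^{-1}\gamma$. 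For the running time, a single \texttt{dcpush} on $(u,i)$ costs $\bigo{\deg(u)}$ and decreases $\|r\|_1$ by exactly $\alpha r_i(u)\ge\alpha\epsilon\deg(u)$; as $\|r\|_1$ starts at $1$ and stays nonnegative, the pushes cost $\bigo{1/(\alpha\epsilon)}=\bigo{\beta^{-1}\gamma}$ overall, and the concluding sweep --- sorting $\mathrm{supp}(p)$ by $p(w)/\deg(w)$ and incrementally maintaining the volume and boundary size of the sweep prefixes --- adds $\bigo{\beta^{-1}\gamma\log n}$, which dominates. I expect the only genuine work to be bookkeeping: confirming that the hard-wired constants ($\alpha=20\beta$, $\epsilon=1/(20\gamma)$) and the choice $\delta=3/10$ bring the Lemma~\ref{lem:probimpliescond} bound below $\sqrt{7560\beta}$, and re-running the ACL-style $\ell_1$ accounting for the modified \texttt{dcpush} whose residual update acts on $(v,3-i)$ rather than $(v,i)$ --- this last part goes through unchanged since the per-push drop in $\|r\|_1$ is still exactly $\alpha r_i(u)$.
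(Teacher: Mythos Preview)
Your proposal is correct and follows essentially the same approach as the paper: you instantiate $\alpha=20\beta$, $\epsilon=1/(20\gamma)$, invoke Lemma~\ref{lem:sapr_escapingmass} to get $p(L_1\cup R_2)\ge 4/5$, feed $\delta=3/10$ into Lemma~\ref{lem:probimpliescond} to obtain a sweep set of conductance below $\sqrt{7560\beta}$, and then transfer back via Lemma~\ref{lem:ReductionForSimpleset}. The only cosmetic difference is that you inline the $\ell_1$-accounting for \texttt{ApproximatePagerankDC} (which the paper packages separately as Lemma~\ref{lem:aprdc}), and you cite Lemma~\ref{lem:ReductionForSimpleset} rather than Lemma~\ref{lem:cond_bip} for the final step --- both are equivalent here since the sweep set is simple.
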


The quadratic approximation guarantee in Theorem~\ref{thm:main_thm} matches the state-of-the-art local algorithm for finding a single set with low conductance~\cite{AGP+2016}.
Furthermore,
 our result presents a significant improvement over the previous state-of-the-art by Li and Peng~\cite{LP2013}, whose design is based on an entirely different technique than ours. For any $\epsilon \in [0, 1/2]$, their algorithm runs in time $\bigo{\epsilon^2 \beta^{-2}\gamma^{1 + \epsilon}\log^3\gamma}$ and returns a set with volume $\bigo{\gamma^{1 + \epsilon}}$ and bipartiteness ratio $\bigo{\sqrt{\beta / \epsilon}}$. In particular, their algorithm requires much higher time complexity in order to guarantee the same bipartiteness ratio $O\left(\sqrt{\beta}\right)$.
 

\section{The Algorithm   for Digraphs} \label{sec:directed}
We now turn our attention to
local algorithms for
 finding densely-connected clusters
in digraphs.
In comparison with undirected graphs, we are interested in   finding disjoint   $L,R \subset V$  of some digraph $G=(V,E)$  such that most of the edges adjacent to  $L\cup R$ are \emph{from $L$ to $R$}. To formalise this,
we define the \emph{flow ratio} from $L$ to $R$ as \[
 F(L,R) \triangleq 1  - \frac{2e(L,R)}{\vol_{\mathrm{out}}(L) + \vol_{\mathrm{in}}(R)},
 \]
 where $e(L,R)$ is the number of directed edges from $L$ to $R$. Notice that  we take not only edge densities but also edge directions into account:
 a low   $F(L,R)$-value tells us that almost all edges with their tail in $L$ have their head in $R$, and conversely almost all edges with their head in $R$ have their tail in $L$. One could also  see $F(L, R)$ as a generalisation of $\beta(L,R)$.
 In particular, if we view an undirected graph as a digraph by replacing each edge with two directed edges, then
 $\beta(L, R) = F(L, R)$. In this section, we will present a local algorithm for finding   such vertex sets in a digraph, and analyse the algorithm's performance.

\subsection{The Reduction  by Semi-Double Cover}

Given  a digraph $G=(V_G,E_G)$, we construct its \emph{semi-double cover} $H=(V_H, E_H)$ as follows:
(1) every vertex $v\in V_G$ has two corresponding vertices $v_1, v_2\in V_H$; (2) for every edge $(u,v)\in E_G$, we add the edge $\{u_1, v_2\}$ in $E_H$, see Figure~\ref{fig:directedconstruct} for illustration. 
\footnote{We remark that this reduction was also used by Anderson~\cite{Andersen2010} for finding dense components in a digraph.}
It is worth comparing this reduction with the one for undirected graphs:  
\begin{itemize} 
    \item For undirected graphs, we apply the standard double cover and every undirected edge in $G$ corresponds to two edges in the double cover $H$;
    \item For digraphs, every directed edge in $G$ corresponds to \emph{one} undirected edge in  $H$. This \emph{asymmetry} would allow us to ``recover'' the direction of any edge in $G$.
\end{itemize}
\vspace{-0.2cm}

 \begin{figure}[h]
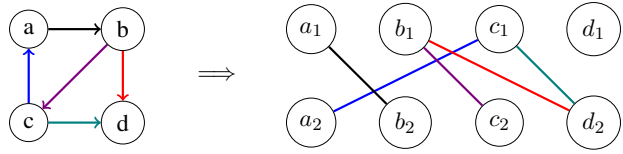

     \centering
 \resizebox{3.5\totalheight}{!}{    \tikzfig{dcconstruct}}
     \caption{\small{An example of  a digraph  and its semi-double cover.}}
     \label{fig:directedconstruct}
 \end{figure}

We follow the use of $S_1, S_2$ from Section~\ref{sec:undirected}:  
for any $S\subset V_G$,
we define $S_1\subset V_H$ and $S_2\subset V_H$ by $S_1\triangleq \{v_1~|~ v\in S\}$ and $S_2\triangleq \{v_2~|~ v\in S\}$. 
The lemma below shows the connection between the value of $F_G(L,R)$ for any   $L,R$ and $\Phi_H(L_1\cup R_2)$.
 
\begin{lemma} \label{lem:flow_cond}
Let $G$ be a digraph with semi-double cover $H$. Then, it holds  for any $L, R \subset V_G$
that 
   $
        F_G(L, R) = \cond[H]{L_1 \union R_2}$.
    Similarly, for any simple set $S \subset V_H$, let $L = \{u : u_1 \in S\}$ and $R = \{u : u_2 \in S\}$. Then, it holds that $F_G(L, R) = \Phi_H(S)$.
\end{lemma}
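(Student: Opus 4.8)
The plan is to prove the identity by directly computing the volume and the edge boundary in $H$ of the set $T = L_1 \cup R_2$, and re‑expressing both in terms of quantities of $G$. First I would record the degrees in the semi‑double cover: for any $u \in V_G$ the $H$‑neighbours of $u_1$ are exactly the copies $v_2$ with $(u,v) \in E_G$, so $\deg_H(u_1) = \deg_{\mathrm{out}}(u)$, and symmetrically $\deg_H(v_2) = \deg_{\mathrm{in}}(v)$. Since every edge of $H$ joins a copy of the form $w_1$ to one of the form $w_2$, the graph $H$ is bipartite; hence $L_1$ (inside the first side) and $R_2$ (inside the second) are disjoint, and summing degrees gives $\vol_H(T) = \vol_{\mathrm{out}}(L) + \vol_{\mathrm{in}}(R)$. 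The same computation gives $\vol_H(V_H) = \vol_{\mathrm{out}}(V_G) + \vol_{\mathrm{in}}(V_G) = 2\lvert E_G\rvert$.

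Next I would count the edges of $H$ with both endpoints in $T$. Each edge of $H$ is $\{u_1, v_2\}$ for a unique $(u,v) \in E_G$; because $L_1$ lies in the first side and $R_2$ in the second, $u_1 \in T$ iff $u \in L$ and $v_2 \in T$ iff $v \in R$. Thus $\{u_1, v_2\}$ lies inside $T$ exactly when $(u,v)$ is an edge from $L$ to $R$, so if $k$ denotes the number of edges of $H$ with both endpoints in $T$ then $k = e(L,R)$. Plugging this into the elementary identity $\lvert \partial_H(T) \rvert = \vol_H(T) - 2k$ yields
\[
\lvert \partial_H(T) \rvert = \vol_{\mathrm{out}}(L) + \vol_{\mathrm{in}}(R) - 2\, e(L,R).
\]

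It then remains to divide by $\min\{\vol_H(T),\, \vol_H(V_H) - \vol_H(T)\}$. Here the one point requiring care is that $\vol_H(T)$ is the smaller of the two sides, i.e. $\vol_{\mathrm{out}}(L) + \vol_{\mathrm{in}}(R) \le \lvert E_G\rvert$; this is exactly the regime in which the algorithm operates (target sets of bounded volume), and for such sets we obtain $\Phi_H(T) = 1 - 2\,e(L,R)/(\vol_{\mathrm{out}}(L) + \vol_{\mathrm{in}}(R)) = F_G(L,R)$, which is the first claim. For the second claim, if $S \subset V_H$ is simple then $L = \{u : u_1 \in S\}$ and $R = \{u : u_2 \in S\}$ are disjoint — a vertex of $L \cap R$ would place both its copies in $S$ — and moreover $S = L_1 \cup R_2$, so the first claim applies unchanged and gives $F_G(L,R) = \Phi_H(S)$. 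Apart from the $\min$ subtlety, the only thing to watch is keeping the bipartite bookkeeping consistent so that neither $\vol_H(T)$ nor $\lvert\partial_H(T)\rvert$ is double‑counted; that is the step I would check most carefully, though I do not expect any genuine difficulty.
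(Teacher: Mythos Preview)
Your proposal is correct and follows essentially the same argument as the paper: both compute $\vol_H(L_1\cup R_2)=\vol_{\mathrm{out}}(L)+\vol_{\mathrm{in}}(R)$, identify the internal $H$-edges with directed $G$-edges from $L$ to $R$, and use $|\partial_H(T)|=\vol_H(T)-2e_H(T,T)$ to conclude. If anything, your treatment is slightly more careful than the paper's, which silently drops the $\min$ in the denominator of $\Phi_H$; your remark that the identity holds under the standing assumption $\vol_H(T)\le \tfrac{1}{2}\vol_H(V_H)$ is the right caveat.
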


\subsection{Design and Analysis of the Algorithm}
 Our presented algorithm is a modification of the algorithm by Andersen and Peres~\cite{AC2009}.
 Given a digraph $G$ as input, our algorithm simulates the volume-biased ESP on $G$'s semi-double cover $H$. Notice that the graph $H$ can be constructed locally in the same way as the local construction of the double cover. However, as the output set 
$S$ of an ESP-based algorithm is not necessarily simple, our algorithm only returns vertices $u\in V_G$ in which \emph{exactly} one of $u_1$ and $u_2$ are included in $S$.
The key procedure for our algorithm is given in Algorithm~\ref{alg:directed_evo_cut},   in which the \texttt{GenerateSample} procedure is the one described at the end of Section~\ref{sec:prelim}.

\begin{algorithm}[h]
   \caption{\texttt{EvoCutDirected}~(\texttt{ECD})}
   \label{alg:directed_evo_cut}
\begin{algorithmic}
   \STATE {\bfseries Input:} Starting vertex $u$, $i \in \{1, 2\}$,  target flow ratio $\phi$
   \STATE {\bfseries Output:} A pair of sets $L, R\subset V_G$
   \STATE Set $T = \lfloor (100 \phi^{\frac{2}{3}})^{-1}\rfloor$. 
   \STATE Compute $S = \texttt{GenerateSample}_H(u_i, T)$
   \STATE Let $L = \{u \in V_G : u_1 \in S$ and $u_2 \not \in S\}$
   \STATE Let $R = \{u \in V_G : u_2 \in S$ and $u_1 \not \in S\}$ 
   
   \Return $L$ and $R$
\end{algorithmic}
\end{algorithm}

Notice that in our constructed graph $H$,
$\Phi(L_1 \union R_2) \leq \phi$ does not imply that $\Phi(L_2 \union R_1) \leq \phi$.
Due to this asymmetry, Algorithm~\ref{alg:directed_evo_cut} takes a parameter $i \in \{1, 2\}$ to indicate whether the starting vertex is in $L$ or $R$.
If it is not known whether $u$ is in $L$ or $R$,
two copies
can be run in parallel, one with $i = 1$ and the other with $i = 2$.
Once one of them  terminates with the performance guaranteed   in Theorem~\ref{thm:directedresult}, the other
can be terminated. 
Hence,
we can  always assume that it is known whether the starting vertex $u$ is in $L$ or $R$.



Now we sketch the analysis of the algorithm. Notice that, since  the evolving set process gives us an arbitrary set on the semi-double cover,  in Algorithm~\ref{alg:directed_evo_cut} we convert this into a simple set by 
removing any vertices $u$ where $u_1 \in S$ and $u_2 \in S$. The following definition allows us to discuss sets which are close to being simple.
\begin{definition}[$\epsilon$-simple set] For any set $S\subset V_H$, let $P=\{u_1,u_2: u\in V_G, u_1\in S~\mbox{and}~u_2\in S\}$. We call set $S$ $\epsilon$-simple if it holds that
   $
        \frac{\vol(P)}{\vol(S)} \leq \epsilon.
    $
\end{definition}
 The notion of $\epsilon$-simple sets measures the ratio of vertices
 in which both $u_1$ and $u_2$ are in $S$.
 In particular,  any simple set defined in Definition~\ref{def:simple} is $0$-simple.
 We show  that, for any $\epsilon$-simple set $S\subset V_H$, one can construct a simple set $S'$ such that $\cond{S'} \leq \frac{1}{1 - \epsilon}\cdot  \left(\cond{S} + \epsilon\right)$.
Therefore, in order to guarantee that $\Phi(S')$ is small, we need to construct $S$ such that $\Phi(S)$ is small and $S$ is $\epsilon$-simple for small $\epsilon$.
 Because of this,
 our presented algorithm uses a lower value of $T$ than the algorithm  in Andersen and Peres~\cite{AC2009}; this allows us to better control  $\vol(S)$ at the cost of a slightly worse approximation guarantee. Our algorithm's performance is summarised   in Theorem~\ref{thm:directedresult}. 

\begin{theorem} \label{thm:directedresult}
 Let $G$ be an $n$-vertex digraph, and $L, R \subset V_G$ be disjoint sets such that $F(L, R) \leq \phi$ and $\vol(L \cup R)\leq \gamma$. There is a set $C \subseteq L \cup R$ with $\vol(C) \geq \vol(L \cup R)/2$ such that, for any $v\in C$ and some $i \in \{1, 2\}$,
$\texttt{EvoCutDirected}(G, v, i, \phi)$ returns $(L', R')$ such that  $F(L', R') = \bigo{\phi^{\frac{1}{3}} \log^{\frac{1}{2}} n }$ and  ${\vol(L'\union R')} =O\left( (1 - \phi^{\frac{1}{3}})^{-1} \gamma\right)$.
Moreover, the algorithm has running time $\bigo{ \phi^{-\frac{1}{2}}\gamma \log^{\frac{3}{2}} n}$.
\end{theorem}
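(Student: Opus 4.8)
The plan is to push the whole problem onto the undirected semi-double cover $H$, run the local evolving-set algorithm there, and translate back through the ``simple set'' dictionary. By Lemma~\ref{lem:flow_cond}, the target pair $(L,R)$ corresponds to $A := L_1 \union R_2 \subseteq V_H$, and since $\deg_H(u_1) = \deg_{\mathrm{out}}(u)$ and $\deg_H(u_2) = \deg_{\mathrm{in}}(u)$ we have $\cond[H]{A} = F(L,R) \le \phi$ and $\vol_H(A) = \vol_{\mathrm{out}}(L) + \vol_{\mathrm{in}}(R) \le \gamma$; moreover $A$ is simple (at most one copy of each $u \in V_G$ lies in $A$). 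So it suffices to (a) run the volume-biased ESP on $H$ from a good vertex of $A$ and obtain a set $S = S_T$ of small conductance that is mostly contained in $A$; (b) argue that $S$ is $\epsilon$-simple for a small $\epsilon$; and (c) convert $S$ to the algorithm's de-doubled output $(L',R')$ via the $\epsilon$-simple correction together with the simple-set direction of Lemma~\ref{lem:flow_cond}.

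For (a) I would invoke the Andersen--Peres analysis of \texttt{GenerateSample}~\cite{AC2009} applied to $H$: there is a subset $A_g \subseteq A$ with $\vol_H(A_g) \ge \vol_H(A)/2$ such that, for every $w \in A_g$, with at least a constant probability the sampled set $S := \texttt{GenerateSample}_H(w, T)$ satisfies $\cond[H]{S} = \bigo{\sqrt{\log(\vol_H(V_H))/T}} = \bigo{\sqrt{\log n / T}}$ (using $\vol_H(V_H) = 2\abs{E_G} \le 2n^2$) together with a leak bound $\vol_H(S \setminus A) \le c\,\phi T\,\vol_H(S)$ --- the latter because $\cond[H]{A} \le \phi$ forces the volume-biased ESP to push mass out of $A$ at rate $\bigo{\phi}$ per step, and for good starting vertices the escaped volume stays proportional to $\vol_H(S)$ over the first $T$ steps with constant probability. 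Since then $\vol_H(S \cap A) \ge (1 - c\phi T)\vol_H(S)$ and $\vol_H(S \cap A) \le \vol_H(A) \le \gamma$, this also gives $\vol_H(S) \le \gamma / (1 - c\phi T)$. Finally I set $C := \{u : u_1 \in A_g \text{ or } u_2 \in A_g\}$ with $i$ the index of the copy of $u$ inside $A_g$; simplicity of $A$ makes this unambiguous, and $\vol(C) \ge \vol(L \union R)/2$.

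Step (b) is where the cube root enters. Because $A$ is simple, any $u \in V_G$ with $u_1, u_2 \in S$ has at least one of its copies in $S \setminus A$, so a short calculation converts the leak bound into $\vol_H(P) = \bigo{\phi T}\vol_H(S)$, i.e.\ $S$ is $\epsilon$-simple with $\epsilon = \bigo{\phi T}$. Now take $T = \lfloor (100\phi^{2/3})^{-1} \rfloor = \Theta(\phi^{-2/3})$, which balances the two competing quantities: $\sqrt{\log n / T} = \bigo{\phi^{1/3}\sqrt{\log n}}$ while $\epsilon = \bigo{\phi^{1/3}}$. For step (c), the set obtained by removing from $S$ both copies of every $u$ with $u_1, u_2 \in S$ is exactly $S' := S \setminus P = L'_1 \union R'_2$, where $(L',R')$ is the algorithm's output, and $S'$ is simple. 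The $\epsilon$-simple correction gives $\cond[H]{S'} \le (1-\epsilon)^{-1}(\cond[H]{S} + \epsilon) = \bigo{\phi^{1/3}\sqrt{\log n}}$, and by the simple-set direction of Lemma~\ref{lem:flow_cond}, $F(L',R') = \cond[H]{S'} = \bigo{\phi^{1/3}\sqrt{\log n}}$. For the volume, $\vol_{\mathrm{out}}(L') + \vol_{\mathrm{in}}(R') = \vol_H(S') \le \vol_H(S) = \bigo{(1-\phi^{1/3})^{-1}\gamma}$.

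The running time is that of \texttt{GenerateSample}$_H$ run for $T = \Theta(\phi^{-2/3})$ steps with all sampled sets of volume $\bigo{\gamma}$; by the Andersen--Peres bookkeeping this costs $\bigo{\phi^{-1/2}\gamma\log^{3/2} n}$, and simulating $H$ locally inside $G$ adds only $\bigo{1}$ overhead per elementary operation. I expect the main obstacle to be step (b): the ESP analysis only directly controls the leaked \emph{volume} $\vol_H(S \setminus A)$, and one must carefully pass from this to control of $\vol_H(P)$, which couples the two copies of each vertex, and then verify that the choice $T \asymp \phi^{-2/3}$ really does keep $\cond[H]{S}$ and $\epsilon$ simultaneously of order $\phi^{1/3}$ (up to the $\sqrt{\log n}$ factor). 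This trade-off is precisely what degrades the guarantee from the square root achievable for a single low-conductance set to a cube root. A minor supporting point is that the reduction of Lemma~\ref{lem:flow_cond} implicitly needs $A$ to be the smaller side of $H$, $\vol_H(A) \le \vol_H(V_H)/2$, which holds whenever $\gamma$ is small relative to $\abs{E_G}$, the regime of interest.
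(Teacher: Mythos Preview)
Your proposal is correct and follows essentially the same route as the paper's proof. The paper packages your step~(a) as Lemma~\ref{lem:esp_tightvol} (combining Propositions~\ref{prop:esp_cond}--\ref{prop:esp_vol} at $T=(100\phi^{2/3})^{-1}$), makes the passage from the leak bound to $\epsilon$-simplicity explicit via $\vol(P)\le 2\,\vol(S\setminus A)$, and then applies Lemma~\ref{lem:lazysimplify} and Lemma~\ref{lem:flow_cond} exactly as you outline; the balance $T\asymp\phi^{-2/3}$ and the resulting cube-root guarantee are identical.
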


To the best of our knowledge, this is the first local algorithm for digraphs that approximates a pair of densely connected clusters,
and demonstrates that finding such a pair appears to be much easier than finding a low-conductance set in a digraph; in particular, existing local algorithms for finding a low-conductance set require the stationary distribution of the random walk in the digraph~\cite{AC2007}, the sublinear-time computation of which is unknown~\cite{CKP+2017}. However, knowledge of the stationary distribution is not needed for our algorithm.



 
\paragraph{Further Discussion.}
It is important to note that the semi-double cover construction is able to handle directed graphs which contain edges between two vertices $u$ and $v$ in both directions.
In other words, the adjacency matrix of the digraph need not be skew-symmetric.
This is an advantage of our approach over previous methods~(e.g., \cite{CLS+2020}), and it would be a meaningful research direction to identify the benefit this gives our developed reduction.

It is also insightful to discuss why Algorithm~\ref{alg:local_max_cut} cannot be applied for digraphs, although the input digraph is translated into an \emph{undirected} graph by our reduction.
This is because, when translating a digraph into a bipartite undirected graph, the third property of the $\sigma$-operator in Lemma~\ref{lem:sigmaproperty} no longer holds, since the existence of the edge $\{u_1, v_2\}\in E_H$ does not necessarily imply that $\{u_2, v_1\}\in E_H$. Indeed,
Figure~\ref{fig:counterexample} gives a counterexample in which $\left(\sigma \circ (p W)\right)(u) \not \leq \left((\sigma \circ p)W\right)(u)$.
This means that the typical analysis of a Pagerank vector with the Lov\'asz-Simonovitz curve cannot be applied anymore.
In our point of view, constructing some operator similar to our $\sigma$-operator and designing a  Pagerank-based local algorithm for digraphs based on such an operator is a very interesting open question, and may help to close the gap in the approximation guarantee between the undirected and directed cases.
 
\begin{figure}[h]
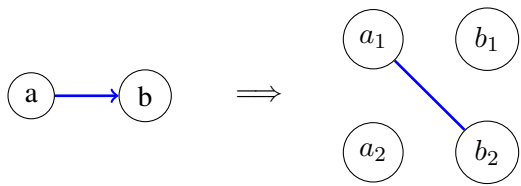
 
    \centering
    \tikzfig{counterexample}
    \caption{\small Consider the
    digraph and its semi-double cover above. Suppose $p(a_1) = p(a_2) = 0.5$ and $p(b_1) = p(b_2) = 0$. It is straightforward to check that $(\sigma \circ (p W))(b_2) = 0.25$ and $((\sigma \circ p) W)(b_2) = 0$.}
    \label{fig:counterexample}
\end{figure}

 In addition,
we  underline that  one  cannot apply the tighter analysis of the ESP process given by Andersen et al.\ \cite{AGP+2016} to our algorithm.
The key to their analysis is an
improved bound on the probability that a random walk escapes from the target cluster.
In order to take advantage of this, they use a larger value of $T$ in the algorithm which relaxes the guarantee on the volume of the output set.
Since our analysis relies on a very tight guarantee on the overlap of the output set with the target set, we cannot use their improvement in our setting.



\section{Experiments} \label{sec:experiments}

In this section we evaluate the performance of our proposed algorithms on both synthetic and real-world data sets. 
For undirected graphs, we compare the performance of our algorithm against the previous state-of-the-art~\cite{LP2013}, through the synthetic dataset with various parameters and apply the real-world dataset to demonstrate the significance of our algorithm.
For directed graphs, we compare the performance of our algorithm with the state-of-the-art \emph{non-local} algorithm since, to the best of our knowledge, our local algorithm for digraphs is the first such algorithm in the literature.
All experiments were performed on a Lenovo Yoga 2 Pro with an Intel(R) Core(TM) i7-4510U CPU @ 2.00GHz processor and 8GB of RAM.
 Our code can be downloaded from
\href{https://github.com/pmacg/local-densely-connected-clusters}{https://github.com/pmacg/local-densely-connected-clusters}.





\subsection{Results for Undirected Graphs}
\subsubsection{Experiments on Synthetic Data}
We first compare the performance of our algorithm, \texttt{LocBipartDC}, with the previous state-of-the-art given by  Li \& Peng \cite{LP2013}, which we refer to as \texttt{LP}, on synthetic graphs generated from the stochastic block model (SBM).
Specifically, we assume that the graph has $k=3$ clusters $\{C_j\}_{j=1}^3$, and the number of vertices in each cluster, denoted by $n_1, n_2$ and $n_3$ respectively, satisfy $n_1=n_2=0.1n_3$. Moreover, any pair of vertices $u\in C_i$ and $v\in C_j$ is connected with probability $P_{i,j}$. We   assume that $P_{1,1}=P_{2,2}=p_1$, $P_{3,3}=p_2$, $P_{1,2}=q_1$, and $P_{1,3}=P_{2,3}=q_2$.
 Throughout our experiments, we maintain the ratios $p_2 = 2 p_1$ and $q_2 = 0.1 p_1$, leaving the parameters $n_1$, $p_1$ and $q_1$ free. Notice that the different values of $q_1$ and $q_2$ guarantee that $C_1$ and $C_2$ are the ones optimising  the $\beta$-value, which is why our proposed model  is  slightly more involved than the standard SBM. 




 We evaluate the quality of the output $(L,R)$ returned by each algorithm with respect to its $\beta$-value, the   Adjusted Rand Index (ARI)~\cite{gatesImpactRandomModels2017}, as well as the ratio of the misclassified vertices defined by $\frac{\cardinality{L \triangle C_1} + \cardinality{R \triangle C_2}}{\cardinality{L \union C_1} + \cardinality{R \union C_2}}$, where $A \triangle B$ is the symmetric difference between $A$ and $B$.
All our reported results are the average performance of each algorithm over $10$ runs, in which a random vertex from $C_1\cup C_2$ is chosen as the starting vertex of the algorithm.

\paragraph{Setting the $\epsilon$ parameter in the \texttt{LP} algorithm.}

The \texttt{LP} algorithm has an additional parameter over ours, which we refer to as $\epsilon_{LP}$. This parameter influences the runtime and performance of the algorithm and must be in the range $[0, 1/2]$.
In order to choose a fair value of $\epsilon_{LP}$ for comparison with our algorithm, we consider several values on graphs with a range of target volumes.

We generate graphs from the SBM such that $p_1 = 1 / n_1$ and $q_1 = 100 p_1$ and vary the size of the target set by varying $n_1$ between $100$ and $10,000$.
For values of $\epsilon_{LP}$ in $\{0.01, 0.02, 0.1, 0.5\}$, Figure~\ref{fig:choosing_eps}(a) shows how the runtime of the \texttt{LP} algorithm compares to the \texttt{LocBipartDC} algorithm for a range of target volumes.
In this experiment, the runtime of the \texttt{LocBipartDC} algorithm lies between the runtimes of the \texttt{LP} algorithm for $\epsilon_{LP} = 0.01$ and $\epsilon_{LP} = 0.02$.
However, Figure~\ref{fig:choosing_eps}(b) shows that the performance of the \texttt{LP} algorithm with $\epsilon_{LP} = 0.01$ is significantly worse than the performance of the \texttt{LocBipartDC} algorithm and so for a fair comparison, we set $\epsilon_{LP} = 0.02$ for the remainder of the experiments.

\begin{figure}[h]
\centering
\subfigure[]{\includegraphics[width=0.3\columnwidth]{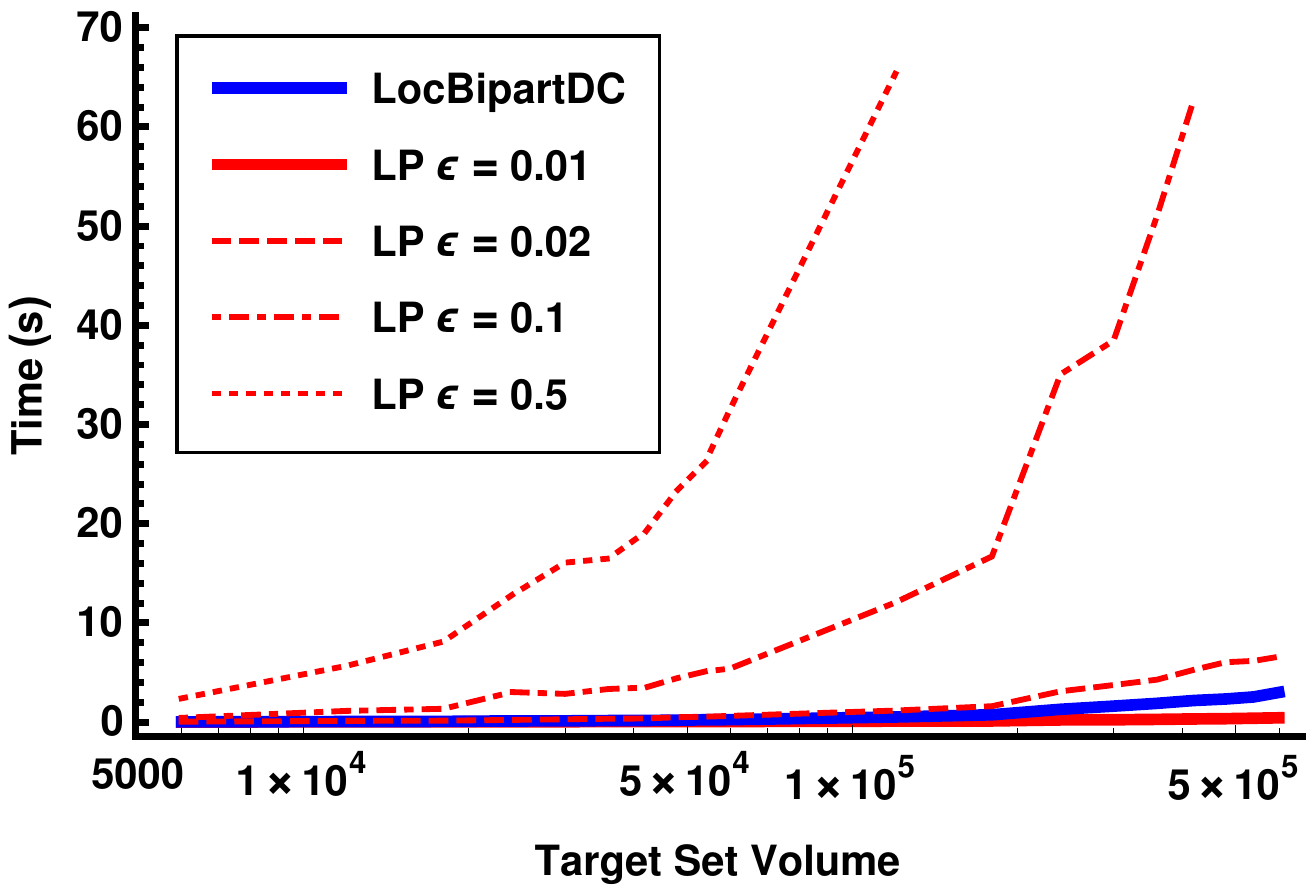}}
\hspace{1in}
\subfigure[]{\includegraphics[width=0.3\columnwidth]{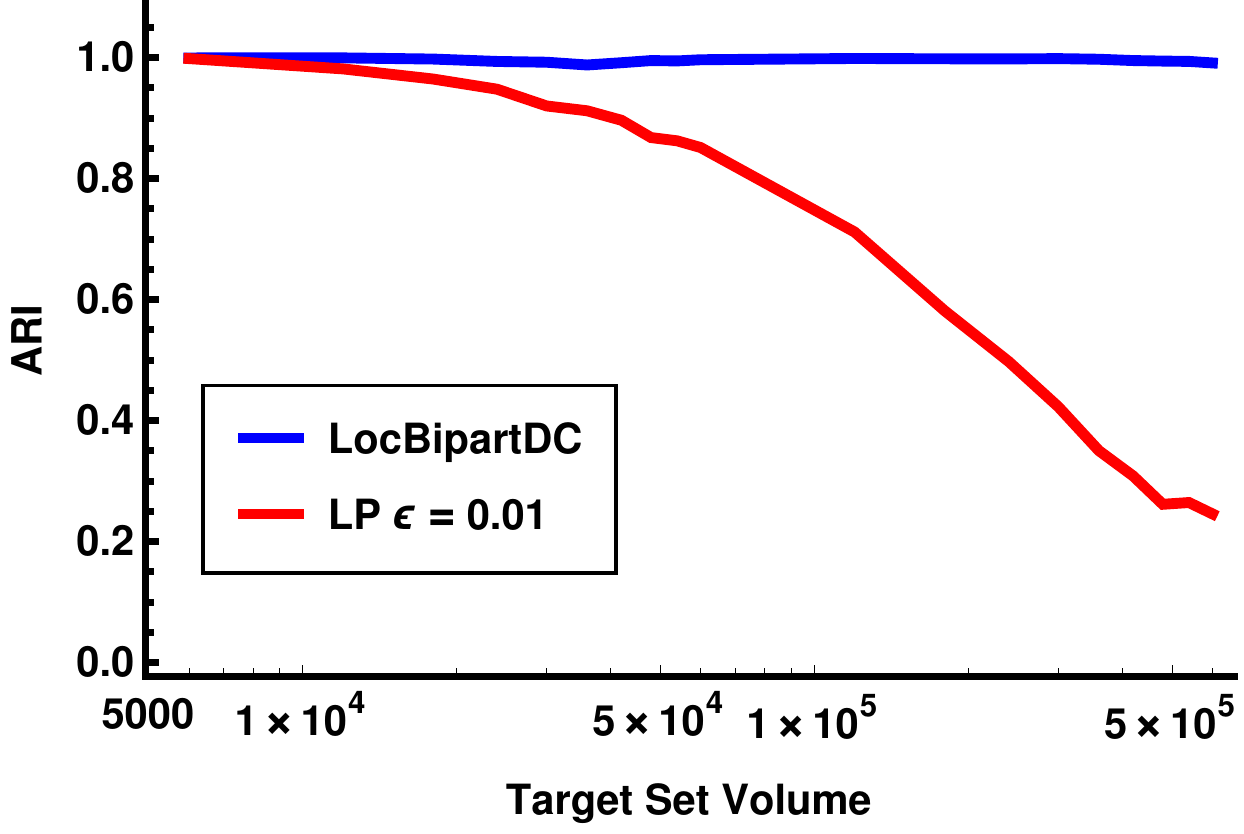}}
\caption{\small{(a) A comparison of the runtimes of the \texttt{LocBipartDC} algorithm and the \texttt{LP} algorithm with various values of $\epsilon_{LP}$.
(b) Setting $\epsilon_{LP} = 0.01$ results in significantly worse performance than the $\texttt{LocBipartDC}$ algorithm.
\label{fig:choosing_eps}}}
\end{figure}

\paragraph{Comparison on the Synthetic Data.}
\begin{table*}[t]
\caption{\small Full comparison between    Algorithm~\ref{alg:local_max_cut}~(\texttt{LBDC}) and the previous state-of-the-art~\cite{LP2013} (\texttt{LP}). For clarity we report the target bipartiteness $\beta = \beta(C_1, C_2)$ and target volume $\gamma = \vol(C_1 \union C_2)$ along with the SBM parameters.}
\label{tab:full_results}
\vskip 0.1in
\centering
\begin{small}
\begin{sc}
\begin{tabular}{cccccc}
\toprule
Input graph parameters & Algo. & Runtime & $\beta$-value & ARI & Misclassified Ratio\\
\midrule
$n_1 = 1,000$, $p_1 = 0.001$, $q_1 = 0.018$ & \texttt{LBDC} & \textbf{0.09} & \textbf{0.154} & \textbf{0.968} & \textbf{0.073} \\
$\beta \approx 0.1$, $\gamma \approx 40,000$ & \texttt{LP} & 0.146 & 0.202 & 0.909 & 0.138 \\
\midrule
$n_1 = 10,000$, $p_1 = 0.0001$, $q_1 = 0.0018$ & \texttt{LBDC} & \textbf{0.992} & \textbf{0.215} & \textbf{0.940} & \textbf{0.145} \\
$\beta \approx 0.1$, $\gamma \approx 400,000$ & \texttt{LP} & 1.327 & 0.297 & 0.857 & 0.256 \\
\midrule
$n_1 = 100,000$, $p_1 = 0.00001$, $q_1 = 0.00018$ & \texttt{LBDC} & \textbf{19.585} & \textbf{0.250} & \textbf{0.950} & \textbf{0.166} \\
$\beta \approx 0.1$, $\gamma \approx 4,000,000$ & \texttt{LP} & 30.285 & 0.300 & 0.865 & 0.225 \\
\midrule
$n_1 = 1,000$, $p_1 = 0.004$, $q_1 = 0.012$ & \texttt{LBDC} & \textbf{1.249} & \textbf{0.506} & \textbf{0.503} & \textbf{0.763} \\
$\beta \approx 0.4$, $\gamma \approx 40,000$ & \texttt{LP} & 1.329 & 0.597 & 0.445 & 0.785 \\
\bottomrule
\end{tabular}
\end{sc}
\end{small}
\vskip -0.1in
\end{table*}
We now compare the \texttt{LocBipartDC} and \texttt{LP} algorithms'
performance on graphs generated from the SBM with different values of $n_1,p_1$ and $q_1$.
As shown in Table~\ref{tab:full_results}, our algorithm not only runs faster, but also produces better clusters with respect to all three metrics.
Secondly, since the clustering task becomes more challenging when the target clusters have higher $\beta$-value,
we compare the algorithms' performance on a sequence of instances with increasing value of $\beta$.
 Since $q_1/p_1 = 2(1-\beta)/\beta$, we simply fix the values of  $n_1, p_1$ as $n_1=1,000,p_1=0.001$, and generate graphs with increasing value of $q_1/p_1$; this gives us graphs with monotone values of $\beta$.
 As shown in Figure~\ref{fig:bipvsari}, our algorithms performance is always better than the previous stat-of-the-art.

\begin{figure}[htpb]
\centering
    \subfigure[] {\label{fig:bipvsari}\includegraphics[width=0.3\columnwidth]{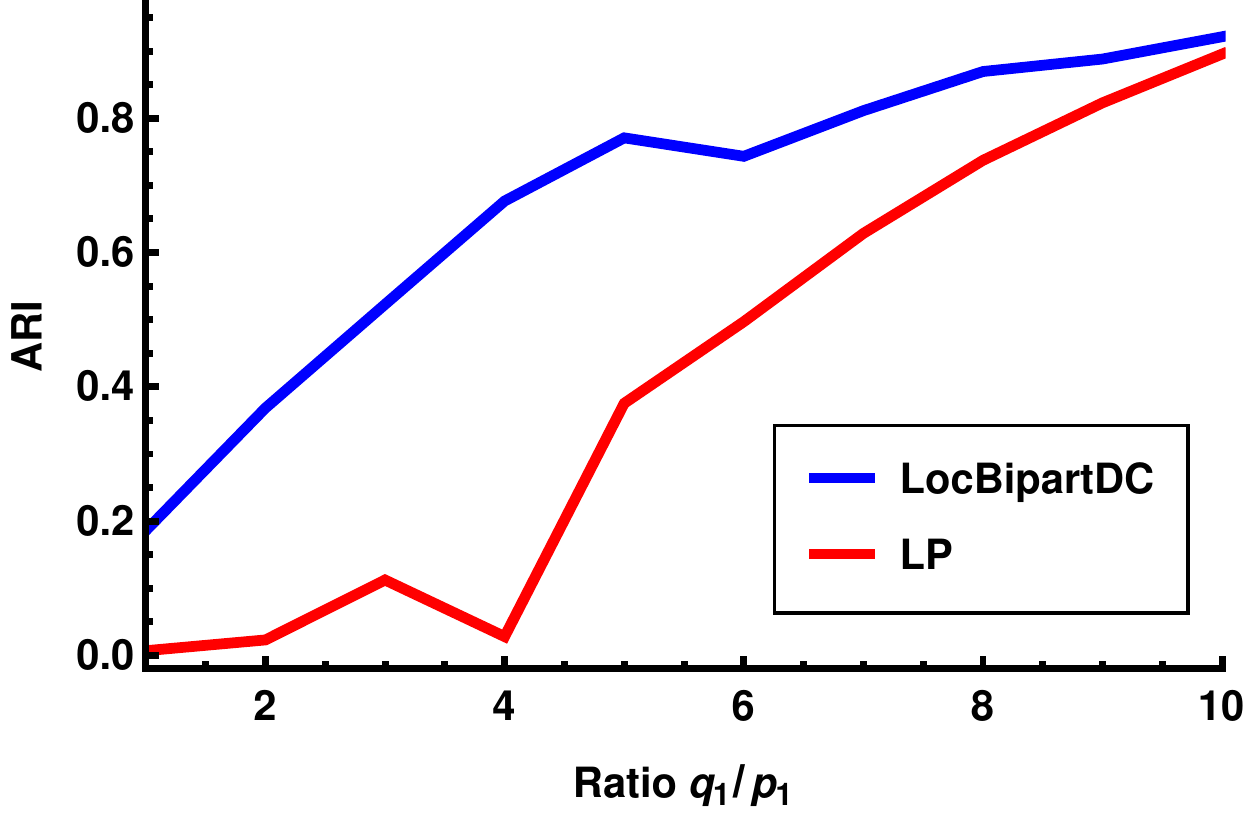}}
    \hspace{1in}
    \subfigure[] 
    {\label{fig:timevsari}\includegraphics[width=0.3\columnwidth]{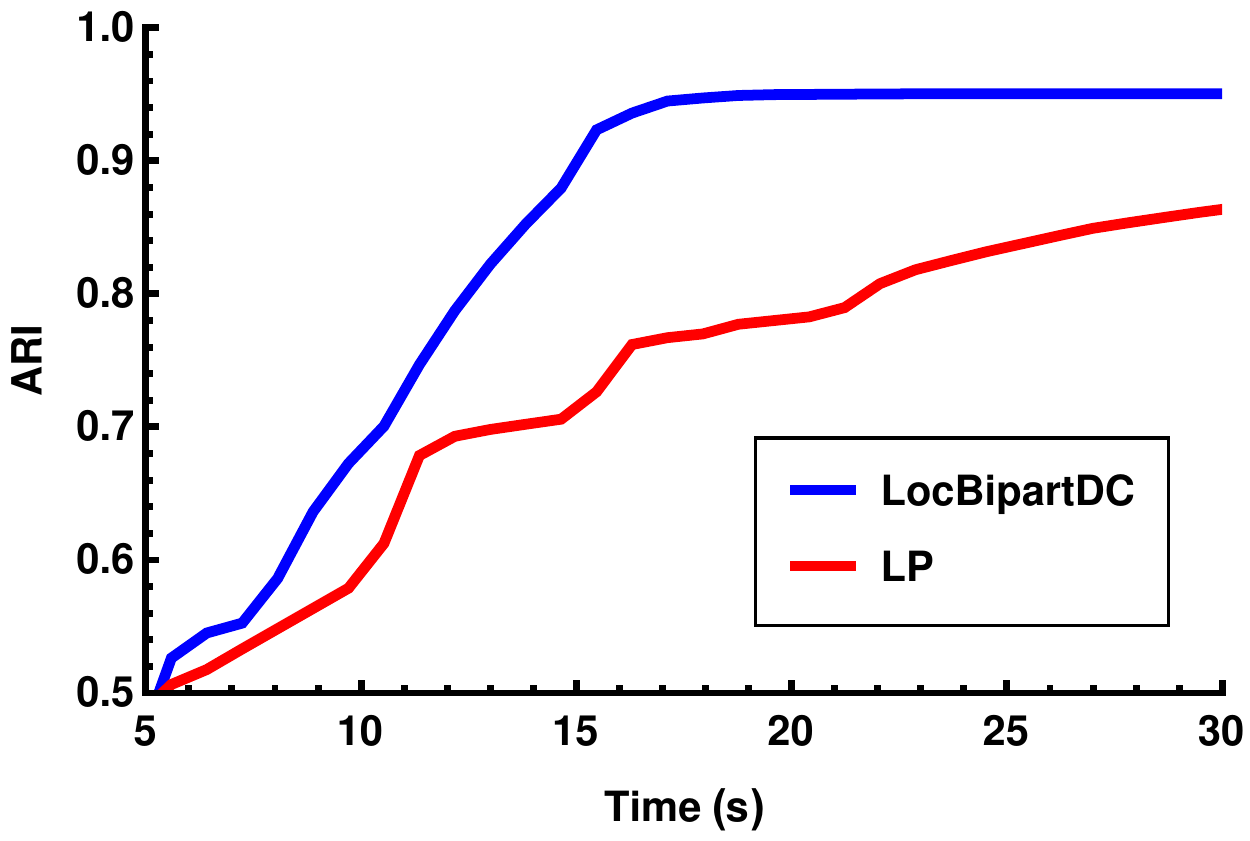}}
    \caption{
    \small
         (a) The ARI of each algorithm when varying the target $\beta$-value. A larger $q_1 / p_1$ ratio corresponds to a smaller $\beta$-value. (b) The ARI of each algorithm when bounding the runtime of the algorithm.
    \label{fig:ari}
    }
\end{figure}


 Thirdly, notice that both algorithms use some parameters to control the algorithm's runtime and the output's approximation ratio, which are naturally influenced by each other. To study this dependency, we generate graphs according to   $n_1 = 100,000$,  $p_1 = 0.000015$, and $q_1 = 0.00027$ which results in target sets with     $\beta \approx 0.1$ and volume $\gamma \approx 6,000,000$.
 Figure~\ref{fig:ari}(b) shows that, in comparison with the previous state-of-the-art, our algorithm  takes much less time to produce output with the same ARI value.

\subsubsection{Experiments on the Real-world Dataset} 
We demonstrate the significance of our algorithm on the Dyadic Militarised Interstate Disputes Dataset~(v3.1)~\cite{midDataset}, which
records every 
interstate dispute during 1816--2010, including the level of hostility resulting from the dispute and the number of casualties, and has been widely studied in the social and political sciences~\cite{mansfieldWhyDemocraciesCooperate2002, martinMakeTradeNot2008} as well as the machine learning    community~\cite{huDeepGenerativeModels2017, menonLinkPredictionMatrix2011, traagCommunityDetectionNetworks2009}.
For a given time period, we construct a graph from the data by representing each country with a vertex and adding an edge between each pair of countries weighted according to the severity of any military disputes between those countries.
Specifically, if there's a war\footnote{A war is defined by the maintainers of the dataset as a series of battles resulting  in at least 1,000 deaths.
}
between the two countries, the corresponding two vertices are 
 connected by an edge with weight $30$; for any other dispute that is not part of an interstate war, the two corresponding vertices are connected by an edge with weight $1$.
 We always use the USA as the starting vertex of the algorithm, and   our algorithm's output, as visualised in Figure~\ref{fig:intro_examples}(a)-(d), can be well explained by geopolitics. The 
 $\beta$-values of the pairs of clusters in Figures~\ref{fig:intro_examples}(a)-(d) are   $0.361$, $0.356$, $0.170$ and $0.191$ respectively.
 
\subsection{Results for Digraphs}
Next we evaluate the performance of our algorithm for digraphs on synthetic and real-world datasets.
Since there are no previous local digraph clustering algorithms that achieve similar objectives to ours, we compare the output of Algorithm~\ref{alg:directed_evo_cut} (\texttt{ECD}) with the state-of-the-art
non-local algorithm proposed by  Cucuringu et al.~\cite{CLS+2020},  and we refer this to as \texttt{CLSZ} in the following. 
 
\paragraph{Synthetic Dataset.}
 
We first look at the  \emph{cyclic block model}~(CBM) described in  Cucuringu et al.~\cite{CLS+2020} with parameters $k$, $n$, $p$, $q$, and $\eta$.
In this model, we generate a digraph with $k$ clusters $C_1, \ldots, C_k$ of size $n$, and for $u, v \in C_i$, there is an edge between $u$ and $v$ with probability $p$  and the edge direction is chosen uniformly at random. For $u \in C_i$ and $v \in C_{i + 1 \mod k}$, there is an edge between $u$ and $v$ with probability $q$, and the edge is directed from $u$ to $v$ with probability $\eta$ and from $v$ to $u$ with probability $1 - \eta$.
Throughout our experiments, we fix $p = 0.001$, $q = 0.01$, and $\eta = 0.9$.

Secondly,  since the goal of our algorithm is to find local structure in a graph, we extend the cyclic block model with additional local clusters and refer to this model as CBM+.
In addition to the parameters of the CBM, we introduce the parameters $n', q'_1, q'_2$, and $\eta'$.
In this model, the clusters $C_1$ to $C_k$ are generated as in the CBM, and there are two additional clusters $C_{k+1}$ and $C_{k+2}$ of size $n'$.
For $u, v \in C_{k+i}$ for $i \in \{1, 2\}$, there is an edge between $u$ and $v$ with probability $p$
and for $u \in C_{k+1}$ and $v \in C_{k+2}$, there is an edge with probability $q'_1$;
 the edge directions are chosen uniformly at random.
For $u \in C_{k+1} \union C_{k+2}$ and $v \in C_1$, there is an edge with probability $q'_2$.
If $u \in C_{k+1}$, the orientation is from $v$ to $u$ with probability $\eta'$ and from $u$ to $v$ with probability $1 - \eta'$ and if $u \in C_{k+2}$, the orientation is from $u$ to $v$ with probability $\eta'$ and from $v$ to $u$ with probability $1 - \eta'$.
We always fix $q'_1 = 0.5$, $q'_2 = 0.005$, and $\eta' = 1$.
Notice that the clusters $C_{k+1}$ and $C_{k+2}$ form a ``local'' cycle with the cluster $C_1$.

In Table~\ref{tab:gdsbm}, we report the average performance over 10 runs with a variety of parameters.
We find that \texttt{CLSZ} can uncover the global structure in the CBM more accurately than \texttt{ECD}.
On the other hand, \texttt{CLSZ} fails to identify the local cycle in the CBM+ model, while \texttt{ECD} succeeds. 

\begin{table}[htpb]
\caption{\small Comparison of \texttt{ECD} with \texttt{CLSZ} on synthetic data. \label{tab:gdsbm}}
\begin{center}
\begin{small}
\begin{sc}
\small\addtolength{\tabcolsep}{-1pt}
\begin{tabular}{cccccccc}
\toprule
& & & & \multicolumn{2}{c}{Time} & \multicolumn{2}{c}{ARI} \\
\cmidrule(lr){5-6}\cmidrule(lr){7-8}
Model & $n'$ & $n$ & $k$ & \texttt{ECD} & \texttt{CLSZ} & \texttt{ECD} & \texttt{CLSZ} \\
\midrule
CBM & - & $10^3$ & $5$ & $\mathbf{1.59}$ & $3.99$ & $0.92$ & $\mathbf{1.00}$ \\
CBM & - &$10^3$ & $50$ & $\mathbf{3.81}$ & $156.24$ & $0.99$ & $0.99$ \\
CBM+ & $10^2$ & $10^3$ & $3$ & $\mathbf{0.24}$ & $6.12$ & $\mathbf{0.98}$ & $0.35$ \\
CBM+ & $10^2$ & $10^4$ & $3$ & $\mathbf{0.32}$ & $45.17$ & $\mathbf{0.99}$ & $0.01$ \\
\bottomrule
\end{tabular}
\end{sc}
\end{small}
\end{center}
\vskip -0.1in
\end{table}

\paragraph{Real-world Dataset.}

Now  we evaluate the algorithms' performance
on the US Migration Dataset~\cite{census2000}.
For fair comparison, we follow  Cucuringu et al.~\cite{CLS+2020} and construct the digraph as follows: every county in the mainland USA is represented by a vertex; for any vertices $j,\ell$, the edge weight of $(j,\ell)$ is given by $\left|\frac{M_{j,\ell} - M_{\ell,j}}{M_{j,\ell} + M_{\ell,j}}\right|$, where $M_{j,\ell}$ is the number of people who migrated from county $j$ to county $\ell$ between 1995 and 2000; in addition, the direction of $(j,\ell)$ is set to be from $j$ to $\ell$ if $M_{j,\ell}>M_{\ell,j}$, otherwise the direction is set to be the opposite.

For \texttt{CLSZ}, we follow their suggestion on the same dataset and set $k=10$.
Both algorithms' performance is evaluated with respect to the flow ratio, as well as the Cut Imbalance ratio used in their work.
For any vertex sets $L$ and $R$, the cut imbalance ratio is defined by $\mathrm{CI}(L,R) = \frac{1}{2}\cdot \left|\frac{e(L,R) - e(R,L)}{ \ e(L,R) + e(R,L)}\right|$, and a higher $\mathrm{CI}(L,R)$ value indicates the connection between $L$ and $R$ is more significant.  Using counties in Ohio, New York, California, and Florida as the starting vertices, our algorithm's outputs are  visualised in Figures~\ref{fig:intro_examples}(e)-(h),
and in Table~\ref{tab:compalgo2} we compare them
to  the top $4$ pairs returned by \texttt{CLSZ}, which are shown in Figure~\ref{fig:migration}.
Our algorithm produces better outputs with respect to both metrics.

\begin{table*}[thpb]
\caption{Comparison of \texttt{EvoCutDirected} with \texttt{CLSZ} on the US migration dataset.}
\label{tab:compalgo2}
\vskip 0.15in
\begin{center}
\begin{small}
\begin{sc}
\begin{tabular}{ccccc}
\toprule
Figure & Algorithm      & Cluster         & Cut Imbalance & Flow Ratio \\
\midrule
\ref{fig:migration}(a)       & CLSZ           & Pair 1          & $0.41$        & $0.80$     \\
\ref{fig:migration}(b)       & CLSZ           & Pair 2          & $0.35$        & $0.83$     \\
\ref{fig:migration}(c)       & CLSZ           & Pair 3          & $0.32$        & $0.84$     \\
\ref{fig:migration}(d)       & CLSZ           & Pair 4          & $0.29$        & $0.84$     \\
\ref{fig:intro_examples}(e)       & EvoCutDirected & Ohio Seed       & $0.50$        & $0.56$     \\
\ref{fig:intro_examples}(f)       & EvoCutDirected & New York Seed   & $0.49$        & $0.58$     \\
\ref{fig:intro_examples}(g)       & EvoCutDirected & California Seed & $0.49$        & $0.67$     \\
\ref{fig:intro_examples}(h)       & EvoCutDirected & Florida Seed    & $0.42$        & $0.79$     \\
\bottomrule
\end{tabular}
\end{sc}
\end{small}
\end{center}
\vskip -0.1in
\end{table*}

\begin{figure*}[htpb]
\centering
    \subfigure[\texttt{CLSZ}, Pair 1] {\includegraphics[width=0.24\columnwidth]{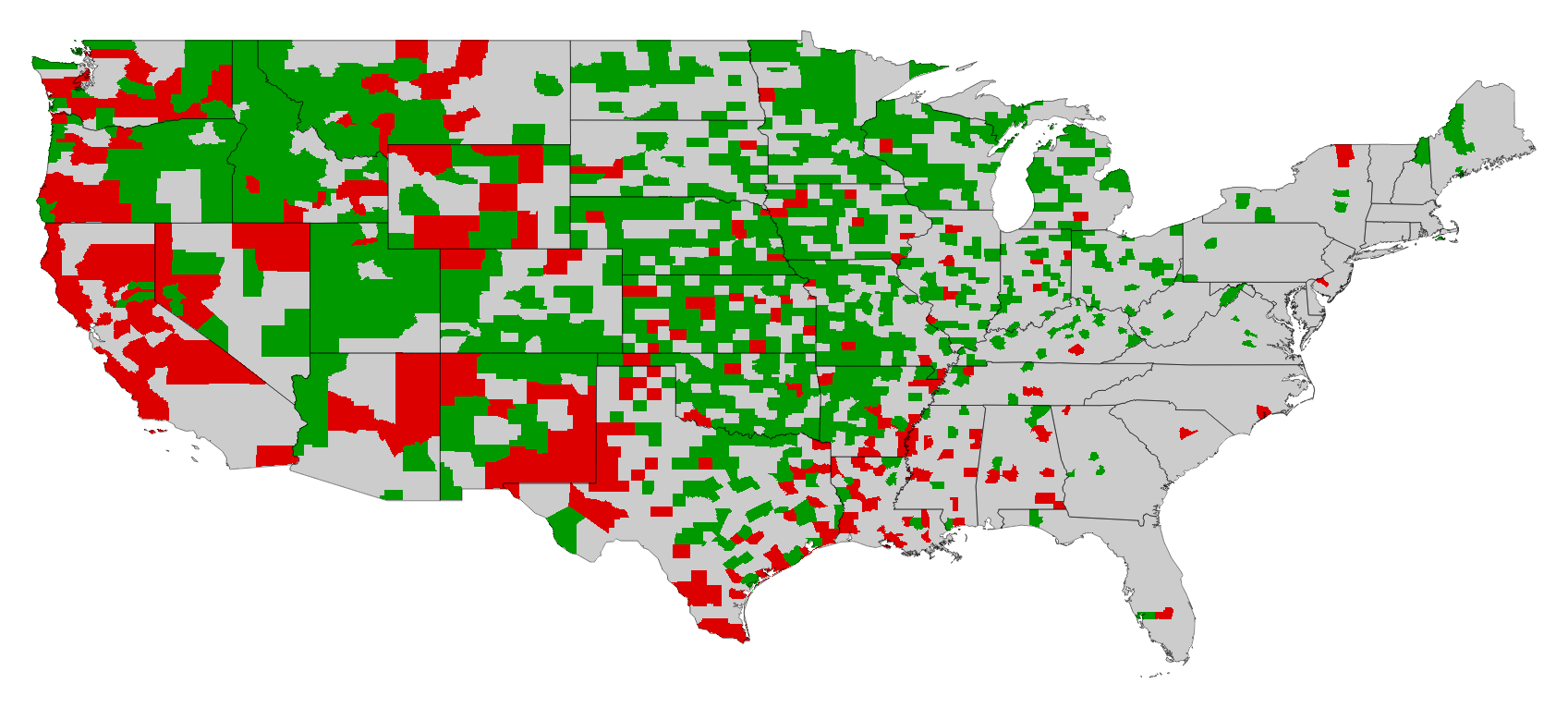}}
    \subfigure[\texttt{CLSZ}, Pair 2] {\includegraphics[width=0.24\columnwidth]{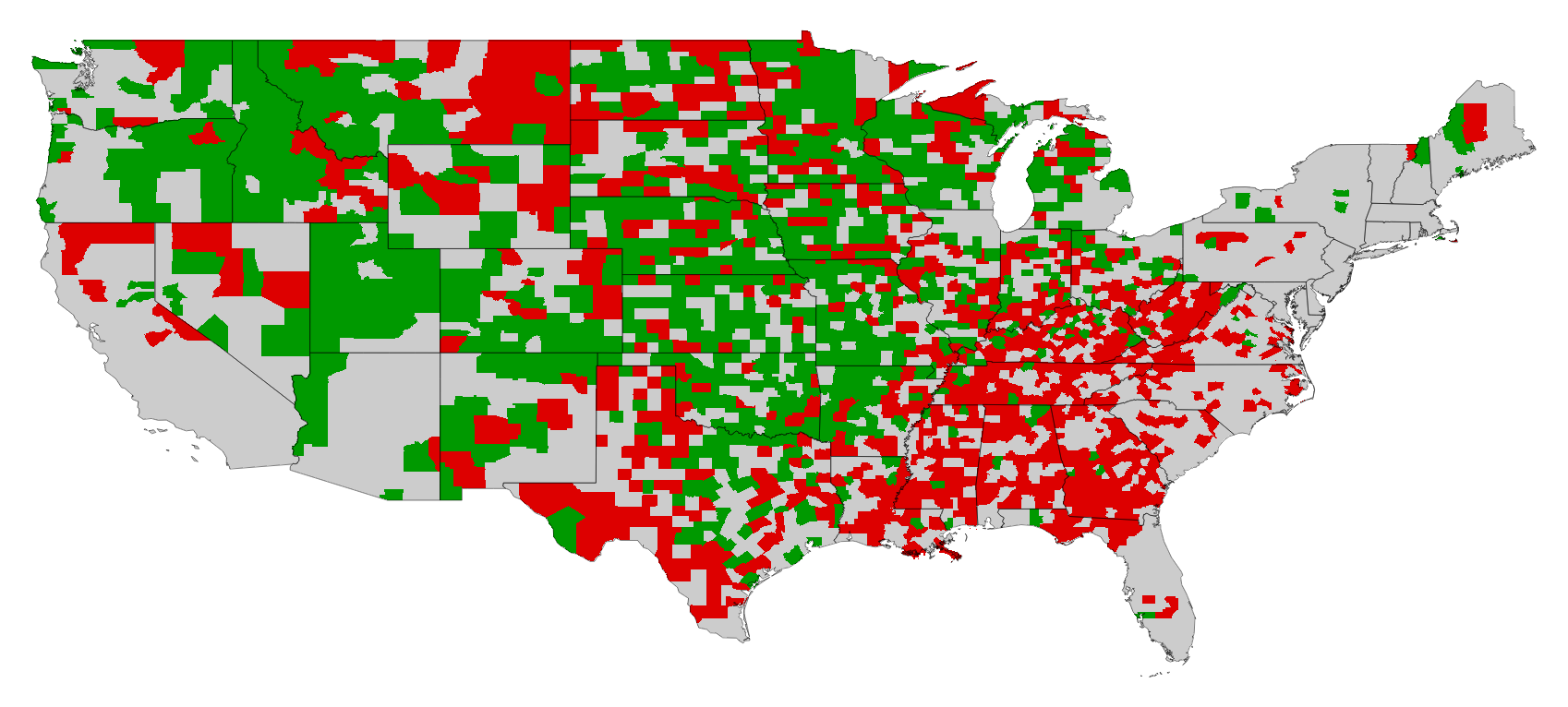}}
    \subfigure[\texttt{CLSZ}, Pair 3] {\includegraphics[width=0.24\columnwidth]{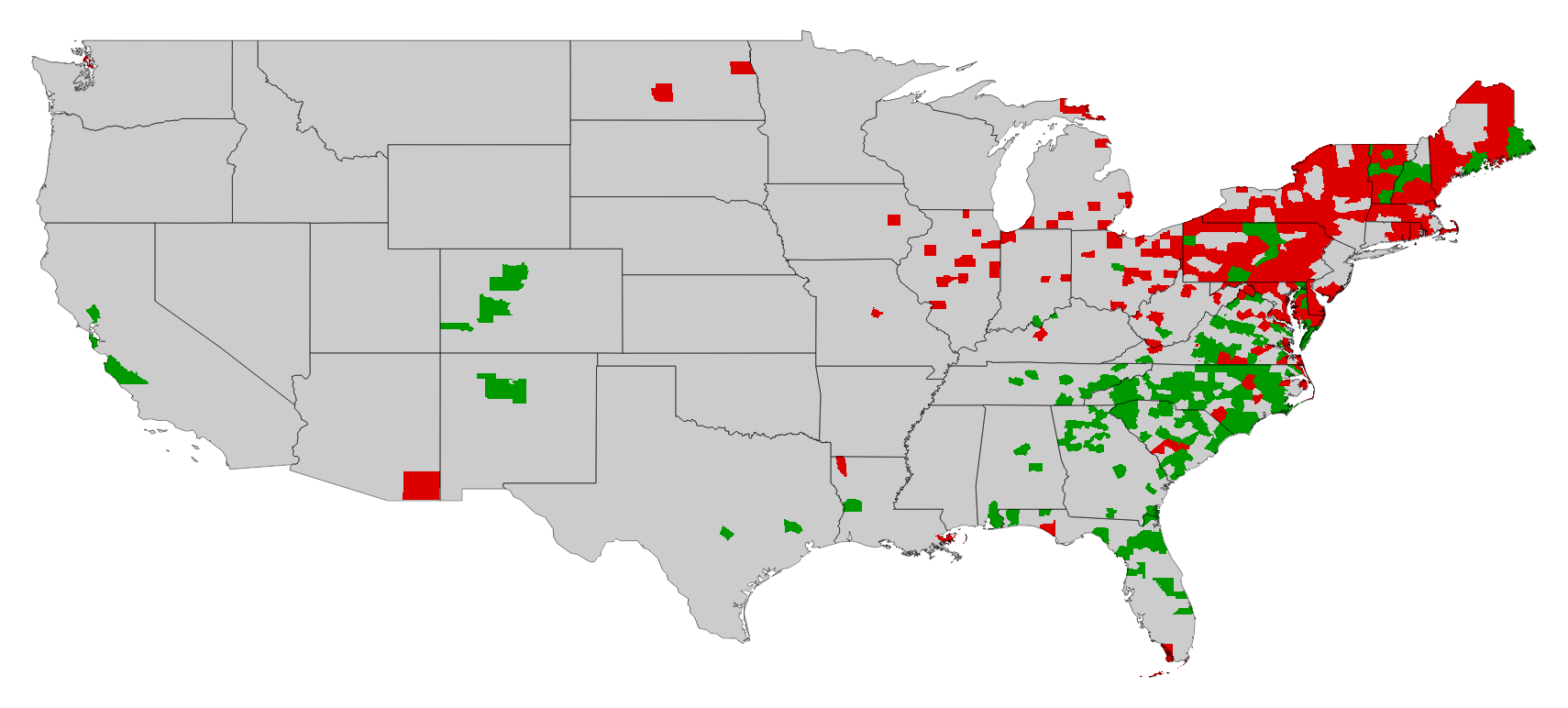}}
    \subfigure[\texttt{CLSZ}, Pair 4] {\includegraphics[width=0.24\columnwidth]{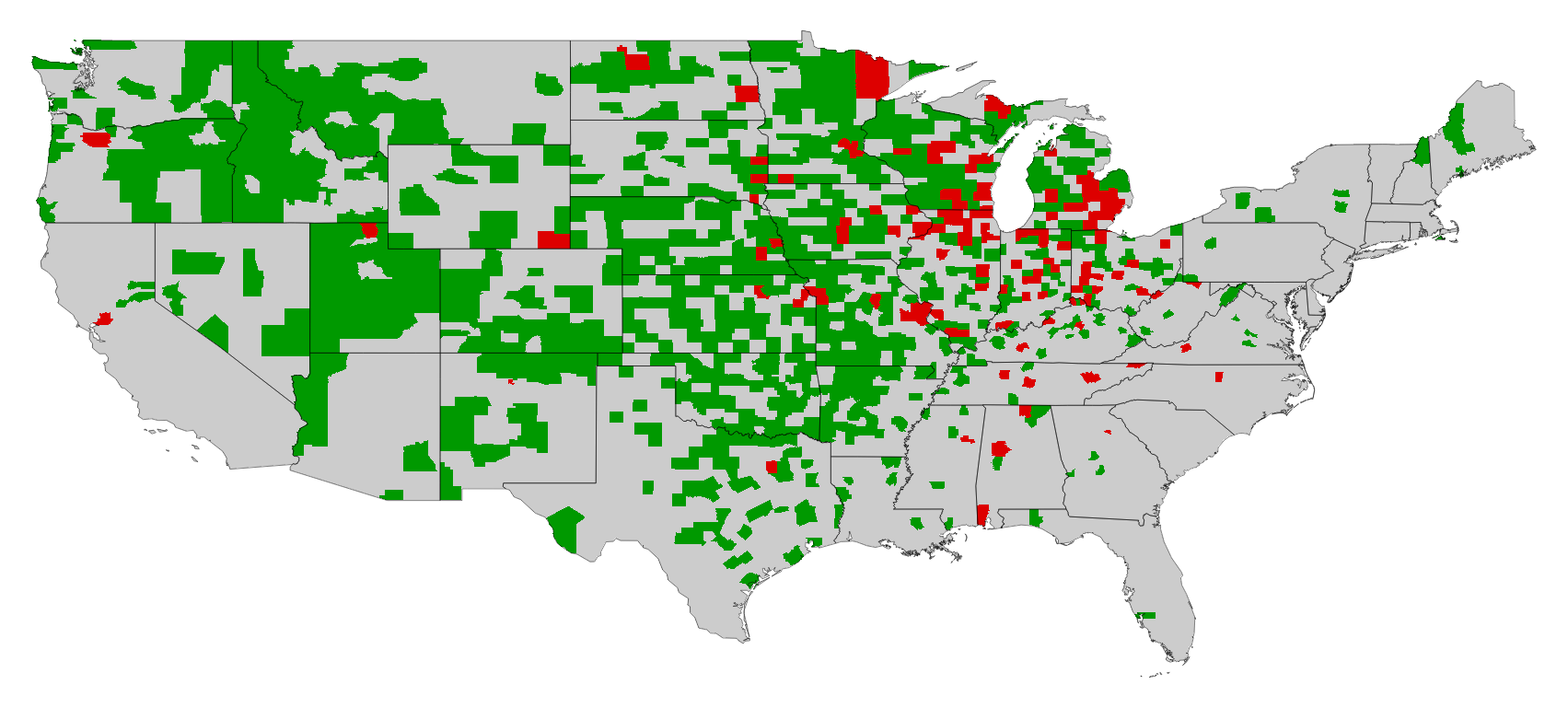}}
    \caption{\small{
    The top $4$ pairs of clusters found by   \texttt{CLSZ}   on the US Migration dataset.
    \label{fig:migration}
    }}
\end{figure*}

 These experiments suggest that local algorithms are not only more efficient, but can also be much more effective than global algorithms when learning certain  structures  in graphs. In particular, some localised structure
might be hidden when applying the objective function over the entire graph.

\section*{Acknowledgements}
We thank Luca Zanetti for making insightful comments on an earlier version of this work.
Peter Macgregor is supported by the Langmuir PhD Scholarship, and
He Sun is supported by an EPSRC Early Career Fellowship (EP/T00729X/1).

\bibliography{references.bib, datasets.bib}
\bibliographystyle{plain}

\appendix


\section{Omitted detail from Section \ref{sec:undirected}}


 In this section, we present all the omitted details from Section~\ref{sec:undirected} of our paper.
We formally introduce the Lov\'asz-Simonovits curve and use it to analyse the performance of the \texttt{LocBipartDC} algorithm for undirected graphs.

\paragraph{The Lov\'asz-Simonovits Curve.}
Our analysis of the \texttt{LocBipartDC} algorithm is based on the Lov\'asz-Simonovits curve, which has been used extensively in the analysis of local clustering algorithms\ \cite{ACL2006, ZLM2013}.
Lov\'asz and Simonovits~\cite{LS1990} originally defined the following function to reason  about the mixing rate of Markov chains. 
For any vector $p \in \mathbb{R}_{\geq 0}^{n}$, we  order the vertices such that $$\frac{p(v_1)}{\deg(v_1)} \geq \frac{p(v_2)}{\deg(v_2)} \geq \ldots \geq \frac{p(v_n)}{\deg(v_n)},$$ 
and define the sweep sets  of $p$  as
$
    S_j^p = \{v_1, \ldots, v_j\}
$ for $1\leq j\leq n$.
The Lov\'asz-Simonovits curve, denoted by $p[x]$ for $x \in [0, \vol(V)]$, is defined   by the points
$
    p[\vol(S_j^p)] \triangleq p(S_j^p ),
$
and is linear between those points for consecutive $j$.  
Lov\'asz and Simonovits also show that
\begin{equation} \label{eq:deflscurve}
    \lscurve{x} = \max_{\substack{w \in [0, 1]^{n} \\ \sum_{u\in V } w(u) \deg(u) = x}} \sum_{u \in V} w(u) p(u).
\end{equation}

\begin{proposition} \label{prop:lsleq}
    For any $p \in \R_{\geq 0}^{n}$ and any $S \subseteq V$, it holds that 
   $
        p(S) \leq \lscurve{\vol(S)}.
    $
\end{proposition}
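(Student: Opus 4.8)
The plan is to derive Proposition~\ref{prop:lsleq} directly from the variational characterisation \eqref{eq:deflscurve} of the Lov\'asz-Simonovits curve. The idea is that any set $S \subseteq V$ gives rise to a feasible point in the optimisation problem defining $\lscurve{\vol(S)}$, namely the $0/1$ weight vector $w = \chi_S$, and the objective value attained at this point is exactly $p(S)$.

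First I would fix $S \subseteq V$ and set $w = \chi_S \in [0,1]^n$, i.e. $w(u) = 1$ for $u \in S$ and $w(u) = 0$ otherwise. I would then check the constraint in \eqref{eq:deflscurve}: $\sum_{u \in V} w(u)\deg(u) = \sum_{u \in S}\deg(u) = \vol(S)$, so $w$ is feasible for the optimisation problem that defines $\lscurve{\vol(S)}$. Next I would evaluate the objective at this $w$: $\sum_{u \in V} w(u) p(u) = \sum_{u \in S} p(u) = p(S)$. Since $\lscurve{\vol(S)}$ is the maximum of the objective over all feasible $w$, and $w = \chi_S$ is one such feasible choice, we conclude $p(S) \leq \lscurve{\vol(S)}$, as claimed.

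There is essentially no obstacle here; the only things worth being slightly careful about are that $p \in \R_{\geq 0}^n$ (which ensures all quantities are well-defined and that there is no sign subtlety — though in fact non-negativity of $p$ is not even needed for this particular inequality), and that $\vol(S) \in [0, \vol(V)]$ so that $\lscurve{\vol(S)}$ is defined. Both hold trivially. One could alternatively give a proof from the piecewise-linear definition of $\lscurve{\cdot}$ via the sweep sets: write $\vol(S)$ between two consecutive breakpoints $\vol(S_{j}^p)$ and $\vol(S_{j+1}^p)$ and use that the sweep sets maximise $p(\cdot)$ among sets of a given volume (by the sorting of $p(v)/\deg(v)$) together with convexity/concavity of the curve, but the variational route above is cleaner and is the one I would present.

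In summary, the proof is a one-line application of \eqref{eq:deflscurve} with the indicator vector $\chi_S$ as the feasible point, and I would write it exactly that way.
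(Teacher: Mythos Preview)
Your proposal is correct and is essentially identical to the paper's own proof: the paper also writes $p(S) = \sum_{u \in V} \chi_S(u) p(u)$ and then invokes \eqref{eq:deflscurve} with the feasible weight $w = \chi_S$ to conclude $p(S) \leq \lscurve{\vol(S)}$. Your additional verification that $\sum_u \chi_S(u)\deg(u) = \vol(S)$ just makes explicit what the paper leaves implicit.
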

\begin{proof} 
    By definition, we have that
 $   p(S)  =  \sum_{u \in V} \chi_{S}(u) p(u).
$
  Hence, by   (\ref{eq:deflscurve}) we have  that
    $        p(S) \leq \lscurve{\vol(S)}$.
\end{proof}

\paragraph{Double Cover Reduction.}
The following two proofs establish the reduction from almost-bipartite sets to sets with low conductance in the double cover and follow almost directly from the definition of the bipartiteness ratio and conductance.
\begin{proof}[Proof of Lemma~\ref{lem:cond_bip}]
    Let $S' = L_1 \cup R_2$, and by definition we have  that $
        \vol_{G}(S) = \vol_{H}(S')
    $. On the other hand, it holds that 
    \[
    \vol_H(S') = e_H(S', V\setminus S') + 2e_H(S', S') = e_H(S', V \setminus S') + 2e_H(L_1, R_2),
    \]
    which implies that 
    \[
    \Phi_{H}(S') = \frac{e_H(S', V\setminus S')}{\vol_H(S')} = \frac{\vol_G(S) - 2e_H(L_1, R_2)}{\vol_G(S)} = 1 - \frac{2e_G(L,R)}{\vol_G(S)} = \beta_G(L,R),
    \]
    which proves the statement.
\end{proof}

\begin{proof}[Proof of Lemma~\ref{lem:ReductionForSimpleset}]
    The  statement follows by  Lemma~\ref{lem:cond_bip}.
\end{proof}

\paragraph{The simplify operator.} We now study the $\sigma$-operator to establish the properties given in Lemma~\ref{lem:sigmaproperty}.
Recall that the lazy random walk matrix for a graph is defined to be $W = \frac{1}{2}(I + D^{-1} A)$.
\begin{proof}[Proof of Lemma~\ref{lem:sigmaproperty}]
For any $u_1 \in V_H$, we have by definition that 
\begin{align*}
    \sigma\circ(c\cdot  p)(u_1) & = \max(0, c\cdot p(u_1) - c\cdot  p(u_2)) \\
    & = c\cdot \max(0, p(u_1) - p(u_2)) \\
    & = c \cdot \sigma\circ p(u_1),
\end{align*}
and similarly we have $\sigma\circ(c\cdot  p)(u_2) = c \cdot \sigma\circ p(u_2)$. Therefore, the first property holds.
    
Now we prove the second statement. Let $u_1, u_2 \in V_H$ be the vertices  that correspond to $u\in V_G$. We have by definition that 
\begin{align*}
    \sigma\circ(a+b)(u_1) & = \max (0, a(u_1) + b(u_1) - a(u_2) - b(u_2) )  \\ 
   &  = \max (0, a(u_1)  - a(u_2) + b(u_1) - b(u_2) ) \\
   & \leq \max (0, a(u_1)  - a(u_2)) + \max (0, b(u_1) - b(u_2) )\\
   &  = \sigma\circ a (u_1 ) + \sigma\circ b (u_1),
\end{align*}
where the last inequality holds by the fact that $\max(0, x+y) \leq \max (0, x) + \max(0,y) $ for any $x,y\in\mathbb{R}$.
For the same reason, we have that 
\[
    \sigma\circ(a+b) (u_2) \leq \sigma\circ a (u_2 ) + \sigma\circ b (u_2).
\]
Combining these proves the second property of $\sigma$.
    
Finally, we will prove the third property.
By the definition of matrix $W$, we have that
\[
(p W)(u_1) = \frac{1}{2}p(u_1) + \frac{1}{2} \sum_{v \in N_G(u)} \frac{p(v_2)}{\deg_H(v_2)},
\]
and
\[
(p W)(u_2) = \frac{1}{2}p(u_2) + \frac{1}{2} \sum_{v \in N_G(u)} \frac{p(v_1)}{\deg_H(v_1)}.
\]
Without loss of generality, we assume that $(p W)(u_1) \geq (p W)(u_2)$. This implies that $(\sigma\circ(pW))(u_2) = 0 \leq (\sigmap W)(u_2)$, hence it suffices to prove that $(\sigma\circ(pW))(u_1) \leq (\sigmap W)(u_1)$. By definition, we have that 
\begin{align}
(\sigma\circ(pW))(u_1)   &  = \max(0, pW(u_1) -pW(u_2))   \notag \\
& = pW(u_1) -pW(u_2) \notag\\
& = \frac{1}{2}\left( p(u_1) -p(u_2)\right)  + \frac{1}{2} \sum_{v \in N_G(u)} \left( \frac{p(v_2)}{\deg_H(v_2)} -\frac{p(v_1)}{\deg_H(v_1)} \right) \notag\\
& = \frac{1}{2}\left( p(u_1) -p(u_2)\right)  + \frac{1}{2} \sum_{v \in N_G(u)} \frac{ p(v_2)  - p(v_1)}{\deg_G(v)}, \label{eq:sigmadif}
\end{align}
where the last line follows by the fact that $\deg_H(v_1) = \deg_H(v_2) = \deg_G(v)$  for any $v\in V_G$.  To analyse   \eqref{eq:sigmadif}, notice that $p(v_1 ) - p(v_2) =(\sigma\circ p)(v_1) -\sigmap (v_2)$ for any $v\in V_G$ for the following reasons:
\begin{itemize}
    \item When $p(v_1)\geq p(v_2)$, we have that $
(\sigma\circ p)(v_1) -\sigmap (v_2) = \left(p(v_1) - p(v_2)\right) - 0 = p(v_1) - p(v_2)$;
    \item Otherwise, we have $p(v_1)<p(v_2)$ and $
(\sigma\circ p)(v_1) -\sigmap (v_2) = 0 - (p(v_2) - p(v_1)) = p(v_1) - p(v_2)$.
\end{itemize} 
Therefore, we can rewrite \eqref{eq:sigmadif}
 as  
\begin{align*}
        (\sigma\circ (pW))(u_1)  
       &  = \frac{1}{2}((\sigma\circ p)(u_1) - (\sigma\circ p)(u_2)) + \frac{1}{2} \sum_{v \in N(u)} \frac{ \sigmap(v_2) - \sigmap(v_1)}{\deg_G(v)}  \\
   & \leq \frac{1}{2}\cdot\sigmap(u_1) + \frac{1}{2}\sum_{v \in N(u)} \frac{ \sigmap(v_2) }{\deg_G(v)} \\
    & = (\sigmap W)(u_1),
    \end{align*}
where the inequality holds by the fact that $\sigmap(u_1)\geq 0$ and $\sigmap(u_2)\geq 0$ for any $u\in V_G$.
Since the analysis above holds for any $v\in V_G$, we have that $\sigma\circ(pW)\preceq \sigmap W$, which finishes the proof of the statement.
\end{proof}

\paragraph{Analysis of $\texttt{ApproximatePagerankDC}$.}
The following lemma shows that \texttt{dcpush} maintains that throughout the algorithm, $\left[p_1, p_2\right]$ is an approximate Pagerank vector on the double cover with residual $\left[r_1, r_2\right]$.
\begin{lemma} \label{lem:dcpush}
Let $p'_1, p'_2, r'_1$ and $r'_2$ be the result of $\texttt{dcpush}((u, i), p_1, p_2, r_1, r_2)$. Let $p = \left[p_1, p_2 \right]$, $r = \left[r_1, r_2 \right]$, $p' = \left[p'_1, p'_2 \right]$ and $r' = \left[r'_1, r'_2 \right]$.
Then, we have that
$$
p' + \pr_H(\alpha, r') = p + \pr_H(\alpha, r).$$
\end{lemma}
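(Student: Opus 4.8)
The plan is to verify directly that a single \texttt{dcpush} operation at $(u,i)$ leaves the quantity $p + \pr_H(\alpha, r)$ invariant, by using the linearity of the exact Pagerank operator $\pr_H(\alpha, \cdot)$ and the defining equation~\eqref{eq:defpr}. Since $\pr_H(\alpha, \cdot)$ is linear in its second argument, it suffices to show that $p' - p = \pr_H(\alpha, r) - \pr_H(\alpha, r') = \pr_H(\alpha, r - r')$, i.e. that the probability mass moved into $p$ equals the exact Pagerank contribution of the residual mass that was removed.

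First I would write down explicitly the changes made by \texttt{dcpush}$(\alpha, (u,i), \cdot)$ as vectors on $V_H$. Only the coordinate $(u,i)$ of $p$ changes, increasing by $\alpha r_i(u)$, so $p' - p = \alpha r_i(u)\,\chi_{u_i}$. For the residual, the coordinate $(u,i)$ drops from $r_i(u)$ to $(1-\alpha)r_i(u)/2$, and for each neighbour $v \in N_G(u)$ the coordinate $(v, 3-i)$ increases by $(1-\alpha)r_i(u)/(2\deg(u))$; all other coordinates are unchanged. Recalling that in the double cover $H$ the neighbours of $u_i$ are exactly $\{v_{3-i} : v \in N_G(u)\}$, one recognizes the vector $\sum_{v \in N_G(u)} \frac{1}{\deg(u)}\chi_{v_{3-i}}$ as $\chi_{u_i} D_H^{-1} A_H$, so that $\chi_{u_i} W_H = \frac{1}{2}\chi_{u_i} + \frac{1}{2}\sum_{v\in N_G(u)}\frac{1}{\deg(u)}\chi_{v_{3-i}}$. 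Hence $r - r' = r_i(u)\,\chi_{u_i} - (1-\alpha)r_i(u)\,\chi_{u_i} W_H = r_i(u)\bigl(\chi_{u_i} - (1-\alpha)\chi_{u_i} W_H\bigr)$.

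Next I would apply $\pr_H(\alpha, \cdot)$ to both sides. By linearity, $\pr_H(\alpha, r - r') = r_i(u)\bigl(\pr_H(\alpha, \chi_{u_i}) - (1-\alpha)\pr_H(\alpha, \chi_{u_i} W_H)\bigr)$. Using the closed form $\pr_H(\alpha, s) = \alpha\sum_{t\ge 0}(1-\alpha)^t s W_H^t$ (from Andersen et al.~\cite{ACL2006}), or equivalently the fixed-point equation~\eqref{eq:defpr} together with the fact that $\pr_H(\alpha, \cdot)$ commutes with right-multiplication by $W_H$, one gets $\pr_H(\alpha, \chi_{u_i}) - (1-\alpha)\pr_H(\alpha, \chi_{u_i} W_H) = \alpha\chi_{u_i}$. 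Therefore $\pr_H(\alpha, r - r') = \alpha r_i(u)\,\chi_{u_i} = p' - p$, which rearranges to $p' + \pr_H(\alpha, r') = p + \pr_H(\alpha, r)$, as claimed.

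The only real obstacle is bookkeeping: one must be careful that the index swap $i \mapsto 3-i$ in the residual update exactly matches the adjacency structure of the double cover (i.e. that pushing from $u_i$ sends mass only to vertices of the form $v_{3-i}$), and that the factor of $1/2$ from the lazy walk and the $1/\deg(u)$ normalization combine correctly so that $r - r'$ really equals $r_i(u)(\chi_{u_i} - (1-\alpha)\chi_{u_i}W_H)$. Once this identity is in place, the rest is a one-line application of linearity and the Pagerank fixed-point identity. I would also note for completeness that this argument is the natural analogue of the corresponding \texttt{push}-invariant lemma in~\cite{ACL2006}, the only difference being that the target of a push now lands on the ``opposite copy'' $3-i$ rather than on the same vertex.
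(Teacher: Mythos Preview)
Your proof is correct and follows essentially the same approach as the paper: both write $p' - p = \alpha r(u_i)\chi_{u_i}$ and $r' = r - r(u_i)\chi_{u_i} + (1-\alpha)r(u_i)\chi_{u_i}W_H$, and then invoke linearity of $\pr_H(\alpha,\cdot)$ together with the identity $\pr_H(\alpha,s) = \alpha s + (1-\alpha)\pr_H(\alpha, sW_H)$ to conclude. The only cosmetic difference is that the paper expands $p + \pr_H(\alpha,r)$ forward, whereas you compute $\pr_H(\alpha, r-r')$ and match it to $p'-p$; the content is identical.
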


\begin{proof}
    After the push operation, we have
    \begin{align*}
        p' & = p + \alpha r(u_i) \chi_{u_i} \\
        r' & = r - r(u_i) \chi_{u_i} + (1 - \alpha) r(u_i) \chi_{u_i} W
    \end{align*}
    where $W$ is the lazy random walk matrix on the double cover of the input graph.
    Then we have
    \begin{align*}
        p + \pr_H(\alpha, r) & = p + \pr_H(\alpha, r - r(u_i) \chi_{u_i}) + \pr_H(\alpha, r(u_i) \chi_{u_i}) \\
        & = p + \pr_H(\alpha, r - r(u_i) \chi_{u_i}) + \left[\alpha r(u) \chi_{u_i} + (1 - \alpha) \pr_H(\alpha, r(u) \chi_{u_i} W) \right] \\
        & = \left[ p + \alpha r(u_i) \chi_{u_i}\right] + \pr_H(\alpha, \left[r - r(u_i) \chi_{u_i} + (1 - \alpha)r(u_i)\chi_{u_i}W\right]) \\
        & = p' + \pr_H(\alpha, r'),
    \end{align*}
    where we use the fact that $\pr(\alpha, \cdot)$ is linear and that
    \[
        \pr(\alpha, s) = \alpha s + (1 - \alpha) \pr(\alpha, sW),
    \]
    which follows from the fact that $\pr(\alpha, s) = \alpha \sum_{t = 0}^\infty (1 - \alpha)^t s W^t$.
\end{proof}

The following lemma shows the correctness and performance guarantee of the $\texttt{ApproximatePagerankDC}$ algorithm.
\begin{lemma} \label{lem:aprdc}
    For a graph $G$ with double cover $H$ and any $v \in V_G$, $\texttt{ApproximatePagerankDC}(v, \alpha, \epsilon)$ runs in time $O(\frac{1}{\epsilon \alpha})$ and computes $p = \apr_H(\alpha, \chi_{v_1}, r)$ such that $\max_{u \in V_H} \frac{r(u)}{\deg(u)} < \epsilon$ and $\vol(\mathrm{supp}(p))) \leq \frac{1}{\epsilon \alpha}$.
\end{lemma}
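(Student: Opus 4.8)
The plan is to follow the amortised analysis of the push-based Pagerank approximation of Andersen, Chung and Lang~\cite{ACL2006}, keeping track of the extra bookkeeping introduced by the cross-copy residual updates in \texttt{dcpush}.

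\emph{Correctness.} First I would show, by induction on the number of completed \texttt{dcpush} operations, that Algorithm~\ref{alg:aprdc} maintains the invariant $p + \pr_H(\alpha, r) = \pr_H(\alpha, \chi_{v_1})$, where I identify the pair $(p_1,p_2)$ with the vector $p = [p_1,p_2]$ on $V_H$ and likewise for $r$. The base case holds because initially $p = \mathbf{0}$ and $r = [\chi_v, \mathbf{0}] = \chi_{v_1}$, and the inductive step is precisely Lemma~\ref{lem:dcpush}. Hence, whenever the while loop halts, the returned $p$ equals $\apr_H(\alpha, \chi_{v_1}, r)$; and since the loop condition fails on exit, the output satisfies $\max_{u\in V_H} r(u)/\deg(u) < \epsilon$.

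\emph{Amortised bound on work and volume.} Next I would establish two facts that persist after every \texttt{dcpush}. (i) $p \succeq \mathbf{0}$ and $r \succeq \mathbf{0}$: the only quantities altered by a push of $(u,i)$ are $p_i(u)$, $r_i(u)$, and the $r_{3-i}(v)$ for $v \in N_G(u)$, and inspecting Algorithm~\ref{alg:dcpush} each new value is a nonnegative combination of old (nonnegative) values, so nonnegativity propagates from the initial state. (ii) $\norm{p}_1 + \norm{r}_1$ is invariant, equal to its initial value $1$, because a push of $(u,i)$ transfers mass $\alpha r_i(u)$ from $r$ to $p$, and the residual mass leaving $r_i(u)$ beyond this transfer is exactly the mass redistributed to the neighbours $N_G(u)$; in particular $\norm{p}_1 \leq 1$ throughout. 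Now each \texttt{dcpush} is invoked on a pair $(u,i)$ with $r_i(u) \geq \epsilon\deg_G(u)$, so it raises $\norm{p}_1$ by at least $\alpha\epsilon\deg_G(u)$, while costing $O(\deg_G(u))$ time (assuming the active pair is located in $O(1)$ amortised time via a queue of active vertices, as in~\cite{ACL2006}) and enlarging $\mathrm{supp}(p)$ by at most the single vertex $u_i$, whose $H$-degree is $\deg_G(u)$. Writing $u^{(1)}, u^{(2)}, \dots$ for the sequence of pushed vertices, $\alpha\epsilon \sum_k \deg_G(u^{(k)}) \leq \norm{p}_1 \leq 1$, so $\sum_k \deg_G(u^{(k)}) \leq \frac{1}{\epsilon\alpha}$. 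This simultaneously bounds the total running time by $O(\frac{1}{\epsilon\alpha})$ (hence the loop terminates) and, since $\mathrm{supp}(p)$ is contained in the set of pushed vertices, gives $\vol(\mathrm{supp}(p)) \leq \sum_k \deg_G(u^{(k)}) \leq \frac{1}{\epsilon\alpha}$.

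The only nontrivial point, and the only place the double-cover construction enters, is verifying fact (ii): one must check that the asymmetric update in \texttt{dcpush} --- pushing copy $i$ of $u$ but depositing residual onto copy $3-i$ of each neighbour --- still conserves total mass and still creates at most one new support vertex per push, so that the ACL-style potential argument carries over unchanged. Everything else is routine push-operation accounting.
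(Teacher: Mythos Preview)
Your proposal is correct and follows essentially the same approach as the paper: invoke Lemma~\ref{lem:dcpush} for the invariant $p+\pr_H(\alpha,r)=\pr_H(\alpha,\chi_{v_1})$, then run the ACL-style amortised argument that each push moves at least $\alpha\epsilon\deg(u)$ mass into $p$, whence $\sum_j d_j\le 1/(\epsilon\alpha)$ bounds both the running time and $\vol(\mathrm{supp}(p))$. You are in fact slightly more careful than the paper in explicitly verifying nonnegativity and mass conservation under the cross-copy residual update, but this is just filling in routine details of the same argument.
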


\begin{proof}
    By Lemma~\ref{lem:dcpush}, we have that $p = \apr_H(\alpha, \chi_{v_1}, r)$ throughout the algorithm and so this clearly holds for the output vector.
    By the stopping condition of the algorithm, it is clear that $\max_{u \in V_H} \frac{r(u)}{\mathrm{deg}(u)} < \epsilon$.
    
    Let $T$ be the total number of push operations performed by $\texttt{ApproximatePagerankDC}$, and let $d_j$ be the degree of the vertex $u_i$ used in the $j$th push.
    The amount of probability mass on $u_i$ at time $j$ is at least $\epsilon d_j$ and so the amount of probability mass that moves from $r_i$ to $p_i$ at step $j$ is at least $\alpha \epsilon d_i$.
    Since $\|r_1\|_1 + \|r_2\|_1 = 1$ at the beginning of the algorithm, we have that 
    $
        \alpha \epsilon \sum_{j = 1}^T d_j  \leq 1,   
    $
    which implies that 
    \begin{align}
        \sum_{j = 1}^T d_j & \leq \frac{1}{\alpha \epsilon}. \label{eq:pushnum}
    \end{align}
    Now, note that for every vertex in $\mathrm{supp}(p)$, there must have been at least one push operation on that vertex. So,
    \[
        \vol(\mathrm{supp}(p)) = \sum_{v \in \mathrm{supp}(p)} d(v) \leq \sum_{j = 1}^T d_j \leq \frac{1}{\epsilon \alpha}.
    \]
    Finally, to bound the running time, we can implement the algorithm by maintaining a queue of vertices satisfying $\frac{r(u)}{d(u)} \geq \epsilon$ and repeat the push operation on the first item in the queue at every iteration, updating the queue as appropriate.
    The push operation and queue updates can be performed in time proportional to $d(u)$ and so the running time follows from \eqref{eq:pushnum}.
\end{proof}

\paragraph{Analysis of \texttt{LocBipartDC}.}
We will make use of the following fact proved by Andersen et al.\ \cite{ACL2006} which shows that the approximate Pagerank vector will have a large probability mass on a set of vertices if they have low conductance.
\begin{proposition}[\cite{ACL2006}, Theorem~4] \label{lem:apr_escaping_mass}
    For any vertex set $S$ and any constant $\alpha \in [0, 1]$, there is a subset $S_{\alpha} \subseteq S$ with $\vol(S_{\alpha}) \geq \vol(S)/2$ such that for any vertex $v \in S_{\alpha}$, the approximate PageRank vector $\apr(\alpha, \chi_v, r)$ satisfies
    \[
        \apr(\alpha, \chi_v, r)(S) \geq 1 - \frac{\cond{S}}{\alpha} - \vol(S) \max_{u \in V} \frac{r(u)}{\deg(u)}.   
    \]
\end{proposition}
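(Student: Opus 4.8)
The plan is to prove the statement in three stages: (i) bound the probability that the \emph{exact} Pagerank escapes $S$, averaged over starting vertices weighted by degree; (ii) extract the subset $S_\alpha$ by a Markov-inequality argument; and (iii) pay for the residual $r$ to pass from the exact Pagerank to $\apr(\alpha,\chi_v,r)$. The only structural tools needed are the recurrence $\pr(\alpha,s)=\alpha s+(1-\alpha)\pr(\alpha,sW)$, linearity of $\pr(\alpha,\cdot)$, and reversibility of the lazy walk: $\deg(u)W_{u,v}=\deg(v)W_{v,u}$ (equivalently, $DW$ is symmetric), which also gives $\deg(u)W^t_{u,v}=\deg(v)W^t_{v,u}$ for all $t$.

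\emph{Stage (i): escaping mass.} Put $f(v)=\pr(\alpha,\chi_v)(V\setminus S)$, so $0\le f(v)\le 1$ for all $v$ (since $\pr(\alpha,\chi_v)$ is a probability distribution). For $v\in S$, the recurrence and linearity give $f(v)=(1-\alpha)\sum_u W_{v,u}f(u)$, because $\chi_v(V\setminus S)=0$. Multiplying by $\deg(v)$, summing over $v\in S$, and using reversibility, I would write $F:=\sum_{v\in S}\deg(v)f(v)=(1-\alpha)\sum_u f(u)\deg(u)\big(\sum_{v\in S}W_{u,v}\big)$. Bounding $\sum_{v\in S}W_{u,v}\le 1$ collapses the $u\in S$ terms back into $F$, while the $u\notin S$ terms are at most $\sum_{u\notin S}\deg(u)\sum_{v\in S}W_{u,v}=\sum_{v\in S}\deg(v)\sum_{u\notin S}W_{v,u}=\frac{1}{2}\abs{\partial S}$ (reversibility again, plus the fact that the lazy self-loop contributes nothing when $v\in S$, $u\notin S$). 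Hence $F\le (1-\alpha)\big(F+\frac{1}{2}\abs{\partial S}\big)$, so $\alpha F\le \frac{1}{2}\abs{\partial S}$ and $F\le \frac{\abs{\partial S}}{2\alpha}\le \frac{\Phi(S)\vol(S)}{2\alpha}$, the last step using $\abs{\partial S}=\Phi(S)\min\{\vol(S),\vol(V\setminus S)\}\le \Phi(S)\vol(S)$.

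\emph{Stages (ii) and (iii).} Let $T=\{v\in S: \pr(\alpha,\chi_v)(V\setminus S)>\Phi(S)/\alpha\}$; Markov's inequality with respect to the degree measure gives $\vol(T)\cdot\frac{\Phi(S)}{\alpha}<F\le\frac{\Phi(S)\vol(S)}{2\alpha}$, so $\vol(T)<\vol(S)/2$, and I would take $S_\alpha=S\setminus T$, which has $\vol(S_\alpha)\ge\vol(S)/2$ and $\pr(\alpha,\chi_v)(S)=1-f(v)\ge 1-\Phi(S)/\alpha$ for $v\in S_\alpha$. Finally, by the definition of the approximate Pagerank and linearity, $\apr(\alpha,\chi_v,r)(S)=\pr(\alpha,\chi_v)(S)-\pr(\alpha,r)(S)$, so it suffices to bound $\pr(\alpha,r)(S)\le\vol(S)\max_u\frac{r(u)}{\deg(u)}$. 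Using reversibility once more, $\pr(\alpha,\chi_w)(u)=\frac{\deg(u)}{\deg(w)}\pr(\alpha,\chi_u)(w)$, hence $\pr(\alpha,r)(u)=\sum_w r(w)\pr(\alpha,\chi_w)(u)=\deg(u)\sum_w\frac{r(w)}{\deg(w)}\pr(\alpha,\chi_u)(w)\le\deg(u)\max_w\frac{r(w)}{\deg(w)}$ since $\pr(\alpha,\chi_u)$ is a probability distribution; summing over $u\in S$ and combining the three stages proves the proposition.

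The main obstacle is Stage (i): the whole argument rests on choosing the potential $F=\sum_{v\in S}\deg(v)f(v)$ and invoking reversibility so that the in-$S$ contribution collapses exactly into a copy of $F$ (with coefficient $1-\alpha$) and the out-of-$S$ contribution evaluates to precisely $\frac12\abs{\partial S}$. Once that balance is set up, the constant appearing in the theorem is just $\alpha F\le\frac12\abs{\partial S}$, and Stages (ii) and (iii) are routine bookkeeping built on the same reversibility identity and the recurrence for $\pr(\alpha,\cdot)$.
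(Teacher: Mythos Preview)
Your proof is correct. Note, however, that the paper does not give its own proof of this proposition: it is quoted verbatim as Theorem~4 of \cite{ACL2006} and used as a black box in the proof of Lemma~\ref{lem:sapr_escapingmass}. Your three-stage argument (degree-weighted average of the escaping mass via the recurrence and reversibility, Markov's inequality to extract $S_\alpha$, then the reversibility bound $\pr(\alpha,r)(u)\le \deg(u)\max_w r(w)/\deg(w)$ to handle the residual) is exactly the standard proof from \cite{ACL2006}, so there is nothing to contrast.
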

The following proof generalises this result to show that applying the $\sigma$-operator to the approximate Pagerank vector maintains this large probability mass, up to a constant factor.
\begin{proof}[Proof of Lemma~\ref{lem:sapr_escapingmass}]
By Lemma \ref{lem:cond_bip} and Proposition~\ref{lem:apr_escaping_mass}, there is a set $S_\alpha \subseteq S$ with $\vol(S_{\alpha})\geq \vol(S)/2$ such that it holds  for $v \in S_\alpha$ that  
    \[
    	\apr_H(\alpha, \chi_v, r)(L_1 \cup R_2) \geq 1 - \frac{\bipart{L}{R}}{\alpha} - \vol(S) \max_{u \in V} \frac{r(u)}{\deg(u)}.
    \]
    Therefore, by the definition of $p$ we have that\begin{align*}
    	p(L_1 \cup R_2) & \geq \apr(\alpha, \chi_v, r)(L_1 \cup R_2) - \apr(\alpha, \chi_v, r)(L_2 \cup R_1) \\
    	&\geq \apr(\alpha, \chi_v, r)(L_1 \cup R_2) - \apr(\alpha, \chi_v, r)(\overline{L_1 \cup R_2}) \\
    	& \geq \apr(\alpha, \chi_v, r)(L_1 \union R_2) - (1 - \apr(\alpha, \chi_v, r)(L_1 \union R_2)) \\
    	& = 2\cdot\apr(\alpha,\chi_v,r) (L_1\cup R_2) - 1\\
    	& \geq 1   - \frac{2 \bipart{L}{R}}{\alpha} - 2 \vol(S) \max_{u \in V} \frac{r(u)}{\deg(u)},
    \end{align*}
    which proves the statement of the lemma.
\end{proof}
For the proof of Lemma~\ref{lem:sapr_updatestep}, we need to carefully analyse the effect of the $\sigma$-operator on the approximate Pagerank vector, and in particular notice the effect of the residual vector $r$.
\begin{proof}[Proof of Lemma~\ref{lem:sapr_updatestep}]
Let   $u \in V_G$ be an arbitrary vertex. We have that 
    \begin{align*}
       & p(u_1) \\ 
        & = \sigma \circ \big( \pr(\alpha, s) - \pr(\alpha, r)\big)(u_1) \\
        & = \sigma \circ \big(\alpha s + (1 - \alpha)\pr(\alpha, s)W - \alpha r - (1 - \alpha)\pr(\alpha, r)W \big)(u_1) \\
        & = \sigma \circ \big(\alpha s + (1 - \alpha)\apr(\alpha, s, r)W - \alpha r\big)(u_1) \\
        & = \max\Big(0, \big(\alpha s + (1 - \alpha)\apr(\alpha, s, r)W\big)(u_1) - \big(\alpha s + (1 - \alpha)\apr(\alpha, s, r)W\big)(u_2) + \alpha r(u_2) - \alpha r(u_1)\Big) \\
        & \leq \sigma \circ \big(\alpha s + (1 - \alpha)\apr(\alpha, s, r)W\big)(u_1) + \max(0, \alpha r(u_2) - \alpha r(u_1)) \\
        & \leq \sigma \circ \big(\alpha s + (1 - \alpha)\apr(\alpha, s, r)W\big)(u_1) + \alpha r(u_2) \\
        & \leq \sigma \circ \big(\alpha s\big)(u_1) + (1 - \alpha)\sigma \circ \big(\apr(\alpha, s, r) W\big)(u_1) + \alpha r(u_2) \\
        & \leq \alpha s(u_1) + \alpha r(u_2) + (1 - \alpha) (p W) (u_1),
    \end{align*}
    where the first line follows by the definition of the approximate Pagerank, the second line follows by the definition of the Pagerank, and the last two inequalities follow by Lemma~\ref{lem:sigmaproperty}. This proves the first inequality, and the second inequality follows by symmetry.
\end{proof}

We now turn our attention to the Lov\'asz-Simonovits curve defined by the simplified approximate Pagerank vector $p = \sigma \circ \apr(\alpha, s, r)$.
We will bound the curve at some point by the conductance of the corresponding sweep set with the following lemma.
\begin{lemma} \label{lem:ls1}
Let $p = \simplify{\apr(\alpha, s, r)}$. It holds for any $j \in [1, n-1]$ that 
    \begin{align*}
        \lefteqn{\lscurve{\vol(\pjsweep)}}\\
       \leq 
       & \alpha \Big(\lscurve[s]{\vol(\pjsweep)} + \lscurve[r]{\vol(\pjsweep)} \Big)     + (1 - \alpha) \frac{1}{2}\Big( \lscurve{\vol(\pjsweep) - \abs{\partial(\pjsweep)}}\Big)  +   (1 - \alpha)\frac{1}{2}\Big( \lscurve{\vol(\pjsweep) + \abs{\partial(\pjsweep)}}\Big).
    \end{align*}
\end{lemma}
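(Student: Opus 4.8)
The plan is to combine the pointwise bound of Lemma~\ref{lem:sapr_updatestep} with the variational characterisation~\eqref{eq:deflscurve} of the curve and the standard Lov\'asz--Simonovits decomposition of one lazy random-walk step. First note that, by the definition of the curve, we have the exact identity $\lscurve{\vol(\pjsweep)} = p(\pjsweep) = \sum_{w \in \pjsweep} p(w)$, so it suffices to bound the right-hand side. Applying Lemma~\ref{lem:sapr_updatestep} to each $w \in \pjsweep \subseteq V_H$ (the first inequality when $w = u_1$, the second when $w = u_2$) and summing yields
\[
\sum_{w \in \pjsweep} p(w) \;\leq\; \alpha \sum_{w \in \pjsweep} s(w) \;+\; \alpha \sum_{w \in \pjsweep} r(\tau(w)) \;+\; (1-\alpha) \sum_{w \in \pjsweep} (pW)(w),
\]
where $\tau \colon V_H \to V_H$ is the involution that swaps the two copies $u_1 \leftrightarrow u_2$ of each $u \in V_G$.

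For the first sum, $\sum_{w \in \pjsweep} s(w) = s(\pjsweep) \leq \lscurve[s]{\vol(\pjsweep)}$ by Proposition~\ref{prop:lsleq}. For the second, $\sum_{w\in\pjsweep} r(\tau(w)) = r(\tau(\pjsweep))$, and since $\deg_H(u_1) = \deg_H(u_2) = \deg_G(u)$ we have $\vol(\tau(\pjsweep)) = \vol(\pjsweep)$; hence Proposition~\ref{prop:lsleq} gives $r(\tau(\pjsweep)) \leq \lscurve[r]{\vol(\tau(\pjsweep))} = \lscurve[r]{\vol(\pjsweep)}$. Together these account for the $\alpha\big(\lscurve[s]{\cdot} + \lscurve[r]{\cdot}\big)$ term.

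It remains to bound $\sum_{w\in\pjsweep}(pW)(w) = p\big(W\chi_{\pjsweep}^{\transpose}\big)$, which is the classical Lov\'asz--Simonovits step carried out in $H$ (an ordinary undirected graph, so nothing new is required). Writing $z = W\chi_{\pjsweep}^{\transpose}$, one checks $z \in [0,1]^{2n}$ and $\sum_w z(w)\deg_H(w) = \vol(\pjsweep)$, and then splits $2z = z^{-} + z^{+}$ with $z^{\pm} \in [0,1]^{2n}$ by setting, for $w \in \pjsweep$, $z^{+}(w) = 1$ and $z^{-}(w) = |N_H(w)\cap\pjsweep|/\deg_H(w)$, and, for $w \notin \pjsweep$, $z^{-}(w) = 0$ and $z^{+}(w) = |N_H(w)\cap\pjsweep|/\deg_H(w)$. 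The identities $\sum_{w\in\pjsweep}|N_H(w)\cap\pjsweep| = \vol(\pjsweep) - \abs{\partial\pjsweep}$ and $\sum_{w\notin\pjsweep}|N_H(w)\cap\pjsweep| = \abs{\partial\pjsweep}$ give $\sum_w z^{-}(w)\deg_H(w) = \vol(\pjsweep) - \abs{\partial\pjsweep}$ and $\sum_w z^{+}(w)\deg_H(w) = \vol(\pjsweep) + \abs{\partial\pjsweep}$, both lying in $[0, \vol(V_H)]$ since $\abs{\partial\pjsweep} = e_H(\pjsweep, \overline{\pjsweep}) \leq \vol(\overline{\pjsweep})$. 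Applying~\eqref{eq:deflscurve} to $z^{-}$ and $z^{+}$ then yields $(pW)(\pjsweep) = \tfrac12 p z^{-} + \tfrac12 p z^{+} \leq \tfrac12\lscurve{\vol(\pjsweep) - \abs{\partial\pjsweep}} + \tfrac12\lscurve{\vol(\pjsweep) + \abs{\partial\pjsweep}}$. Substituting all three bounds back gives the statement.

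The only point requiring care is the partner-swap bookkeeping for the residual term: Lemma~\ref{lem:sapr_updatestep} couples $p(u_1)$ with $r(u_2)$ rather than $r(u_1)$, so the relevant vertex set is $\tau(\pjsweep)$ and not $\pjsweep$ itself; fortunately $\tau$ preserves volume, so no genuine obstacle arises. Everything else is the routine transport of the Lov\'asz--Simonovits argument to the double cover.
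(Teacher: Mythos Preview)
Your proof is correct and follows essentially the same approach as the paper: apply Lemma~\ref{lem:sapr_updatestep} pointwise over $\pjsweep$, handle the swapped residual term via $\vol(\tau(\pjsweep))=\vol(\pjsweep)$ (the paper writes this set as $\widehat{\pjsweep}$), and then bound $(pW)(\pjsweep)$ by the standard Lov\'asz--Simonovits step on $H$. The only cosmetic difference is that the paper phrases the last step through the directed-edge sets $\In(\pjsweep),\Out(\pjsweep)$ and an auxiliary lemma $p(A)\le\lscurve{|A|}$, whereas you give the equivalent explicit vertex-weight split $2z=z^-+z^+$; both routes are standard and yield the same bound.
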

Since $\vol(\pjsweep) \pm \abs{\partial(\pjsweep)} = \vol(\pjsweep)(1 \pm \Phi(\pjsweep))$, this tells us that when the conductance of the sweep set $\pjsweep$ is large, the Lov\'asz-Simonovits curve around the point $\vol(\pjsweep)$ is close to a straight line.
For the proof of Lemma~\ref{lem:ls1}, we will view the undirected graph $H$ instead as a digraph, and each edge $\{u, v\} \in E_H$ as a pair of directed edges $(u, v)$ and $(v, u)$. For any edge $(u, v)$ and vector $p \in \R_{\geq 0}^{2n}$, let
\[
    p(u, v) = \frac{p(u)}{d(u)}
\]
and for a set of directed edges $A$, let
\[
    p(A) = \sum_{(u, v) \in A} p(u, v).
\]
Now for any set of vertices $S \subset V_H$, let $\In(S) = \{(u, v) \in E_H : v \in S\}$ and $\Out(S) = \{(u, v) \in E_H : u \in S\}$.
Additionally, for any simple set $S \subset V_H$, we define the ``simple complement'' of the set $S$ by
\[
    \widehat S = \{u_2 : u_1 \in S\} \union \{u_1 : u_2 \in S\}.
\]
We will need the following property of the Lov\'asz-Simonovitz curve with regard to sets of edges.
\begin{lemma} \label{lem:lsedges}
    For any set of edges $A$, it holds that
   $$
        p(A) \leq \lscurve{\abs{A}}.$$
\end{lemma}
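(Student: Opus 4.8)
The plan is to reduce the statement to the variational characterization \eqref{eq:deflscurve} of the Lov\'asz-Simonovits curve, namely $\lscurve{x} = \max\{\sum_{u} w(u) p(u) : w \in [0,1]^{2n},\ \sum_u w(u)\deg(u) = x\}$. Since $\lscurve{\cdot}$ is defined by a maximum over feasible weight vectors, it suffices to exhibit one particular $w$ with $\sum_u w(u)\deg(u) = \abs{A}$ whose objective value $\sum_u w(u) p(u)$ is at least (in fact equal to) $p(A)$.

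First I would regroup the sum defining $p(A)$ according to the tails of the edges in $A$. For each vertex $u \in V_H$, let $a(u)$ be the number of edges of $A$ of the form $(u,v)$. Then, by the definition of $p(u,v) = p(u)/d(u)$,
\[
p(A) = \sum_{(u,v)\in A} \frac{p(u)}{d(u)} = \sum_{u \in V_H} a(u)\cdot\frac{p(u)}{d(u)},
\]
which naturally suggests taking the candidate weights $w(u) \triangleq a(u)/d(u)$.

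Then I would check that $w$ is feasible. The key observation is that in the digraph obtained from $H$, every vertex $u$ has exactly $\deg_H(u) = d(u)$ out-edges, so $a(u) \le d(u)$ and hence $w(u) \in [0,1]$. Moreover $\sum_{u} w(u)\,d(u) = \sum_u a(u) = \abs{A}$, since each edge of $A$ contributes to $a(\cdot)$ exactly once, namely at its tail. Substituting $w$ into \eqref{eq:deflscurve} then yields $p(A) = \sum_u w(u) p(u) \le \lscurve{\abs{A}}$, as required.

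The only point requiring care is the degree bookkeeping: one must use that the $d(u)$ appearing in the edge-weight definition $p(u,v) = p(u)/d(u)$ coincides with the number of out-edges of $u$ in the digraph version of $H$, which is exactly what makes the candidate weights lie in $[0,1]$ and sum (against the degree) to $\abs{A}$. There is no substantial obstacle here; the lemma is essentially Proposition~\ref{prop:lsleq} with the fractional indicator $w$ in place of an integral one.
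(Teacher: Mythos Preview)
Your proposal is correct and follows essentially the same approach as the paper's own proof: both group the sum $p(A)$ by the tail vertex, set the weight at $u$ to be the out-degree of $u$ in $A$ divided by $d(u)$, verify this weight lies in $[0,1]$ with $\sum_u w(u)d(u)=\abs{A}$, and then invoke \eqref{eq:deflscurve}.
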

\begin{proof}
    For each $u \in V$, let $x_u = \abs{\{(u, v) \in A\}}$ be the number of edges in $A$ with the tail at $u$.
    Then, we have by definition  that 
    \begin{align*}
        p(A) & = \sum_{(u, v) \in A} \frac{p(u)}{d(u)}  = \sum_{u \in V} \frac{x_u}{d(u)} p(u)
    \end{align*}
    Since $\frac{x_u}{d(u)} \in [0, 1]$ and
    \[
        \sum_{u \in V} \frac{x_u}{d(u)} d(u) = \sum_{u \in V} x_u = \abs{A},
    \]
      the statement follows from equation (\ref{eq:deflscurve}).
\end{proof}
We will also make use of the following lemma from \cite{ACL2006}.
\begin{lemma}[\cite{ACL2006}, Lemma~4]\label{lem:andersenlem4}
    For any distribution $p$, and any set of vertices $A$,
    \[
        pW(A) \leq \frac{1}{2}\cdot 
        \left(p(\In(A) \union \Out(A)) + p(\In(A) \intersect \Out(A))\right)
    \]
\end{lemma}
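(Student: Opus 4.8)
\textbf{Proof plan for Lemma~\ref{lem:andersenlem4}.}
The plan is to prove the inequality by a direct counting argument on the directed edges of $H$, working from the definition $pW(A) = \sum_{v \in A} (pW)(v)$ and the formula for the lazy random walk matrix $W = \frac{1}{2}(I + D^{-1}A)$. First I would expand $(pW)(v) = \frac12 p(v) + \frac12 \sum_{u \in N(v)} \frac{p(u)}{d(u)}$, and observe that with the edge-notation $p(u,v) = p(u)/d(u)$ this becomes $(pW)(v) = \frac12 p(v) + \frac12 \sum_{(u,v) \in \In(\{v\})} p(u,v)$. Summing over $v \in A$ then gives
\[
    pW(A) = \frac12 \sum_{v \in A} p(v) + \frac12 \sum_{(u,v) \in \In(A)} p(u,v) = \frac12 p(A)' + \frac12\, p(\In(A)),
\]
where $p(A)' = \sum_{v\in A} p(v)$ is the plain mass on $A$. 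The crux is therefore to show $p(A)' \le p(\In(A) \cup \Out(A)) + p(\In(A) \cap \Out(A)) - p(\In(A))$, i.e.\ that $\sum_{v \in A} p(v)$ can be rewritten/bounded in terms of the relevant edge sets.

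The key identity to exploit is that, for each vertex $u$, the total outgoing edge-mass equals the vertex mass: $\sum_{(u,v) \in \Out(\{u\})} p(u,v) = \sum_{(u,v)} \frac{p(u)}{d(u)} = p(u)$, since $u$ has exactly $d(u)$ out-edges in the symmetrized digraph. Hence $\sum_{v \in A} p(v) = \sum_{v\in A}\sum_{(v,w)} p(v,w) = p(\Out(A))$. So the target inequality is equivalent to
\[
    p(\Out(A)) + p(\In(A)) \le p(\In(A)\cup\Out(A)) + p(\In(A)\cap\Out(A)),
\]
which, by inclusion–exclusion applied to the sum of edge-weights over the two edge sets $\In(A)$ and $\Out(A)$, holds with \emph{equality}: $p(\In(A)) + p(\Out(A)) = p(\In(A)\cup\Out(A)) + p(\In(A)\cap\Out(A))$ because $p(\cdot)$ is additive over disjoint sets of directed edges. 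Plugging this back, $pW(A) = \frac12 p(\Out(A)) + \frac12 p(\In(A)) = \frac12\big(p(\In(A)\cup\Out(A)) + p(\In(A)\cap\Out(A))\big)$, so in fact the stated inequality holds as an equality; stating it as ``$\le$'' is simply a weakening that suffices downstream.

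The main thing to be careful about — and the only real obstacle — is bookkeeping with the lazy (self-loop) term and the convention that $H$ is viewed as a digraph with each undirected edge replaced by the two opposite arcs. I need to make sure the factor of $\frac12$ from $W$ is tracked consistently, that the self-loop contribution $\frac12 p(v)$ is correctly folded into the $\frac12 p(\In(A))$ term via the observation $p(\Out(A)) = \sum_{v\in A} p(v)$ (so the ``$\frac12 p(v)$'' and ``$\frac12 \sum_{(u,v)\in\In(A)} p(u,v)$'' pieces are genuinely $\frac12 p(\Out(A))$ and $\frac12 p(\In(A))$ respectively), and that loops are not double-counted in $\In$ or $\Out$. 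Once these conventions are pinned down, the argument is a two-line inclusion–exclusion. Since the paper cites this as ``\cite{ACL2006}, Lemma~4,'' I would keep the proof brief, essentially recording the computation above.
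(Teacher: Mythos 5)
You should note that the paper itself gives no proof of this lemma — it is imported verbatim as Lemma~4 of Andersen et al.\ \cite{ACL2006} — and your blind argument is correct and is essentially the standard proof of that result: expand $pW(A)$ for the lazy walk $W=\tfrac12(I+D^{-1}A)$, identify $\sum_{v\in A}p(v)$ with $p(\Out(A))$ and the neighbour term with $p(\In(A))$, and apply inclusion--exclusion on the two edge sets, which in fact gives equality (the stated ``$\leq$'' is a harmless weakening). The only hypothesis you implicitly use is that every vertex has positive degree (so that $p(u,v)=p(u)/\deg(u)$ is defined and the out-edge masses at $u$ sum to $p(u)$), which holds in the double cover setting where the lemma is applied.
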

We now have all the pieces we need to prove the behaviour of the Lov\'asz-Simonovits curve described in Lemma~\ref{lem:ls1}.
\begin{proof}[Proof of Lemma~\ref{lem:ls1}]
By definition, we have that 
\begin{align*}
    \lefteqn{p(S)} \\
    & = \sum_{u_1\in  S} p(u_1) + \sum_{u_2\in S} p (u_2) \\
    & \leq\sum_{u_1\in S}\Big( \alpha \left(s(u_1) + r(u_2)\right) + (1 - \alpha)(p W)(u_1)\Big)  + \sum_{u_2\in S}   \Big(\alpha \left(s(u_2) + r(u_1)\right) + (1 - \alpha)(p W)(u_2) \Big) \\
    & = \alpha\cdot \Bigg(\sum_{u_1\in S} s(u_1)+ \sum_{u_2\in S} s(u_2)  +  \sum_{u_1\in S} r(u_2)+ \sum_{u_2\in S} r(u_1) \Bigg) + (1-\alpha)\cdot\Bigg( \sum_{u_1\in S} (pW)(u_1) + \sum_{u_2\in S} (pW)(u_2) \Bigg)\\
    & = \alpha\cdot \Big( s(S) + r\left(\widehat{S}\right)\!\Big) + (1-\alpha)\cdot pW(S)\\
    & \leq \alpha\cdot \Big( s(S) + r\left(\widehat{S}\right)\!\Big) + (1-\alpha)\cdot \frac{1}{2} \cdot \Big(p(\In(S) \union \Out(S)) + p(\In(S) \intersect\Out(S))\Big),
\end{align*}
where the first inequality follows by Lemma~\ref{lem:sapr_updatestep} and the last inequality follows by  Lemma~\ref{lem:andersenlem4}.
 
 Now, recall that $\lscurve{\vol(\pjsweep)} = p(\pjsweep)$ and notice that $\vol\left(\widehat{\pjsweep}\right) = \vol\left(\pjsweep\right)$ holds by the definition of $\widehat{S}$. Hence, using Lemma~\ref{lem:lsedges}, we have that
 \begin{align*}
        \lefteqn{ \lscurve{\vol(\pjsweep)}}\\
       & =  p(\pjsweep) \\
        & \leq \alpha\left( s(S_j^p) + r\left(\widehat{S_j^p} \right)
         \right)  + (1 - \alpha)\cdot \frac{1}{2}\Big( p\left( \In(\pjsweep) \union \Out(\pjsweep)\right) + p(\In(\pjsweep) \intersect \Out(\pjsweep))\Big) \\
        & \leq  \alpha\Big(\lscurve[s]{\vol(\pjsweep)} + \lscurve[r]{\vol(\pjsweep)}\Big) + (1 - \alpha)\cdot \frac{1}{2}\left( \lscurve{\abs{\In(\pjsweep) \union \Out(\pjsweep)}} + \lscurve{\abs{\In(\pjsweep) \intersect \Out(\pjsweep)}} \right).
    \end{align*}
    It remains to show that
    \[
        \abs{\In(\pjsweep) \intersect \Out(\pjsweep)} = \vol(\pjsweep) - \abs{\partial(\pjsweep)},
    \]
    and
    \[
        \abs{\In(\pjsweep) \union \Out(\pjsweep)} = \vol(\pjsweep) + \abs{\partial(\pjsweep)}.
    \]
    The first follows by noticing that $\abs{\In(\pjsweep) \intersect \Out(\pjsweep)}$ counts precisely twice the number of edges with both endpoints inside $\pjsweep$.
    The second follows since
    \[
        \abs{\In(\pjsweep) \union \Out(\pjsweep)} = \abs{\In(\pjsweep) \intersect \Out(\pjsweep)} + \abs{\In(\pjsweep) \setminus \Out(\pjsweep)} + \abs{\Out(\pjsweep) \setminus \In(\pjsweep)}
    \]
    and both the second and third terms on the right hand side are equal to $\abs{\partial(\pjsweep)}$.
\end{proof}
To understand the following lemma, notice that since $p[0] = 0$ and $p[\vol(H)] = 1$, if the curve were simply linear then it would be the case that $p[k] = \frac{k}{\vol(H)}$ and so the following lemma bounds how far the Lov\'asz-Simonovits curve deviates from this line. In particular, if there is no sweep set with small conductance, then the deviation must be small.
\begin{lemma} \label{lem:lscurve}
     Let $p = \simplify{\apr_H(\alpha, s, r)}$ such that $\max_{u \in V_H} \frac{r(u)}{d(u)} \leq \epsilon$, and  $\phi$   be any constant in $[0, 1]$. If $\cond{\pjsweep} \geq \phi$ for all $j \in [1, \abs{\mathrm{supp}(p)}]$, then
    \[
        p[k] - \frac{k}{\vol(H)} \leq \alpha t + \alpha \epsilon k t + \sqrt{\min(k, \vol(H) - k)}\left(1 - \frac{\phi^2}{8}\right)^t
    \]
    for all $k \in [0, \vol(H)]$ and integer $t \geq 0$.
\end{lemma}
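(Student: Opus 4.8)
The plan is to prove the bound by induction on $t$, following the classical Lov\'asz--Simonovits argument (cf.\ \cite{LS1990, ACL2006}) but carrying the extra residual term that the $\sigma$-operator forces into Lemma~\ref{lem:ls1}. Write $R_t(k) \triangleq \frac{k}{\vol(H)} + \alpha t + \alpha \epsilon k t + \sqrt{\min(k,\vol(H)-k)}\left(1-\frac{\phi^2}{8}\right)^t$ for the claimed upper bound. For each fixed $t$, $R_t$ is concave in $k$ (it is a sum of affine functions and a concave one), while $p[\cdot]$ is concave and piecewise linear with breakpoints exactly at the volumes $\vol(\pjsweep)$. Hence it suffices to verify $p[k]\le R_t(k)$ at those breakpoints together with $k=0$ and $k=\vol(H)$: for any other $k$, writing $k=\lambda a+(1-\lambda)b$ with $a,b$ consecutive breakpoints and using that $p$ is affine on $[a,b]$ gives $p[k]=\lambda p[a]+(1-\lambda)p[b]\le \lambda R_t(a)+(1-\lambda)R_t(b)\le R_t(k)$ by concavity of $R_t$.

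For the base case $t=0$, a short case split on whether $k\le \vol(H)/2$ suffices: when $k\le\vol(H)/2$ one uses $p[k]\le \min(1,k)\le \sqrt{k}=\sqrt{\min(k,\vol(H)-k)}$ (the bound $p[k]\le\min(1,k)$ holds since $p$ is a sub-distribution, so $p[k]\le \|p\|_1\le 1$, and $p$ is affine with slope at most $1$ on $[0,\vol(S_1^p)]$), while when $k>\vol(H)/2$ one uses $p[k]-\tfrac{k}{\vol(H)}\le 1-\tfrac{k}{\vol(H)}=\tfrac{\vol(H)-k}{\vol(H)}\le\sqrt{\vol(H)-k}$. For the inductive step, assume $p[\cdot]\le R_t(\cdot)$ everywhere and fix a breakpoint $k=\vol(\pjsweep)$. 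Lemma~\ref{lem:ls1}, together with $s[k]\le\|s\|_1\le 1$ and $r[k]\le\epsilon k$ (both from the variational formula \eqref{eq:deflscurve} and the hypothesis $r(u)\le\epsilon\deg(u)$), yields
\[
p[k]\le \alpha(1+\epsilon k)+\frac{1-\alpha}{2}\Big(p[k-|\partial(\pjsweep)|]+p[k+|\partial(\pjsweep)|]\Big).
\]
Applying the induction hypothesis to the two $p$-terms (the arguments lie in $[0,\vol(H)]$ since $0\le|\partial(\pjsweep)|\le \min(\vol(\pjsweep),\vol(H)-\vol(\pjsweep))$), and noting that the affine parts $\tfrac{k\mp|\partial|}{\vol(H)}$ and $\alpha\epsilon(k\mp|\partial|)t$ average back exactly to $\tfrac{k}{\vol(H)}$ and $\alpha\epsilon k t$, and that $\alpha+(1-\alpha)\alpha t\le \alpha(t+1)$ and $1-\alpha\le 1$, we obtain
\[
p[k]\le \frac{k}{\vol(H)}+\alpha(t+1)+\alpha\epsilon k(t+1)+\frac12\Big(g_t(k-|\partial|)+g_t(k+|\partial|)\Big),
\]
where $g_t(x)\triangleq\sqrt{\min(x,\vol(H)-x)}\left(1-\tfrac{\phi^2}{8}\right)^t$.

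It remains to show $\tfrac12\big(g_t(k-|\partial|)+g_t(k+|\partial|)\big)\le g_{t+1}(k)$, i.e.\ that the $\sqrt{\min}$ profile contracts by the factor $1-\tfrac{\phi^2}{8}$ under this averaging. Assume WLOG $k\le \vol(H)/2$ (the other case is symmetric). One checks $\min(k+|\partial|,\vol(H)-k-|\partial|)\le k+|\partial|$ in all cases and $\min(k-|\partial|,\vol(H)-k+|\partial|)=k-|\partial|$, so the left-hand side is at most $\tfrac12\big(\sqrt{k-|\partial|}+\sqrt{k+|\partial|}\big)\left(1-\tfrac{\phi^2}{8}\right)^t$. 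Using $\sqrt{k^2-y^2}\le k-\tfrac{y^2}{2k}$ gives $\big(\sqrt{k-y}+\sqrt{k+y}\big)^2=2k+2\sqrt{k^2-y^2}\le 4k-\tfrac{y^2}{k}$, hence $\sqrt{k-y}+\sqrt{k+y}\le 2\sqrt{k}\,(1-\tfrac{y^2}{8k^2})$; substituting $y=|\partial(\pjsweep)|\ge \phi\,\min(\vol(\pjsweep),\vol(H)-\vol(\pjsweep))=\phi k$ (the conductance hypothesis) completes the step and the induction. The main obstacle is precisely this last estimate: extracting the quadratic-in-$\phi$ contraction from the concavity of the square root while correctly handling the $\min(\cdot,\vol(H)-\cdot)$ truncation near $\vol(H)/2$. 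The rest is careful but routine bookkeeping, the only genuinely new feature relative to \cite{ACL2006, LS1990} being the linear-in-$k$ residual term $\alpha\epsilon k t$ contributed by the $r(\widehat S)$ term of Lemma~\ref{lem:ls1}, which one must check averages consistently through the recursion.
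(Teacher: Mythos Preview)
Your proof is correct and follows essentially the same inductive Lov\'asz--Simonovits argument as the paper: both induct on $t$, feed Lemma~\ref{lem:ls1} together with $s[k]\le 1$ and $r[k]\le \epsilon k$ into the recursion, and extract the $1-\phi^2/8$ contraction from the square-root profile. The only cosmetic differences are that the paper checks the base case at integer $k$ (using $1\le\sqrt{\min(k,\vol(H)-k)}$ there) rather than via $p[k]\le\min(1,k)$, drops the $(1-\alpha)$ factor immediately rather than at the end, replaces $|\partial|$ by $\phi\overline{k}$ via concavity of $p$ before applying the hypothesis rather than after, and derives the contraction from the Taylor expansion of $\sqrt{1\pm\phi}$ instead of your squaring identity; none of these affect the substance.
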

\begin{proof} 
For ease of discussion we write  $k_j = \vol(\pjsweep)$ and let $\overline{k_j} = \min(k_j, \vol(H) - k_j)$. For convenience we will write
    \[
        f_t(k) =   \alpha t + \alpha \epsilon k t + \sqrt{\min(k, \vol(H) - k)}\left(1 - \frac{\phi^2}{8}\right)^t.
    \]
    We will prove by induction that for all $t \geq 0$,
    \[
        p[k] - \frac{k}{\vol(H)} \leq f_t(k).
    \]
    For the base case, this equation holds for $t = 0$ for any $\phi$. Our proof is based on a case distinction:
    (1) For any   integer $k \in [1, \vol(H) - 1]$,
    \[
        p[k] - \frac{k}{\vol(H)}\leq 1 \leq \sqrt{\min(k, \vol(H) - k)} \leq f_0(k);
    \] (2) 
    For $k = 0$ or $k = \vol(H)$, $p[k] - \frac{k}{\vol(H)} = 0 \leq f_0(k)$. The base case follows since $f_0$ is concave, and $p[k]$ is linear between integer values.
    
    Now for the inductive case, we assume that the claim holds for $t$ and will show that it holds for $t+1$.
    It suffices to show that the following holds for every $j \in [1, \abs{\mathrm{supp}(p)}]$:
    \[
        p[k_j] - \frac{k_j}{\vol(H)} \leq f_{t+1}(k_j).
    \]
    The theorem will follow because this equation holds trivially for $j = 0$ and $j = n$ and $p[k]$ is linear between $k_j$ for $j \in \{0, n\} \union [1, |\mathrm{supp}(p)|]$.
    
   For any $j \in [1, |\mathrm{supp}(p)|]$,  it holds  by Lemma~\ref{lem:ls1} that 
   \begin{align*}
        \lscurve{k_j} & \leq \alpha \left(\lscurve[s]{k_j} + \lscurve[r]{k_j} \right) + (1 - \alpha) \frac{1}{2}\left( \lscurve{k_j - \abs{\partial(\pjsweep)}} + \lscurve{k_j + \abs{\partial(\pjsweep)}} \right) \\
        & \leq \alpha + \alpha \epsilon k_j + \frac{1}{2}\left(\lscurve{k_j - \cond{\pjsweep}\cdot\overline{k_j}} + \lscurve{k_j + \cond{\pjsweep}\cdot \overline{k_j}} \right) \\
        & \leq \alpha + \alpha \epsilon k_j + \frac{1}{2}\left( \lscurve{k_j - \phi \overline{k_j}} + \lscurve{k_j + \phi \overline{k_j}}\right),
    \end{align*}
    where the second line uses the fact that $\lscurve[r]{k_j} \leq \epsilon k_j$ and the last line uses the concavity of $\lscurve{k}$ and the fact that $\cond{\pjsweep} \geq \phi$. By the induction hypothesis, we have that
    \begin{align*}
        \lefteqn{\lscurve{k_j}}\\
        & \leq \alpha + \alpha \epsilon k_j + \frac{1}{2}\left( f_t(k_j - \phi \overline{k_j}) + \frac{k_j - \phi \overline{k_j}}{\vol(H)} + f_t(k_j + \phi \overline{k_j}) + \frac{k_j + \phi \overline{k_j}}{\vol(H)} \right) \\
          & = \alpha + \alpha \epsilon k_j + \frac{k_j}{\vol(H)} + \frac{1}{2}\left( f_t(k_j + \phi \overline{k_j}) + f_t(k_j - \phi \overline{k_j}) \right) \\
            &\! \begin{aligned}[t] & =   \alpha + \alpha \epsilon k_j + \alpha t + \alpha \epsilon k_j t +\frac{k_j}{\vol(H)} \\ & \qquad + \frac{1}{2}\left( \sqrt{\min(k_j - \phi \overline{k_j}, \vol(H) - k_j + \phi \overline{k_j})} + \sqrt{\min(k_j + \phi \overline{k_j}, \vol(H) - k_j - \phi \overline{k_j})} \right) \left(1 - \frac{\phi^2}{8}\right)^t \end{aligned} \\
        & \leq   \alpha(t + 1) + \alpha \epsilon k_j (t + 1) + \frac{k_j}{\vol(H)}+ \frac{1}{2}\left(\sqrt{\overline{k_j} - \phi \overline{k_j}} + \sqrt{\overline{k_j} + \phi \overline{k_j}}\right) \left(1 - \frac{\phi^2}{8}\right)^t
    \end{align*}
    Now, using the Taylor series of $\sqrt{1 + \phi}$ at $\phi = 0$, we see the following for any $x \geq 0$ and $\phi \in [0, 1]$.
    \begin{align*}
        \frac{1}{2}\left( \sqrt{x - \phi x} + \sqrt{x + \phi x}\right) & \leq \frac{\sqrt{x}}{2}\left( \left(1 - \frac{\phi}{2} - \frac{\phi^2}{8} - \frac{\phi^3}{16}\right) + \left(1 + \frac{\phi}{2} - \frac{\phi^2}{8} + \frac{\phi^3}{16}\right)  \right) \\
        & \leq \sqrt{x}\left(1 - \frac{\phi^2}{8}\right).
    \end{align*}
    Applying this with $x = \overline{k_j}$, we have
    \begin{align*}
        \lscurve{k_j} - \frac{k_j}{\vol(H)} & \leq   \alpha (t + 1) + \alpha \epsilon k_j (t+1) + \sqrt{\overline{k_j}}\left(1 - \frac{\phi^2}{8}\right) \left(1 - \frac{\phi^2}{8}\right)^t \\
        & = f_{t+1}(k_j),
    \end{align*}
    which completes the proof.
\end{proof}
The next lemma follows almost directly from Lemma~\ref{lem:lscurve} by carefully choosing the value of $t$.
\begin{proof}[Proof of Lemma~\ref{lem:probimpliescond}]
Let $H_c$ be a weighted version of the double cover with the weight of every edge equal to some constant $c$.
    Let $k = \vol_{H_c}(S)$ and $\overline{k} = \min(k, \vol(H_c) - k)$.
    Notice that the behaviour of random walks and the conductance of vertex sets are unchanged by re-weighting the graph by a constant factor.
    As such, $p = \sigma \circ \apr_{H_c}(\alpha, s, r)$ and we let $\phi = \min_j \cond[H_c]{\pjsweep}$. Hence, by Lemma~\ref{lem:lscurve} it holds for any $t>0$ that
    \begin{align*}
        \delta & < \lscurve{k} - \frac{k}{\vol(H_c)}  < \alpha t (1 + \epsilon k) + \sqrt{\overline{k}} \left(1 - \frac{\phi^2}{8}\right)^t.
    \end{align*}
    Letting $v = \vol(H_c)$ we set
    $t = \left\lceil \frac{8}{\phi^2} \ln \frac{2 \sqrt{v/2}}{\delta} \right\rceil$.
    Now, assuming that
    \begin{equation} \label{eq:deltabound}
        \delta \geq \frac{2}{\sqrt{v / 2}}
    \end{equation}
    we have
\[
        \delta < (1 + \epsilon k) \alpha \left\lceil \frac{8}{\phi^2} \ln \frac{2 \sqrt{v/2}}{\delta}\right\rceil + \frac{\delta}{2 \sqrt{v/2}} \sqrt{\overline{k}} 
        < \frac{9 (1 + \epsilon k) \alpha \ln(v/2)}{\phi^2} + \frac{\delta}{2}
        \]
    We set $c$ such that
    \[
        v = c\ \vol(H) = \vol(H_c) = \left(\frac{4}{\delta}\right)^2
    \]
    which satisfies \eqref{eq:deltabound} and implies that \[
        \phi< \sqrt{\frac{18 (1 + \epsilon k) \alpha \ln(4 / \delta)}{\delta}}.
        \]
By the definition of $\phi$, there must be some $S_j^p$ with the desired property.
\end{proof}
Finally, we prove the performance guarantees of Algorithm~\ref{alg:local_max_cut} in the following theorem.
\begin{proof}[Proof of Theorem~\ref{thm:main_thm}]
We set the parameters of the main algorithm to be
\begin{itemize}
    \item $\alpha = \widehat{\beta}^2/378 = 20\beta$ 
    \item $\epsilon=1/(20\gamma)$. 
\end{itemize} \ 
Let  $p = \simplify{\apr_H(\alpha, \chi_v, r)}$ be the simplified approximate Pagerank vector computed in the algorithm, which satisfies that $\max_{v \in V_H} \frac{r(v)}{d(v)} \leq \epsilon$, and let $C'=L_1\cup R_2$ be the target set in the double cover. Hence, by  Lemma~\ref{lem:sapr_escapingmass} we have that
\[
    p(C')  \geq 1 - \frac{2 \beta}{\alpha} - 2 \epsilon\cdot  \vol(C)  \geq 1 - \frac{1}{10} - \frac{1}{10} = \frac{4}{5}.
\]
Notice that, since $C'$ is simple, we have that $\frac{\vol(C')}{\vol(V_H)} \leq \frac{1}{2}$ and
\[
    p(C') - \frac{\vol(C')}{\vol(V_H)} \geq \frac{4}{5} - \frac{1}{2} = \frac{3}{10}.
\]
Therefore, by Lemma~\ref{lem:probimpliescond} and choosing $\delta=3/10$, there is some $j \in [1, \abs{\mathrm{supp}(p)}]$ such that
\begin{align*}
    \cond{\pjsweep} & < \sqrt{\frac{360}{3}(1 + \epsilon \vol(C))\alpha \ln(\frac{40}{3})} \leq \sqrt{\left(1 + \frac{1}{20}\right) 7200 \beta}   = \widehat \beta.
\end{align*}
Since such a $S_j^p$ is a simple set, by Lemma~\ref{lem:cond_bip}, we know that the sets $L' = \{u \in V_G : u_1 \in \pjsweep \}$ and $R' = \{u \in V_G : u_2 \in \pjsweep \}$ satisfy $\bipart(L', R') \leq \widehat \beta$.
    For the volume guarantee, Lemma~\ref{lem:aprdc} gives that
    \[
        \vol(\mathrm{supp}(p)) \leq \frac{1}{\epsilon \alpha} = \frac{\gamma}{\beta}.
    \]
    Since the sweep set procedure in the algorithm is over the support of $p$, we have that
    $$
        \vol(L' \union R') \leq \vol(\mathrm{supp}(p)) \leq \beta^{-1} \gamma.$$
    Now, we will analyse the running time of the algorithm.
    By Lemma~\ref{lem:aprdc}, computing the approximate Pagerank vector $p'$ takes time
    \[
        \bigo{\frac{1}{\epsilon \alpha}} = \bigo{\beta^{-1}\gamma}.
    \]
    Since $\vol(\mathrm{supp}(p')) = \bigo{\frac{1}{\epsilon \alpha}}$, computing $\simplify{p'}$ can also be done in time $\bigo{\beta^{-1}\gamma}$.
    The cost of checking each conductance in the sweep set loop is $\bigo{\vol(\mathrm{supp}(p)) \ln(n)} = \bigo{\beta^{-1} \gamma \ln(n)}$. This dominates the running time of the algorithm, which completes the proof of the theorem.
\end{proof}

\section{Omitted detail from Section~\ref{sec:directed}\label{sec:directedap}}
 In this section, we present all the omitted details from Section~\ref{sec:directed} of our paper. We will use the evolving set process to analyse the correctness and performance of the \texttt{EvoCutDirected} algorithm.
\paragraph{The evolving set process.}
     To understand the ESP update process, notice that $\chi_v W \chi_{S_i}^\transpose$ is the probability that a one-step random walk starting at $v$ ends inside $S_i$.
    Given that the transition kernel of the evolving set process is given by $\mathbf{K}(S, S') = \p[S_{i+1} = S' | S_i = S]$, the \emph{volume-biased} ESP is defined by the transition kernel
    \[
        \mathbf{\widehat K}(S, S') = \frac{\vol(S')}{\vol(S)} \mathbf{K}(S, S')
    \]
We now give some useful facts about the evolving set process which we will use in our analysis of Algorithm~\ref{alg:directed_evo_cut}.
The first proposition tells us that an evolving set process running for sufficiently many steps is very likely to have at least one state with low conductance.

    \begin{proposition}[\cite{AP2009}, Corollary 1] \label{prop:esp_cond}
        Let $S_0, S_1, \ldots, S_T$ be sets drawn from the volume-biased evolving set process beginning at $S_0$. Then, it holds  for any constant $c \geq 0$ that 
        \[
            \p\left[\min_{j \leq T} \Phi(S_j) \leq \sqrt{c} \sqrt{4 T^{-1} \ln (\vol(V))}\right] \geq 1 - \frac{1}{c}.
        \]
    \end{proposition}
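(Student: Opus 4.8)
Throughout, write $\Phi$ for conductance in the ambient graph and recall that the evolving set process here is the \emph{volume-biased} one, so its transition kernel is the Doob $h$-transform of the plain ESP kernel $\mathbf{K}$ with $h(S)=\vol(S)$. The plan is to run a supermartingale argument based on the standard ``growth gauge'' estimate for the ESP. First I would record the two elementary facts I will lean on: (i) for the plain ESP, $\E[\vol(S_{i+1})\mid S_i=S]=\vol(S)$ because the lazy walk $W$ is reversible with respect to the degree distribution (this is exactly why $\widehat{\mathbf K}(S,S')=\tfrac{\vol(S')}{\vol(S)}\mathbf K(S,S')$ is a probability kernel), so that for any $f$ one has the Doob-transform identity $\widehat\E[f(S_{i+1})\mid S_i=S]=\vol(S)^{-1}\E[\vol(S_{i+1})f(S_{i+1})\mid S_i=S]$; and (ii) the Morris--Peres growth gauge: for the plain ESP started at $S$, $\E[\sqrt{\vol(S_{i+1})}\mid S_i=S]\le \sqrt{\vol(S)}\,(1-\Phi(S)^2/\kappa)$ for an absolute constant $\kappa$ (using the laziness of $W$). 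Applying (i) with $f=\vol^{-1/2}$ and then (ii) gives, under the volume-biased process, $\widehat\E[\vol(S_{i+1})^{-1/2}\mid S_i=S]\le \vol(S)^{-1/2}(1-\Phi(S)^2/\kappa)$. In other words, $M_i:=\vol(S_i)^{-1/2}$ is a non-negative supermartingale that contracts at each step by a factor governed by the conductance of the current state.

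Next I would fix $\phi$ and introduce the stopping time $\tau=\min\{j\ge 0:\Phi(S_j)\le\phi\}$, so that the event whose probability we must control is $B=\{\tau>T\}$. Define the stopped, exponentially reweighted process $Y_j=\vol(S_{j\wedge\tau})^{-1/2}\,(1-\phi^2/\kappa)^{-(j\wedge\tau)}$. On $\{\tau>j\}$ we have $\Phi(S_j)>\phi$, so the one-step estimate gives $\widehat\E[Y_{j+1}\mid\mathcal F_j]\le(1-\phi^2/\kappa)^{-(j+1)}\vol(S_j)^{-1/2}(1-\phi^2/\kappa)=Y_j$, while on $\{\tau\le j\}$ the process is frozen; hence $(Y_j)$ is a non-negative supermartingale and $\widehat\E[Y_T]\le Y_0=\vol(S_0)^{-1/2}\le 1$. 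Since the volume-biased ESP never absorbs at $\emptyset$ we have $\vol(S_j)\le\vol(V)$ always, so on $B$, $Y_T=\vol(S_T)^{-1/2}(1-\phi^2/\kappa)^{-T}\ge \vol(V)^{-1/2}(1-\phi^2/\kappa)^{-T}$. Taking expectations of $Y_T\ge Y_T\mathbf 1_B$ and rearranging yields $\p[B]\le\sqrt{\vol(V)}\,(1-\phi^2/\kappa)^{T}$.

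Finally I would substitute $\phi=\sqrt c\,\sqrt{4T^{-1}\ln\vol(V)}$, i.e.\ $\phi^2=4cT^{-1}\ln\vol(V)$, and use $1-x\le e^{-x}$ to get $\p[B]\le \sqrt{\vol(V)}\exp(-T\phi^2/\kappa)=\vol(V)^{\,1/2-4c/\kappa}$, which for the constant $\kappa$ coming out of the growth gauge (the one calibrated to produce the factor $4$ in the statement) is at most $1/c$ for all $c\ge 1$ and $\vol(V)\ge 2$; the cases $c<1$ and $\vol(V)\le 1$ are trivial. Since $\{\min_{j\le T}\Phi(S_j)\le\phi\}$ is the complement of $B$, this is exactly the claimed bound. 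The hard part is step (ii): establishing the growth gauge $\E[\sqrt{\vol(S_{i+1})}\mid S_i=S]\le\sqrt{\vol(S)}(1-\Phi(S)^2/\kappa)$ requires the Lov\'asz--Simonovits-style analysis of how one ESP step splits $S$ relative to its edge boundary together with a Cheeger/Cauchy--Schwarz inequality, and extracting the sharp constant $\kappa$ (rather than a worse one) is the delicate point; the optional-stopping and arithmetic above are routine. If one instead invokes the ESP theorem of~\cite{AP2009} as a black box, the three steps above are precisely the promised corollary.
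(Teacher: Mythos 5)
The paper offers no proof of this statement at all: Proposition~\ref{prop:esp_cond} is imported verbatim as Corollary~1 of Andersen and Peres~\cite{AP2009}, so there is no internal argument to compare yours against. Your reconstruction is essentially the Andersen--Peres proof itself: the volume-biased ESP is the Doob $h$-transform of the plain ESP by the harmonic function $\vol(\cdot)$ (harmonicity following from reversibility of the lazy walk), the Morris--Peres growth gauge gives $\E[\sqrt{\vol(S_{i+1})}\mid S_i=S]\leq\sqrt{\vol(S)}\,(1-\Phi(S)^2/8)$, and combining the two makes $\vol(S_i)^{-1/2}$, exponentially reweighted, a non-negative supermartingale; you correctly identify the growth gauge as the only genuinely hard ingredient, and black-boxing it is fair given that the paper black-boxes the entire corollary. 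The one place you are looser than the original is the final calibration: routing $c$ through the conductance threshold and optional stopping yields $\p[\tau>T]\leq\sqrt{\vol(V)}\,(1-\phi^2/8)^T=\vol(V)^{(1-c)/2}$, and this is \emph{not} literally at most $1/c$ for every $c\geq 1$ once $\vol(V)\geq 2$, contrary to your claim (e.g.\ $\vol(V)=2$, $c=2$ gives $2^{-1/2}>1/2$, and similar failures occur for $\vol(V)\leq 7$); in those degenerate cases the proposition still holds because the stated threshold exceeds $1$ or the chain absorbs in $V$, but that needs saying. Andersen--Peres avoid this issue by applying Markov's inequality directly to the (super)martingale $\vol(S_T)^{-1/2}\prod_{t<T}(1-\psi(S_t))^{-1}$, so the $1-1/c$ comes straight from Markov and the resulting $\ln c$ term is absorbed into the $\sqrt{c}$ factor of the threshold; if you want a self-contained proof of the proposition exactly as stated, that is the cleaner route, whereas your stopping-time variant buys a slightly more transparent ``either a low-conductance set appears or the volume proxy keeps shrinking'' narrative at the cost of these boundary cases.
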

        If $(X_0, X_1, \ldots, X_T)$ is a lazy random walk Markov chain with $X_0 = x$ then for any set $A \subseteq V_H$ and integer $T$, let
    \[
        \mathrm{esc}(x, T, A) \triangleq \p \left[\bigcup_{j = 0}^T X_j \not \in A\right]
    \]
    be the probability that the random walk leaves the set $A$ in the first $T$ steps.
    We will use the following results connecting the escape probability to the conductance of the set $A$ and the volume of the sets in the evolving set process.
    \begin{proposition}[\cite{ST2013}, Proposition 2.5] \label{prop:escape_prob}
        There is a set $A_T \subseteq A$ with $\vol(A_T) \geq \frac{1}{2}\vol(A)$, such that it holds for any $x \in A_T$ that
        $$
            \mathrm{esc}(x, T, A) \leq T \Phi(A).$$
    \end{proposition}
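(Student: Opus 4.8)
The plan is to bound the degree-weighted sum of the escape probability over $A$, and then extract the set $A_T$ by a Markov-type argument. Write $\rho(x) \triangleq \mathrm{esc}(x, T, A)$ for $x \in A$, so $\rho(x) = \p\left[\bigcup_{j=0}^T X_j \notin A \mid X_0 = x\right]$. The goal is to prove
\[
\sum_{x \in A} \deg(x)\, \rho(x) \;\leq\; \frac{T}{2}\,\abs{\partial(A)} \;\leq\; \frac{T}{2}\,\Phi(A)\,\vol(A),
\]
where the last step is immediate from $\Phi(A) = \abs{\partial(A)}/\min(\vol(A), \vol(V_H \setminus A)) \geq \abs{\partial(A)}/\vol(A)$. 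Granting this, I would set $A_T = \{x \in A : \rho(x) \leq T\Phi(A)\}$; then $T\Phi(A)\cdot\vol(A\setminus A_T) \leq \sum_{x \in A\setminus A_T}\deg(x)\rho(x) \leq \tfrac{T}{2}\Phi(A)\vol(A)$, which forces $\vol(A\setminus A_T) \leq \tfrac12\vol(A)$, i.e. $\vol(A_T)\geq\tfrac12\vol(A)$, and $\rho(x)\le T\Phi(A)$ for $x\in A_T$ holds by construction. (The degenerate case $\abs{\partial(A)}=0$ is trivial since then $\rho\equiv 0$ on $A$.)

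To prove the key inequality, first I would apply a union bound on the first exit time $\tau = \min\{j : X_j \notin A\}$. For $x \in A$, if $\tau \leq T$ then $1 \leq \tau \leq T$, $X_{\tau-1}\in A$, and $X_\tau\notin A$, so the event $\{\tau\le T\}$ is contained in $\bigcup_{j=0}^{T-1}\{X_j \in A,\ X_{j+1}\notin A\}$. Conditioning on $X_j$ and using the Markov property and time-homogeneity gives
\[
\rho(x) \leq \sum_{j=0}^{T-1} \p\left[X_j \in A,\ X_{j+1}\notin A \mid X_0 = x\right]
= \sum_{j=0}^{T-1}\sum_{y \in A} \p\left[X_j = y\mid X_0 = x\right]\cdot \p\left[X_{1}\notin A\mid X_0 = y\right].
\]

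Next I would multiply by $\deg(x)$, sum over $x \in A$, and interchange the order of summation so the inner sum becomes $\sum_{x\in A}\deg(x)\,\p[X_j = y\mid X_0 = x]$. This is the one genuine idea in the argument: since $H$ is undirected, the lazy walk $W = \tfrac12(I+D^{-1}A)$ is reversible with respect to $\deg$, hence $\deg(x)\,(W^j)_{x,y} = \deg(y)\,(W^j)_{y,x}$ for all $j$; enlarging the sum from $A$ to all of $V_H$ then yields
\[
\sum_{x\in A}\deg(x)\,\p\left[X_j = y\mid X_0 = x\right] \;\leq\; \sum_{x\in V_H}\deg(y)\,\p\left[X_j = x\mid X_0 = y\right] \;=\; \deg(y).
\]
Therefore $\sum_{x\in A}\deg(x)\rho(x) \leq \sum_{j=0}^{T-1}\sum_{y\in A}\deg(y)\,\p[X_1\notin A\mid X_0 = y]$. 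For the lazy walk one has the elementary identity $\deg(y)\,\p[X_1\notin A\mid X_0 = y] = \tfrac12\,\abs{\{z\notin A : \{y,z\}\in E_H\}}$, and summing this over $y\in A$ gives exactly $\tfrac12\abs{\partial(A)}$. Summing over the $T$ values of $j$ completes the bound.

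I expect the reversibility "forget the start" step to be the crux: it is what converts a quantity that depends on the starting vertex $x$ into the static boundary size $\abs{\partial(A)}$, and it is the only place where the structure of the lazy walk is really used. The remaining ingredients — the union bound over $\tau$, the one-step escape-probability identity for a lazy walk, and the final Markov inequality — are routine. A small but worthwhile remark to include is that this works verbatim on the double cover and semi-double cover precisely because those are undirected graphs, so degree-reversibility is available.
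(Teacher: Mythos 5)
Your proof is correct. Note that the paper does not prove this proposition itself --- it is imported verbatim as Proposition~2.5 of the cited work of Spielman and Teng and used as a black box --- so there is no internal proof to compare against; your argument is, in substance, the standard derivation of that cited result. Degree-weighting the escape probabilities is the same as starting the walk from the stationary distribution restricted to $A$; the reversibility identity $\deg(x)(W^j)_{x,y}=\deg(y)(W^j)_{y,x}$ (valid because $DW^j$ is a palindromic product of the symmetric matrices $\tfrac12(D+A_H)$ and $D^{-1}$) gives the key domination $\sum_{x\in A}\deg(x)\,\p\left[X_j=y\mid X_0=x\right]\le\deg(y)$; and the union bound over the first boundary crossing together with the concluding Markov inequality are exactly the routine steps you describe. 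Two details you handled correctly and are worth keeping explicit: the laziness of the walk is what produces the factor $\tfrac12$ in the one-step identity $\deg(y)\,\p[X_1\notin A\mid X_0=y]=\tfrac12\lvert\{z\notin A:\{y,z\}\in E_H\}\rvert$, and the bound $\lvert\partial(A)\rvert\le\Phi(A)\vol(A)$ holds whether or not $\vol(A)\le\vol(V_H\setminus A)$, so no assumption on the size of $A$ is needed. Your closing remark is also apt: the argument uses only that $H$ is undirected, which is precisely why the proposition can be invoked on both the double cover and the semi-double cover.
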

    \begin{proposition}[\cite{AP2009}, Lemma 2] \label{prop:esp_vol}
        For any vertex $x$, let $S_0, S_1, \ldots, S_T$ be sampled from a volume-biased ESP starting from $S_0 = \{x\}$.
        Then, it holds for any set $A \subseteq V_H$ and $\lambda > 0$ that 
        \[
            \p\left[\max_{t \leq T} \frac{\vol(S_t \setminus A)}{\vol(S_t)} > \lambda \mathrm{esc}(x, T, A)\right] < \frac{1}{\lambda}.
        \]
    \end{proposition}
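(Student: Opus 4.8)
The plan is to couple the volume-biased ESP with a single lazy random walk, reduce the quantity $\vol(S_t\setminus A)/\vol(S_t)$ to a one-step escape probability of that walk conditioned on the ESP trajectory, and then finish with one application of Markov's inequality.

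First I would invoke the standard coupling between the volume-biased evolving set process and a lazy random walk (the evolving-sets duality underlying \cite{AP2009, AC2009}): there is a joint law of $(S_t)_{t\ge 0}$, the volume-biased ESP with $S_0=\{x\}$, together with a lazy random walk $(X_t)_{t\ge 0}$ on $H$ with $X_0=x$, such that $X_t\in S_t$ for every $t$ and, writing $\mathcal{F}_T=\sigma(S_0,\dots,S_T)$,
\[
    \p\!\left[X_t=v \,\middle|\, \mathcal{F}_T\right] \;=\; \frac{\deg(v)}{\vol(S_t)}\,\mathbf{1}[v\in S_t]
\]
for all $0\le t\le T$ and all $v\in V_H$ (in the coupling $X_t$ is a function of $S_0,\dots,S_t$ and auxiliary randomness, so conditioning on all of $\mathcal{F}_T$ coincides with conditioning only on $S_0,\dots,S_t$). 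Summing this identity over $v\in S_t\setminus A$ identifies the target ratio as a conditional escape probability:
\[
    Z_t \;:=\; \frac{\vol(S_t\setminus A)}{\vol(S_t)} \;=\; \p\!\left[X_t\notin A \,\middle|\, \mathcal{F}_T\right], \qquad 0\le t\le T.
\]

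Next I would dominate each $Z_t$ by the full-trajectory escape event. Put $E=\{\,X_j\notin A\text{ for some }0\le j\le T\,\}$, so that $\p[E]=\mathrm{esc}(x,T,A)$ by definition. Since $\{X_t\notin A\}\subseteq E$ for every $t\le T$, we obtain $Z_t\le \p\!\left[E\,\middle|\,\mathcal{F}_T\right]$, and since the right-hand side does not depend on $t$,
\[
    M \;:=\; \max_{0\le t\le T} Z_t \;\le\; \p\!\left[E\,\middle|\,\mathcal{F}_T\right].
\]
As $M$ is $\mathcal{F}_T$-measurable, taking expectations gives $\E[M]\le \E\!\left[\p\!\left[E\,\middle|\,\mathcal{F}_T\right]\right]=\p[E]=\mathrm{esc}(x,T,A)$, and Markov's inequality then yields, for any $\lambda>0$,
\[
    \p\!\left[M>\lambda\,\mathrm{esc}(x,T,A)\right] \;\le\; \frac{\E[M]}{\lambda\,\mathrm{esc}(x,T,A)} \;\le\; \frac{1}{\lambda},
\]
which is the assertion. (If $\mathrm{esc}(x,T,A)=0$ then $\E[M]=0$, so $M=0$ almost surely and the probability is $0<1/\lambda$; otherwise the inequality is in fact strict, since on the event $\{M>\lambda\,\mathrm{esc}(x,T,A)\}$ one has $M>\lambda\,\mathrm{esc}(x,T,A)$ strictly, whence $\E[M]>\lambda\,\mathrm{esc}(x,T,A)\cdot\p[M>\lambda\,\mathrm{esc}(x,T,A)]$ whenever the latter probability is positive.)

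The conditional-expectation bookkeeping and the Markov step are routine; the genuine content sits in the coupling used in the first step, and that is where I expect the work to lie. One must either cite the evolving-sets duality precisely or reconstruct it inductively: given $S_t$ together with $X_t$ degree-distributed on $S_t$, one draws the ESP threshold to form $S_{t+1}$ and then samples $X_{t+1}$ from a conditional law that is simultaneously degree-distributed on $S_{t+1}$ and marginally performs a single lazy step of the walk from $X_t$; the existence of such a conditional law is exactly the detailed-balance identity $\deg(u)W_H(u,v)=\deg(v)W_H(v,u)$ on $H$. Everything downstream of that coupling is immediate.
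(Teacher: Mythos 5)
The paper does not prove this proposition at all --- it is quoted verbatim from Andersen--Peres \cite{AP2009}, Lemma~2 --- so your attempt has to be measured against their argument. Your overall strategy (Diaconis--Fill coupling of the volume-biased ESP with a lazy walk, identification of $\vol(S_t\setminus A)/\vol(S_t)$ as a conditional escape probability, then a deviation inequality) is the right one and is essentially theirs, but there is a genuine gap in the execution: the coupling identity you invoke, $\p\left[X_t=v \mid \mathcal{F}_T\right]=\deg(v)\mathbf{1}[v\in S_t]/\vol(S_t)$ with $\mathcal{F}_T=\sigma(S_0,\dots,S_T)$, is false. The evolving-set duality gives this only with $\mathcal{F}_t=\sigma(S_0,\dots,S_t)$ in place of $\mathcal{F}_T$: in the coupling, $S_{t+1}$ is generated using the walk step out of $X_t$ (it is conditioned to contain $X_{t+1}$), so the future sets carry information about $X_t$, and conditioning on them biases its law away from the degree distribution on $S_t$. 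A concrete counterexample: take the lazy walk on the path with vertices $a,b,c$ and edges $\{a,b\},\{b,c\}$, and condition on $S_t=\{a,b\}$, $S_{t+1}=\{a\}$; a direct computation with the standard coupling kernel gives $\p[X_t=a\mid S_t,S_{t+1}]=1/2\neq 1/3=\deg(a)/\vol(S_t)$. Your justification (``$X_t$ is a function of $S_0,\dots,S_t$ and auxiliary randomness'') does not rescue the claim, because that auxiliary randomness is not conditionally independent of $(S_{t+1},\dots,S_T)$ given $(S_0,\dots,S_t)$ --- that independence is precisely what fails. Consequently the chain $M\le\p[E\mid\mathcal{F}_T]$ followed by a single Markov inequality rests on a false premise, and your closing sketch of the coupling (draw $S_{t+1}$ from the ESP threshold independently of $X_t$, then paint the walk on top) misdescribes the construction in exactly the way that makes the false identity look harmless.

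The repair is short but it is genuinely a different step, and it is the step Andersen and Peres actually take. With the correct identity $Z_t=\p[X_t\notin A\mid\mathcal{F}_t]$ you still have $Z_t\le N_t:=\p[E\mid\mathcal{F}_t]$, but now the bound depends on $t$, so a plain Markov inequality on one random variable no longer suffices; instead, $(N_t)_{t\le T}$ is a nonnegative Doob martingale with $\E[N_t]=\mathrm{esc}(x,T,A)$, and Doob's maximal inequality gives $\p\left[\max_{t\le T}Z_t>\lambda\,\mathrm{esc}(x,T,A)\right]\le 1/\lambda$. Equivalently, stop at the first time $\tau\le T$ with $Z_\tau>\lambda\,\mathrm{esc}(x,T,A)$; on $\{\tau\le T\}$ the walk lies outside $A$ at time $\tau$ with conditional probability $Z_\tau$, and this event is contained in the escape event, which forces $\p[\tau\le T]\le 1/\lambda$. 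So the missing idea is that the maximum over $t$ must be paid for with a maximal/optional-stopping argument relative to the filtration $(\mathcal{F}_t)$, rather than absorbed into a single conditioning on $\mathcal{F}_T$.
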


\paragraph{Analysis of \texttt{EvoCutDirected}.}
We now come to the formal analysis of our algorithm for finding almost-bipartite sets in digraphs.
We start by proving the connection between $F(L, R)$ in a digraph with the conductance of the set $L_1 \union R_2$ in the semi-double cover.
\begin{proof}[Proof of Lemma~\ref{lem:flow_cond}]
We define  $S_H = L_1 \union R_2$, and have that  
\begin{align*}
\Phi_H(S_H) & = \frac{e(S_H, V_H\setminus S_H)}{\vol_H(S_H) }\\
& = \frac{\vol_H(S_H) - 2e_H(S_H, S_H)}{\vol_{H}(S_H) } \\
& = 1  - \frac{2e_H(S_H, S_H)}{\vol_{H}(S_H)} \\
& =   1 - \frac{2 e_G(L, R)}{\vol_{\mathrm{out}}(L) + \vol_{\mathrm{in}}(R)} \\
& = F_G(L,R).
\end{align*}
This proves the first statement of the lemma.  
    The second statement of the lemma follows by the same argument.
\end{proof}
 We now show how removing all vertices $u$ where $u_1 \in S$ and $u_2 \in S$ affects the conductance of $S$ on the semi-double cover.
 \begin{lemma} \label{lem:lazysimplify}
For  any $\epsilon$-simple set $S\subset V_H$,
let $P = \{u_1, u_2 : u \in V_G, u_1 \in S \mbox{ and } u_2 \in S\}$
and set $S'=S\setminus P$. 
Then,  
 $
        \cond{S'} \leq \frac{1}{1 - \epsilon} (\cond{S} + \epsilon).
    $
\end{lemma}
\begin{proof} 
By the definition of conductance, we have
    \begin{align*}
        \cond{S'} & = \frac{e(S', V \setminus S')}{\vol(S')} \\
        & \leq \frac{e(S, V \setminus S) + \vol(P)}{\vol(S) - \vol(P)} \\
        & \leq \frac{1}{1 - \epsilon} \frac{e(S, V \setminus S) + \vol(P)}{\vol(S)} \\
        & \leq \frac{1}{1 - \epsilon} \left( \frac{e(S, V \setminus S)}{\vol(S)} + \epsilon \right) \\
        & = \frac{1}{1 - \epsilon} \left( \cond{S} + \epsilon \right),
    \end{align*}
    which proves the lemma.
\end{proof}
As such, we would like to show that the output of the evolving set process is very close to being a simple set.
The following lemma shows that, by choosing the number of steps of the evolving set process carefully, we can very tightly control the overlap of the output of the evolving set process with the target set $S$, which we know to be simple.
    \begin{lemma} \label{lem:esp_tightvol}
        Let $C \subset V_H$ be a set with conductance $\Phi(C) = \phi$, and let  $T = (100 \phi^{\frac{2}{3}})^{-1}$. There exists a set $C_g \subseteq C$ with volume at least $\vol(C) / 2$ such that for any $x \in C_g$, with probability at least $7 / 9$, a sample path $(S_0, S_1, \ldots, S_T)$ from the volume-biased ESP started from $S_0 = \{x\}$ will satisfy the following:
        \begin{itemize}
            \item $\Phi(S_t) < 60 \phi^{\frac{1}{3}} \ln^{\frac{1}{2}}\vol(V)$ for some $t \in [0, T]$;
            \item $\vol(S_t \intersect C) \geq \left(1 - \frac{1}{10}\phi^{\frac{1}{3}}\right) \vol(S_t)$ for all $t \in [0, T]$. 
        \end{itemize}
    \end{lemma}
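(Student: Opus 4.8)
The plan is to take $C_g$ to be the ``good starting set'' $C_T \subseteq C$ furnished by Proposition~\ref{prop:escape_prob} with $A = C$, so that $\vol(C_g) \ge \vol(C)/2$ and, crucially, $\mathrm{esc}(x, T, C) \le T\,\Phi(C) = T\phi$ for every $x \in C_g$. With the prescribed choice $T = (100\phi^{\frac{2}{3}})^{-1}$ this yields the single numerical estimate $\mathrm{esc}(x,T,C) \le T\phi = \phi^{\frac{1}{3}}/100$, which is what drives both bullets of the statement. Since the conductance bound of Proposition~\ref{prop:esp_cond} is stated for an ESP beginning at an \emph{arbitrary} $S_0$ (and does not depend on the starting vertex at all), it is safe to use this same set $C_g$ for both claimed properties.

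For the first bullet I would apply Proposition~\ref{prop:esp_cond} to the volume-biased ESP with $S_0 = \{x\}$ and the constant $c = 9$. This gives, with probability at least $1 - 1/9 = 8/9$, a step $t \in [0,T]$ with $\Phi(S_t) \le \sqrt{9}\cdot\sqrt{4 T^{-1}\ln \vol(V)} = 6\sqrt{T^{-1}\ln\vol(V)}$; substituting $T^{-1} = 100\phi^{\frac{2}{3}}$ turns the right-hand side into $6\cdot 10\,\phi^{\frac{1}{3}}\ln^{\frac{1}{2}}\vol(V) = 60\,\phi^{\frac{1}{3}}\ln^{\frac{1}{2}}\vol(V)$, the claimed bound (a negligible slack in the inequality can be absorbed, e.g.\ by taking $c$ marginally larger than $9$, or by noting that the target is an open inequality). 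For the second bullet, note that $\vol(S_t\cap C) \ge (1 - \tfrac{1}{10}\phi^{\frac{1}{3}})\vol(S_t)$ is equivalent to $\vol(S_t\setminus C)/\vol(S_t) \le \tfrac{1}{10}\phi^{\frac{1}{3}}$. I would then invoke Proposition~\ref{prop:esp_vol} with $A = C$ and $\lambda = 9$: it gives, with probability at least $1 - 1/9 = 8/9$, that $\max_{t\le T}\vol(S_t\setminus C)/\vol(S_t) \le 9\,\mathrm{esc}(x,T,C) \le 9\phi^{\frac{1}{3}}/100 < \tfrac{1}{10}\phi^{\frac{1}{3}}$, simultaneously for all $t\in[0,T]$, as required.

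Finally, a union bound over the two bad events (each of probability at most $1/9$) shows that for every $x \in C_g$ both properties hold with probability at least $1 - 2/9 = 7/9$, which completes the argument. I do not expect any real obstacle here: the entire content of the lemma is in the choice $T \asymp \phi^{-2/3}$, which is precisely the value that simultaneously makes $\mathrm{esc}(x,T,C) = O(\phi^{1/3})$ small enough for the volume-overlap bound while keeping $T$ large enough that $\sqrt{T^{-1}\ln\vol(V)} = O(\phi^{1/3}\ln^{1/2}\vol(V))$ matches the target conductance; once this balance is fixed, what remains is bookkeeping of the constants so that the two failure probabilities sum to $2/9$. The only mild care needed is the mismatch between $T = (100\phi^{\frac{2}{3}})^{-1}$ used here and the integer $\lfloor(100\phi^{\frac{2}{3}})^{-1}\rfloor$ used by Algorithm~\ref{alg:directed_evo_cut}, but this is dealt with when the lemma is applied in the proof of Theorem~\ref{thm:directedresult}.
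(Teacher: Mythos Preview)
Your proposal is correct and follows essentially the same approach as the paper: define $C_g$ via Proposition~\ref{prop:escape_prob}, apply Proposition~\ref{prop:esp_cond} with $c=9$ for the conductance bullet, apply Proposition~\ref{prop:esp_vol} for the overlap bullet, and take a union bound. The only cosmetic difference is that the paper uses $\lambda=10$ in Proposition~\ref{prop:esp_vol} (getting $10\cdot\phi^{1/3}/100=\tfrac{1}{10}\phi^{1/3}$ with failure probability $<1/10$, hence success probability $>1-1/9-1/10>7/9$), whereas you use $\lambda=9$ (getting $9\cdot\phi^{1/3}/100<\tfrac{1}{10}\phi^{1/3}$ with failure probability $<1/9$, hence success probability $\ge 1-2/9=7/9$); either choice works.
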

\begin{proof}
    By Proposition~\ref{prop:esp_cond}, with probability at least $(1 - 1/9)$ there is some $t < T$ such that
    \begin{align*}
        \Phi(S_t) & \leq 3 \sqrt{4 T^{-1} \ln \vol(V_H)}  = 3 \sqrt{400 \phi^{\frac{2}{3}} \ln \vol(V_H)}  = 60 \phi^{\frac{1}{3}} \ln^{\frac{1}{2}} (\vol(V_H)).
    \end{align*}
    Then, by Proposition~\ref{prop:escape_prob} we have that 
    $
        \mathrm{esc}(x, T, C) \leq T \phi = \frac{1}{100} \phi^{1/3}$.
     Hence, by Proposition~\ref{prop:esp_vol}, with probability at least $9/10$,
    \[
    \max_{t\leq T} \frac{\vol(S_t\setminus C)}{\vol(S_t)} \leq 10\cdot \mathrm{esc}(x,T,C) \leq \frac{1}{10}\cdot \phi^{1/3}
    \]
    Applying the union bound on these two events proves the statement of the lemma. 
\end{proof}
Now we combine everything together to prove Theorem~\ref{thm:directedresult}.
 
\begin{proof}[Proof of Theorem~\ref{thm:directedresult}]
By Lemma~\ref{lem:esp_tightvol}, we know that, with  probability at least $7/9$,  the set $S$ computed by the algorithm will satisfy
    \[
        \vol(S \intersect C) \geq \left(1 - \frac{1}{10}\phi^{\frac{1}{3}}\right) \vol(S).
    \]
    Let $P = \{u_1, u_2 : u_1 \in S $ and $u_2 \in S\}$. Then, since $C$ is simple, we have that
    \begin{align*}
        \vol(P) & \leq 2\cdot (\vol(S) - \vol(S \intersect C))   \leq 2\cdot \left(1 - \left(1 - \frac{1}{10} \phi^{\frac{1}{3}}\right)\right)\cdot \vol(S)  = \frac{1}{5}\cdot  \phi^{\frac{1}{3}} \vol(S),
    \end{align*}
    which implies that $S$ 
    is $(1/5)\cdot \phi^{\frac{1}{3}}$-simple. 
    Therefore, by letting $S' = S \setminus P$, we have by Lemma~\ref{lem:lazysimplify} that 
    \begin{align*}
        \Phi(S') & \leq \frac{1}{1 - \epsilon}\cdot  \left(\Phi(S) + \epsilon\right) 
          \leq \frac{1}{1 - \frac{1}{5}} \left(\Phi(S) + \frac{1}{5}\phi^{\frac{1}{3}}\right)   \leq \frac{5}{4}\cdot  \Phi(S) + \frac{1}{4}\cdot  \phi^{\frac{1}{3}}.  
    \end{align*}
    Since $\Phi(S) = \bigo{\phi^{\frac{1}{3}} \ln^{\frac{1}{2}}(n)}$ by Lemma~\ref{lem:esp_tightvol}, we   have that $\Phi(S') = \bigo{\phi^{\frac{1}{3}} \ln^{\frac{1}{2}}(n)}$.
    For the volume guarantee, direct calculation shows that 
    \begin{align*}
        \vol(S') \leq \vol(S)  \leq \frac{\vol(S \intersect C)}{1 - \frac{1}{10}\phi^{\frac{1}{3}}} \leq \frac{\vol(C)}{1 - \phi^{\frac{1}{3}}}.
    \end{align*}
    Finally, the running time of the algorithm is dominated by the $\texttt{GenerateSample}$ method, which was shown in \cite{AP2009} to have running time $\bigo{\vol(C) \phi^{-\frac{1}{2}} \ln^{\frac{3}{2}}(n)}$.
\end{proof}

\end{document}